  \newcommand{\fullversion}[1]{#1}
\newcommand{\submversion}[1]{}
\newcommand{\rolo}[1]{{\color{blue} Rolo:#1}}
\newcommand{\prab}[1]{{\color{red} Prab:#1}}
\newcommand{\bra}[1]{\langle #1|}
\newcommand{\ket}[1]{|#1\rangle}
\newcommand{\sfC}{\mathsf{C}}
\newcommand{\sfb}{\mathsf{B}}
\newcommand{\sfprovsecstg}{\mathsf{P}}
\newcommand{\sfversecstg}{\mathsf{V}}
\newcommand{\aux}{{\sf Aux}}
\newcommand{\msg}{\mathsf{msg}}
\newcommand{\na}{\mathsf{N}/\mathsf{A}}
\newtheorem{theorem}{Theorem}
\newtheorem{definition}[theorem]{Definition}
\newtheorem{lemma}[theorem]{Lemma}
\newtheorem{claim}[theorem]{Claim}
\newtheorem{remark}[theorem]{Remark}
\newcommand{\bfs}{\mathbf{s}}
\newcommand{\bfA}{\mathbf{A}}
\newcommand{\bfe}{\mathbf{e}}
\newcommand{\bfu}{\mathbf{u}}
\newcommand{\srot}{\mathsf{SROT}}
\newcommand{\ssot}{\mathsf{SSOT}}
\newcommand{\secparam}{\lambda}
\newcommand{\negl}{\mathsf{negl}}
\newcommand{\ignore}[1]{}
\newcommand{\poly}{\mathrm{poly}}
\newcommand{\distr}{\mathcal{D}}
\newcommand{\size}{\mathsf{size}}
\newcommand{\amplifier}{\mathsf{Amplifier}}
\newcommand{\gldec}{\mathsf{GLDec}}
\newcommand{\predictor}{\mathcal{P}}
\newcommand{\cP}{\mathcal{P}}
\newcommand{\E}{\mathsf{E}}
\newcommand{\cS}{\mathcal{S}}
\newcommand{\bfM}{\mathbf{M}}
\newcommand{\prvr}{P}
\newcommand{\vrfr}{V}
\newcommand{\rel}{\mathcal{R}}
\newcommand{\simr}{\mathsf{Sim}}
\newcommand{\view}{\mathsf{View}}
\newcommand{\extractor}{\mathsf{Ext}}
\newcommand{\state}{\mathsf{st}}
\newcommand{\bfr}{\mathbf{r}}
\newcommand{\expt}{\mathsf{Expt}}
\newcommand{\adversary}{\mathcal{A}}  
\newcommand{\hybrid}{\mathsf{Hyb}} 
\newcommand{\prob}{\mathsf{Pr}} 
\newcommand{\dec}{\mathsf{Dec}}
\newcommand{\sender}{\mathsf{S}}
\newcommand{\receiver}{\mathsf{R}}
\newcommand{\protwi}{\Pi_{\mathsf{WI}}}
\newcommand{\eval}{\mathsf{Eval}}
\newcommand{\bfc}{\mathbf{c}}
\newcommand{\comm}{\mathsf{Comm}}
\newcommand{\rnd}{t}
\newcommand{\lang}{\mathcal{L}}
\newcommand{\Simu}{\mathsf{Sim}}
\newcommand{\tr}{\mathsf{Tr}}
\newcommand{\cE}{\mathcal{E}}
\newcommand{\cD}{\mathcal{D}}
\newcommand{\cF}{\mathcal{F}}
\newcommand{\cH}{\mathcal{H}}
\newcommand{\RegP}{\mathrm{R}_\prvr}
\newcommand{\RegV}{\mathrm{R}_\vrfr}
\newcommand{\RegM}{\mathrm{R}_{\mathsf{M}}}
\newcommand{\cA}{\mathcal{A}}
\newcommand{\Id}{\mathsf{Id}}
\newcommand{\bfN}{\mathsf{N}}
\newcommand{\decision}{\mathsf{decision}}
\newcommand{\zk}{\mathsf{zk}}
\newcommand{\viewR}{\mathsf{View}_{R^*}}
\newcommand{\cB}{\mathcal{B}}
\newcommand{\bfX}{\mathbf{X}}
\newcommand{\prover}{\prvr}
\newcommand{\smprot}{\mathsf{prot}}
\newcommand{\smslot}{\mathsf{slot}}
\newcommand{\final}{\mathsf{final}}
\newcommand{\smsecmsg}{4}
\newcommand{\prvr}{\mathsf{Prover}}
\newcommand{\vrfr}{\mathsf{Verifier}}
\newcommand{\rel}{\mathcal{R}}
\newcommand{\np}{\mathsf{NP}}
\newcommand{\qma}{\mathsf{QMA}}
\newcommand{\sch}{\mathcal{S}}
\newcommand{\rv}{\mathsf{X}}
\newcommand{\expct}{\mathbb{E}}
\newcommand{\accept}{\mathsf{Accept}}
\newcommand{\relwi}{\rel_{\mathsf{WI}}}
\newcommand{\noslot}{\mathsf{noslot}}
\newcommand{\yes}{\mathsf{yes}}
\newcommand{\no}{\mathsf{no}}
\newcommand{\slot}{\mathsf{slot}}
\newcommand{\Good}{\mathsf{Good}}
\newcommand{\Bad}{\mathsf{Bad}}
\newcommand{\randreg}{\mathbf{R}}
\newcommand{\simreg}{\mathbf{Sim}}
\newcommand{\verreg}{\mathbf{Ver}}
\newcommand{\decreg}{\mathbf{Dec}}
\newcommand{\auxreg}{\mathbf{Aux}}
\newcommand{\slotreg}{\mathbf{M}}
\newcommand{\bitreg}{\mathbf{B}}
\newcommand{\witreg}{\mathbf{W}}
\newcommand{\regX}{\mathbf{X}}
\newcommand{\transreg}{\mathbf{T}}
\newcommand{\regB}{\mathbf{B}}
\newcommand{\ot}{\mathsf{OT}}
\newcommand{\smot}{\mathsf{ot}}
\newcommand{\advantage}{\mathsf{Adv}}
\newcommand{\game}{\mathsf{G}}
\newcommand{\quant}{\mathcal{Q}}
\newcommand{\advgen}{\mathsf{AdviceGen}}
\title{On the Concurrent Composition of Quantum Zero-Knowledge}
\title{On the Concurrent Composition of\\ Quantum Zero-Knowledge}
\author{Prabhanjan Ananth\thanks{UC Santa Barbara. Email: \texttt{prabhanjan@cs.ucsb.edu}} \and Kai-Min Chung\thanks{Academia Sinica, Taiwan. Email: \texttt{kmchung@iis.sinica.edu.tw}} \and Rolando L. La Placa\thanks{MIT. Email: \texttt{rlaplaca@mit.edu}}}
\author{Prabhanjan Ananth\inst{1} \and Kai-Min Chung\inst{2} \and Rolando L. La Placa\inst{3}} 
\institute{UCSB, Santa Barbara, USA\\ \texttt{prabhanjan@cs.ucsb.edu} \and Academia Sinica, Taiwan\\ \texttt{kmchung@iis.sinica.edu.tw}  \and MIT, Cambridge, USA\\ \texttt{rlaplaca@mit.edu}} }
\date{}
\begin{document}

\maketitle

\submversion{
\renewcommand{\inst}{{\bf y}}
}
\fullversion{
\newcommand{\inst}{{\bf y}}
}

\begin{abstract}
\noindent We study the notion of zero-knowledge secure against quantum polynomial-time verifiers (referred to as quantum zero-knowledge) in the concurrent composition setting. 
Despite being extensively studied in the classical setting, concurrent composition in the quantum setting has hardly been studied. \par We initiate a formal study of concurrent quantum zero-knowledge. Our results are as follows:
\begin{itemize}
    \item {\bf Bounded Concurrent QZK for NP and QMA}: Assuming post-quantum one-way functions, there exists a quantum zero-knowledge proof system for NP in the bounded concurrent setting. In this setting, we fix a priori the number of verifiers that can simultaneously interact with the prover. Under the same assumption, we also show that there exists a quantum zero-knowledge proof system for QMA in the bounded concurrency setting.
    \item {\bf Quantum Proofs of Knowledge}: Assuming quantum hardness of learning with errors (QLWE), there exists a bounded concurrent zero-knowledge proof system for NP satisfying quantum proof of knowledge property. 
    \par Our extraction mechanism simultaneously allows for extraction probability to be negligibly close to acceptance probability ({\em extractability}) and also ensures that the prover's state after extraction is statistically close to the prover's state after interacting with the verifier ({\em simulatability}).\\ Even in the standalone setting, the seminal work of [Unruh EUROCRYPT'12], and all its followups, satisfied a weaker version of extractability property and moreover, did not achieve simulatability. Our result yields a proof of {\em quantum knowledge} system for QMA with better parameters than prior works.  
  
\end{itemize}
\end{abstract}

\fullversion{
\newpage
\setcounter{tocdepth}{2}
\tableofcontents
\newpage
}
\section{Introduction}
\noindent Zero-knowledge~\cite{GMR} is one of the foundational concepts in  cryptography. A zero-knowledge system for NP is an interactive protocol between a prover $P$, who receives as input an instance $x$ and a witness $w$, and a verifier $V$ who receives as input an instance $x$. The (classical) zero-knowledge property roughly states that the view of the malicious probabilistic polynomial-time verifier $V^*$ generated after interacting with the prover $P$ can be simulated by a PPT simulator, who doesn't know the witness $w$.

\paragraph{Protocol Composition in the Quantum Setting.} Typical zero-knowledge proof systems only focus on the case when the malicious verifier is classical. The potential threat of quantum computers  forces us to revisit this definition. There are already many works~\cite{ARU14,BJSW16,BG19,BS20,ALP20a,VZ20,ABGKM20}, starting with the work of Watrous~\cite{Wat09}, that consider the definition of zero-knowledge against verifiers modeled as quantum polynomial-time (QPT) algorithms; henceforth this definition will be referred to as quantum zero-knowledge. However, most of these works study quantum zero-knowledge only in the standalone setting. These constructions work under the assumption that the designed protocols work in isolation. That is, a standalone protocol is one that only guarantees security if the parties participating in an execution of this protocol do not partake in any other protocol execution. This is an unrealistic assumption. Indeed, the standalone setting has been questioned in the classical cryptography literature by a large number of works~\cite{DS98,DO99,Can01,Can02,CF01,RK99,BS05,DNS04,PRS02,Lin03,Pass04,PV08,PTV14,GJRV13,CLP15,FKP19} that have focussed on designing cryptographic protocols that still guarantee security even when composed with the other protocols. 
\par A natural question to ask is whether there exist {\em quantum} zero-knowledge protocols (without any setup) that still guarantee security under composition. Barring a few works~\cite{Unruh10,JKMR06,ABGKM20}, this direction has largely been unaddressed. The couple of works~\cite{JKMR06,ABGKM20} that do address composition only focus on parallel composition; in this setting, all the verifiers interacting with the prover should send the $i^{th}$ round messages before the $(i+1)^{th}$ round begins. The setting of parallel composition is quite restrictive; it disallows the adversarial verifiers from arbitrarily interleaving their messages with the prover. A more reasonable scenario, also referred to as {\em concurrent composition}, would be to allow the adversarial verifiers to choose the scheduling of their messages in any order they desire.  
So far, there has been no work that addresses concurrent composition in the quantum setting. 

\paragraph{Concurrent Quantum Zero-Knowledge.} In the concurrent setting, quantum zero-knowledge is defined as follows: there is a single prover, who on input instance-witness pair $(x,w)$, can simultaneously interact with
multiple verifiers, where all these verifiers are controlled by a single malicious quantum polynomial-time adversary. All the verifiers can potentially share an entangled state. Moreover, they can arbitrarily interleave their messages when they interact with the prover. For example, suppose the prover sends a message to the first verifier, instead of responding, it could let the second verifier send a message, after which the third verifier interacts with the prover and so on.
\par We say that zero-knowledge in this setting holds if there exists a quantum polynomial-time simulator (with access to the initial quantum state of all the verifiers) that can simultaneously simulate the interaction between the prover and all the verifiers.

We ask the following question in this work:
\begin{quote}
\begin{center}
    {\em Do there exist quantum zero-knowledge proof systems that are secure under concurrent composition?}
\end{center}
\end{quote}

\subsection{Our Contributions}

\paragraph{Bounded Concurrent QZK for NP.} We initiate a formal study of concurrent composition in the quantum setting. We work in the bounded concurrent setting: where the prover interacts only with a bounded number of verifiers where this bound is fixed at the time of protocol specification. This setting has been well studied in the classical concurrency literature~\cite{Lin03,PR03,Pass04,PTW09}. Moreover, we note that the only other existing work that constructs quantum zero-knowledge against multiple verifiers albeit in the parallel composition setting, namely~\cite{ABGKM20}\footnote{They achieve bounded parallel ZK under the assumption of quantum learning with errors and circular security assumption in constant rounds. While the notion they consider is sufficient for achieving MPC, the parallel QZK constructed by~\cite{ABGKM20} has the drawback that the simulator aborts even if one of the verifiers abort. Whereas the notion of bounded concurrent QZK we consider allows for the simulation to proceed even if one of the sessions abort. On the downside, our protocol runs in polynomially many rounds.}, also works in the bounded setting. We prove the following. 

\begin{theorem}[Informal]
\label{thm:cqzk}
Assuming the existence of post-quantum one-way functions\footnote{That is, one-way functions secure against (non-uniform) quantum polynomial-time algorithms.}, there exists a bounded concurrent quantum zero-knowledge proof system for NP. Additionally, our protocol is a public coin proof system. 
\end{theorem}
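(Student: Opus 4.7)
The plan is to lift the classical bounded-concurrent zero-knowledge template (in the spirit of Pass' FOCS'03 construction) to the quantum setting by replacing each rewinding step with a Watrous-style oblivious quantum rewinding. Let $m$ be the a-priori bound on concurrent sessions. Each session will consist of two phases. Phase~1 is a sequence of $k = \Theta(m)$ (or $\Theta(m^2)$, as the analysis dictates) public-coin \emph{slots}; in slot $i$ the prover first sends a Naor-style statistically binding, computationally hiding commitment $\comm(s_i)$ (instantiated from post-quantum one-way functions) and the verifier responds with a uniformly random string $c_i$. Phase~2 is a public-coin post-quantum witness-indistinguishable proof for the compound statement ``$x\in L$, or there exists a slot $i$ such that $\comm(s_i)$ is a commitment to $c_i$.'' Such a public-coin WI proof exists from post-quantum OWFs via the standard parallel-repeated Blum/GMW $\Sigma$-protocol, whose WI analysis goes through against QPT verifiers.

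\textbf{Completeness and soundness.} Completeness is immediate with witness $w$ for $x\in L$. For soundness, statistical binding of $\comm$ fixes all $s_i$ before the corresponding $c_i$ is sent, so $\prob[\exists\, i:\ s_i = c_i]$ is exponentially small over the verifier's coins; conditioned on this event not occurring, the trapdoor branch is false, and post-quantum soundness of the WI system forces $x\in L$. Public-coinness follows by construction.

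\textbf{Zero-knowledge (the main task).} I would construct a quantum simulator that processes messages in whatever order the adversarial verifier $V^*$ schedules them, maintaining $V^*$'s quantum state coherently. For each session the simulator's goal is to extract a fake WI witness, i.e., to rewind \emph{one} of the $k$ slots of that session, change its commitment to $\comm(c_i)$, and replay to obtain a trapdoor for Phase~2. The rewind per slot is exactly the setting of Watrous' quantum rewinding lemma: view the slot as a classical predicate (``$V^*$ outputs the challenge $c_i$ that matches a guess $s$'') whose success probability, by the hiding of $\comm$, is approximately independent of $s$; Watrous then yields a unitary that, with overwhelming probability, produces the desired conditioned state without disturbing $V^*$'s side-information. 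The key quantitative point, as in Pass' classical analysis, is that with $k$ sufficiently larger than $m$, a combinatorial scheduling argument guarantees that in every session there is at least one slot whose rewind interval is \emph{safe}, meaning it is not properly nested with a full sub-session, so the rewind of that slot does not recursively trigger another full rewind. Applying Watrous' lemma only on safe slots bounds the simulator's total work by $\poly(m,k,\secparam)$.

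\textbf{Anticipated main obstacle.} The delicate part is the concurrent quantum analysis, not the protocol. Watrous' lemma requires both (i) a near-oblivious success probability and (ii) a clean classical ``success bit'' to amplify coherently, yet in the concurrent schedule, the slot of one session is interleaved with partial transcripts of many others whose prover messages depend on the prover's private state from earlier rewinds. I would address (i) by appealing to hiding of $\comm$ and to the WI property of already-completed sessions in a hybrid sequence that swaps real-witness WI for trapdoor-witness WI session by session, and (ii) by defining the ``try'' subroutine as a purified unitary that writes the success flag to an ancilla, exactly as in the standalone Watrous simulator, extended to carry the full concurrent transcript register. The induction, performed over sessions in the order their safe slots complete, replaces each honestly-generated session view by the simulated one while maintaining that every intermediate state is trace-indistinguishable from the real concurrent view, giving the desired quantum zero-knowledge.
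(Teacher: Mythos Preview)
Your protocol skeleton is close to the paper's, but the simulation strategy has a genuine gap, and in fact two related ones.

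\textbf{First, the trapdoor/rewinding mismatch.} You let the verifier send a uniformly random \emph{string} $c_i$ and declare the trapdoor to be ``some slot has $s_i=c_i$.'' That is fine for soundness, but it is incompatible with the Watrous lemma as you invoke it. Watrous amplification needs the one-shot success probability $p$ bounded away from $0$ (the running time scales like $1/(p_0(1-p_0))$). If $c_i$ is a $\lambda$-bit string, the event ``$V^*$ outputs $c_i$ matching the guess $s$'' has probability $2^{-\lambda}$ and cannot be amplified in polynomial time. Conversely, if $c_i$ is a single bit so that $p\approx 1/2$, then a single matched slot is useless as a trapdoor because an honest-but-lucky prover already matches with probability $1/2$. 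The paper resolves this by committing to \emph{bits} (so each slot is matched with probability $\approx 1/2$, which is perfect for Watrous) and making the trapdoor a \emph{threshold}: at least $60Q^7\lambda+Q^4\lambda$ out of $120Q^7\lambda$ slots are matched. A Chernoff bound gives soundness, and the simulator only needs to bias the match count slightly above $1/2$.

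\textbf{Second, and more serious: safe-slot rewinding is not oblivious.} The Pass-style strategy you describe---run the schedule, identify for each session a slot whose rewind interval is not nested with another full session, and apply Watrous there---requires the simulator to choose \emph{which interval to rewind} as a function of the transcript so far. But the scheduling is controlled by $V^*$ and can depend on its quantum auxiliary input; hence the circuit $Q$ you would feed into the Watrous lemma is itself state-dependent, which is exactly what the lemma forbids. Your bullet (i) only argues that the success probability of a \emph{fixed} slot is near-oblivious; it does not address that the \emph{identity} of the slot, and therefore the amplifier circuit, varies with the branch of the superposition. (This is also why the ``change its commitment to $\comm(c_i)$ and replay'' phrasing is problematic: that is record-then-recommit, which Watrous does not support.)

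The paper's fix is a new simulation strategy it calls \emph{block rewinding}. Partition the entire $\ell_{\smprot}\cdot Q$-message concurrent transcript into $L=24Q^6\lambda$ fixed-size blocks, independent of the schedule. After each block, the simulator picks (in superposition) a uniformly random slot that happens to lie entirely inside that block, sets the decision qubit to ``matched?'', and applies Watrous amplification to the block as a whole; if the block contains no slot it tosses a fair coin instead. Because the rewind points are fixed a priori and the decision bit is $\approx 1/2$ regardless of $V^*$'s state (by hiding of the bit commitment), obliviousness holds. A counting argument then shows that, for every session, enough blocks contain a slot of that session and are selected by the simulator so that, together with the $\approx 1/2$ fraction of slots matched by luck, the threshold $60Q^7\lambda+Q^4\lambda$ is crossed with overwhelming probability. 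The hybrid that swaps the WI witness then goes through session by session, much as you outline.
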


\noindent Our construction satisfies quantum black-box zero-knowledge\footnote{The simulator has oracle access to the unitary $V$ and $V^{\dagger}$, where $V$ is the verifier.}. We note that achieving public-coin {\em unbounded} concurrrent ZK is impossible~\cite{PTW09} even in the classical setting. 

\paragraph{Quantum Proofs of Knowledge.} Our construction, described above, only satisfies the standard soundness guarantee. A more desirable property is quantum proof of knowledge. Roughly speaking, proof of knowledge states the following: suppose a malicious (computationally unbounded) prover can convince a verifier to accept an instance $x$ with probability $\varepsilon$. Let the state of the prover at the end of interaction with the verifier be $\ket{\Psi}$\footnote{We work in the purified picture and thus we can assume that the output of the prover is a pure state.}. Then there exists an efficient extractor, with black-box access to the prover, that can output a witness $w$ for $x$ with probability $\delta$. Additionally, it also outputs a quantum state $\ket{\Phi}$. Ideally, we require the following two conditions to hold: (i) $|\varepsilon-\delta|$ is negligible and, (ii) the states $\ket{\Psi}$ and $\ket{\Phi}$ are close in trace distance; this property is also referred to as simulatability property. Unruh~\cite{Unruh12} presented a construction of quantum proofs of knowledge; their construction satisfies (i) but not (ii). Indeed, the prover's state, after it interacts with the extractor, could be completely destroyed. Condition (ii) is especially important if we were to use quantum proofs of knowledge protocols as a sub-routine inside larger protocols, for instance in secure multiparty computation protocols. 
\par Since Unruh's work, there have been other works that present constructions that satisfy both the above conditions but they demonstrate extraction only against {\em computationally bounded} adversaries~\cite{HSS11,BS20,ALP20a}. Thus, it has been an important open problem to design quantum proofs of knowledge satisfying both of the above conditions.
\par We show the following.

\begin{theorem}[Informal]
Assuming that learning with errors is hard against QPT algorithms (QLWE), there exists a bounded concurrent quantum zero-knowledge proof system for NP satisfying quantum proofs of knowledge property. 
\end{theorem}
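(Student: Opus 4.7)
\medskip
\noindent\textbf{Proof proposal.} The plan is to upgrade the bounded concurrent QZK protocol of Theorem~\ref{thm:cqzk} by prepending an encryption-based commitment to the witness, using a post-quantum IND-CPA public-key encryption scheme (which exists under QLWE and, for standard LWE-based constructions, has statistically unique decryption, so the induced commitment to $\witness$ is statistically binding). In each session, the verifier's first round will include a public key $\mathsf{pk}$ sampled honestly together with the corresponding secret key $\mathsf{sk}$; the prover responds with a classical ciphertext $c=\enc(\mathsf{pk},\witness)$, and then runs the QZK protocol of Theorem~\ref{thm:cqzk} to prove the NP statement ``there exist $\witness,r$ such that $c=\enc(\mathsf{pk},\witness;r)$ and $(x,\witness)\in R_L$''. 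This is still an NP statement, so Theorem~\ref{thm:cqzk} applies directly, and statistical soundness follows from statistical soundness of the inner proof together with statistical binding of~$c$.

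The quantum proof-of-knowledge extractor is then essentially trivial and, crucially, non-rewinding: it runs the honest verifier code bit-for-bit (including sampling $(\mathsf{pk},\mathsf{sk})$ honestly in each session), but stores every $\mathsf{sk}$, and after the interaction applies $\dec(\mathsf{sk},\cdot)$ to the classical ciphertexts appearing in the transcript to recover a witness per session. Since every operation performed on the prover's quantum register during the interaction is literally identical to that of the honest verifier, the prover's post-extraction state equals its honest-interaction state and simulatability holds with zero trace distance. For extractability, if the verifier accepts with probability $\varepsilon$, then by statistical soundness of the inner QZK proof the ciphertext $c$ encrypts a valid witness except with negligible probability, so decryption returns a correct $\witness$ with probability $\varepsilon-\negl(\secparam)$.

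Bounded concurrent QZK of the augmented protocol will be argued by a two-stage hybrid. Stage~one is a hybrid over the (a~priori bounded) number of sessions that replaces each honest ciphertext $\enc(\mathsf{pk},\witness)$ with $\enc(\mathsf{pk},0)$; each step reduces to IND-CPA security of the PKE against QPT distinguishers, using a reduction that drives the entire concurrent execution forward given only $\mathsf{pk}$, the instance, the real witness (for the still-honest sessions), and the messages of the adversarial verifier~$V^*$. Stage~two invokes the bounded concurrent QZK simulator of Theorem~\ref{thm:cqzk} to simulate the inner proofs on statements that are now false; this is legitimate since that simulator does not require statements to be in the language, only that $V^*$ is QPT.

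The main obstacle I expect is the careful ordering of these hybrids across simultaneously open sessions: the CPA swaps must all be performed \emph{before} the QZK simulator is invoked, because once the simulator has taken over a session it no longer has a witness to feed to a later CPA reduction, and conversely the CPA reduction must be able to complete inner proofs it has not yet simulated. This should be resolvable by performing all CPA swaps globally in one outer hybrid (with the real witness still used for all inner proofs) before a second outer hybrid that replaces the inner proofs one session at a time with the simulator of Theorem~\ref{thm:cqzk}, exploiting the fact that that simulator treats sessions uniformly and is oblivious to the truth of the statements being proved.
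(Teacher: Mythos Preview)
Your extraction argument is clean, but the zero-knowledge argument has a genuine gap: you let the \emph{verifier} choose $\mathsf{pk}$, and in the QZK game $V^*$ is malicious. IND-CPA security says nothing about ciphertexts under an adversarially chosen public key (for Regev-style LWE encryption, a verifier sending $\mathsf{pk}=(A,0)$ makes $\enc(\mathsf{pk},m)$ reveal $m$ outright), so your Stage-one reduction cannot embed a challenge $\mathsf{pk}$ into $V^*$'s first message. Fixing this in the plain model requires either the verifier to prove well-formedness of $\mathsf{pk}$ (which reintroduces exactly the kind of extractable subprotocol you were trying to avoid) or a primitive whose hiding holds against a malicious ``key sender''. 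The paper takes the latter route: it builds a post-quantum statistical receiver-private OT whose sender-privacy is defined against \emph{malicious} receivers, has the prover feed secret shares of $\witness$ through many OT executions, and extracts via Watrous rewinding (statistical receiver privacy makes the rewinding probability oblivious to the unbounded prover's state).

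Your hybrid ordering is also inverted. If you swap to $\enc(\mathsf{pk},0)$ while still running the honest inner prover, the inner statement becomes false and the honest prover has no witness to use, so that hybrid is not even well-defined. The workable order is to first invoke the bounded-concurrent QZK simulator (while the ciphertext still encrypts $\witness$, so the inner statement is true and the ZK guarantee applies) and only afterwards replace the ciphertext, using that the simulator is witness-free. But even in that order the malicious-$\mathsf{pk}$ issue above persists, so the scheme as stated does not yield QZK against malicious verifiers.
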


\noindent Unlike all of the previous quantum proof of knowledge protocols which make use of Unruh's rewinding technique, we make black-box use of Watrous rewinding lemma in conjunction with novel cryptographic tools to prove the above theorem. On the downside, our protocol runs in polynomially many rounds, while Unruh's technique works for the existing 3-message $\Sigma$ protocols.
 
\newcommand{\fld}{\mathbf{q}}

\paragraph{Bounded Concurrent QZK for QMA.} We also show how to extend our result to achieve bounded concurrrent zero-knowledge proof system for QMA~\cite{KSV02} (a quantum-analogue of MA). 
\par We show the following.

\begin{theorem}[Informal]
Assuming post-quantum one-way functions, there exists a bounded concurrent quantum zero-knowledge proof system for QMA. 
\end{theorem}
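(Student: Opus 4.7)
The plan is to lift the bounded concurrent QZK for NP from Theorem~\ref{thm:cqzk} to QMA by adapting the Broadbent--Ji--Song--Watrous (BJSW) template for QMA zero-knowledge~\cite{BJSW16}, and plugging in our concurrent NP protocol as its classical subroutine. Under post-quantum one-way functions we have all the ingredients needed: statistically-binding, quantum-computationally-hiding commitments, and the concurrent NP-ZK of Theorem~\ref{thm:cqzk}.

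Let $m = m(\lambda)$ denote the a priori concurrency bound. First, I would invoke QMA witness amplification so that the prover begins with $m \cdot \poly(\lambda)$ independent copies of the witness $\ket{\psi}$, reserving a fresh batch for each concurrent session. Within a single session, the protocol follows BJSW: reduce QMA verification to checking a random local Hamiltonian term; have the prover authenticate one copy of the witness under a Clifford authentication code; commit to the Clifford key using the aforementioned commitment; transmit the authenticated state to the verifier; receive the verifier's random Hamiltonian term and measurement choice; decommit only to the portion of the key relevant to that term; and finally prove in NP zero-knowledge that the classical measurement outcome is consistent with acceptance.

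The key replacement is to instantiate this final NP zero-knowledge subprotocol with the bounded concurrent QZK for NP of Theorem~\ref{thm:cqzk}, setting its concurrency bound to $m$. The simulator then composes in two layers: an inner BJSW-style simulator, which uses the hiding of the Clifford-key commitment together with the equivocation afforded by authenticating to a predetermined accepting outcome to produce a convincing authenticated state without the real witness; and the outer concurrent NP-ZK simulator of Theorem~\ref{thm:cqzk}, which handles the interleaved classical proofs of measurement consistency across all sessions simultaneously. Soundness of the lifted protocol follows from binding of the commitment, authentication security of the Clifford code, and soundness of the NP subprotocol, exactly as in the standalone BJSW analysis.

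The main obstacle is arguing simulation security against a verifier that interleaves messages arbitrarily across sessions. Each session contains an irrevocable quantum message (the authenticated state), whereas the concurrent NP-ZK simulator from Theorem~\ref{thm:cqzk} internally rewinds certain classical ``slot'' messages. I would proceed by a hybrid argument that switches sessions one at a time from real to simulated, making the following invariant explicit: the quantum message of session $j$ is produced exactly once in the simulation and is never touched by rewinding in any session $j' \neq j$, which only manipulates classical commitments and the NP-ZK transcript of session $j'$. Combining the Watrous-type rewinding analysis already used to prove Theorem~\ref{thm:cqzk} with the computational hiding of the Clifford-key commitment and the authentication lemma of~\cite{BJSW16} then yields indistinguishability of adjacent hybrids, completing the argument.
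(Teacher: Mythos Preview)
Your high-level plan---lifting BJSW to the concurrent setting by plugging in the bounded concurrent QZK for NP as the classical subroutine---is exactly what the paper does. However, one ingredient you omit is essential: the paper retains a (simplified) coin-flipping subprotocol to determine which Hamiltonian term $r^*$ is measured, rather than letting the verifier choose it unilaterally. This is what lets the simulator fix $r^*$ \emph{before} preparing the authenticated state and hence invoke the ``simulatability of authenticated states'' property of the BJSW encoding, sending $B(x,s,r^*)$ in place of $\E_s(\ket{\psi})$. In your description the verifier picks the term after the encoded state is transmitted; without coin-flipping (or an equivalent mechanism) it is unclear how the simulator can produce a convincing encoded state without the witness, and the appeal to ``equivocation afforded by authenticating to a predetermined accepting outcome'' does not fill this gap, since the key commitment is statistically binding. (Relatedly, the paper never decommits the key; consistency is established entirely inside the NP zero-knowledge proof.)

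The paper's hybrid structure also differs from your session-by-session plan in a way that directly addresses your rewinding concern. The paper first switches \emph{all} $\Pi_{\np}$ subproofs across all $Q$ sessions to simulated in a single step---this is precisely one invocation of the bounded-concurrent QZK property---and only afterward swaps the commitments and the encoded states by straight-line (non-rewinding) hybrids. In that reduction to concurrent NP-ZK, the non-$\Pi_{\np}$ prover actions (including preparing and sending $\E_s(\ket{\psi})$) together with $V^*$ are packaged as the auxiliary quantum verifier for $\Pi_{\np}$. Watrous rewinding of a block that happens to contain the quantum message is then harmless, because the block unitary is applied and un-applied coherently and $\E_s$ is itself an isometry; there is no need for your invariant that the quantum message is ``never touched by rewinding,'' and in fact that invariant does not hold under the block-rewinding simulator. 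Once $\Pi_{\np}$ is simulated, the remaining hybrids involve no rewinding at all, so the issue does not recur.
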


\noindent This improves upon the existing QZK protocols for QMA~\cite{BJSW16,BG19,CVZ20,BS20} which only guarantee security in the standalone setting. 
\par Our construction considers a simplified version of the framewok of~\cite{BJSW16}\footnote{For the reader familiar with~\cite{BJSW16}, we consider a coin-flipping protocol secure against explainable adversaries as against malicious adversaries as considered in~\cite{BJSW16}.} and instantiates the underlying primitives in their protocol with bounded concurrent secure constructions. 
\par We could combine the recent work of Coladangelo et al.~\cite{CVZ20} with our quantum proof of knowledge system for NP to obtain a proof of {\em quantum} knowledge system for QMA. This result yields better parameters than the one guaranteed in prior works~\cite{CVZ20,BG19}. Specifically, if the malicious prover convinces the verifier with probability negligibly close to 1 then the extractor (in our result) can extract a state that is negligibly close to the witness state whereas the previous works did not have this guarantee.

\submversion{
\subsection{Guide to the Reader}
\noindent We present the overview of our results in the technical sections, just before presenting a formal description of the results. 

\begin{itemize}\setlength\itemsep{1em}
    \item In Section~\ref{sec:cqzk}, we present the definitions of concurrent QZK proof systems for NP and QMA. In the same section, we present definitions of quantum proof of knowledge.
    \item {\bf Bounded Concurrent QZK}: In Section~\ref{sec:bcqzknp}, we present the construction of bounded concurrent QZK for NP. We first begin with an overview of the construction and then present the formal construction in the same section. The proofs are presented in the Appendix (see the relevant references at the end of Section~\ref{sec:bcqzknp}). 
    \item {\bf QZK Proof of Knowledge}: In Section~\ref{sec:qpok}, we present the construction of bounded concurrent QZK proof of knowledge for NP. We first begin with an overview of the construction and then present the formal construction in the same section. This construction involves the tool of oblivious transfer; we present the definition and the construction of oblivious transfer in the Appendix.
    \item {\bf Bounded Concurrent QZK for QMA}: Finally, we present a construction of bounded concurrent QZK for QMA in Section~\ref{sec:bcqzk:qma}. 
\end{itemize}

}

\fullversion{
\subsection{Technical Overview}
\label{sec:overview}
\noindent We highlight the main ideas behind the constructions of the bounded concurrent QZK and quantum proof of knowledge. We omit the overview for the construction of bounded concurrent QZK for QMA. 

\fullversion{
\subsubsection{Bounded Concurrent QZK for NP}
\label{sec:bcqzk}
}
\submversion{
\subsection{Bounded Concurrent QZK for NP}
\label{sec:bcqzk}
}
\paragraph{Black Box QZK via Watrous Rewinding.} The traditional rewinding technique that has been used to prove powerful results on classical zero-knowledge cannot be easily ported to the quantum setting. The fundamental reason behind this difficulty is the fact that to carry out  rewinding, it is necessary to clone the state of the verifier. While cloning comes for free in the classical setting, the no-cloning theorem of quantum mechanics prevents us from being able to clone arbitrary states. Nonetheless, the seminal work of Watrous~\cite{Wat09} demonstrates that there are rewinding techniques that are amenable to the quantum setting. Watrous used this technique to present the first construction of quantum zero-knowledge for NP. This technique is so powerful that all quantum zero-knowledge protocols known so far (including the ones with non-black box simulation~\cite{BS20,ABGKM20}!) either implicitly or explicitly use this technique. 
\par We can abstractly think of Watrous technique as follows: to prove that a classical protocol is quantum zero-knowledge, first come up with a (classical) PPT simulator that  simulates a (classical) malicious PPT verifier. The classical simulator needs to satisfy the following two conditions: 
\begin{itemize} 

\item {\bf Oblivious Rewinding}: There is a distribution induced on the decision bits of the simulator to rewind in any given round $i$. This distribution could potentially depend on the randomness of the simulator and also the state of the verifier. \par The oblivious rewinding condition requires that this distribution should be independent of the state of the verifier. That is, this distribution should remain the same irrespective of the state of the verifier\footnote{A slightly weaker property where the distribution is ``{\em approximately"} independent of the state of the verifier also suffices. }.  

\item {\bf No-recording}: Before rewinding any round, the simulator could record (or remember) the transcript generated so far. This recorded transcript along with the rewound transcript will be used for simulation. For instance, in Goldreich and Kahan~\cite{GK96a}, the simulator first commits to garbage values and then waits for the verifier to decommit its challenges. The simulator then records the decommitments before rewinding and then changing its own commitments based on the decommitted values. \par The no-recording condition requires the following to hold: in order for the simulator to rewind from point $i$ to point $j$ ($i > j$), the simulator needs to forget the transcript generated from $j^{th}$ round to the $i^{th}$ round. Note that the simulator of~\cite{GK96a} does not satisfy the no-recording condition.

\end{itemize}
Once such a classical simulator is identified, we can then simulate quantum verifiers as follows: run the classical simulator and the quantum verifier\footnote{Without loss of generality, we can consider verifiers whose next message functions are implemented as unitaries and they perform all the measurements in the end.} in superposition and then at the end of each round, measure the appropriate register to figure out whether to rewind or not. The fact that the distribution associated with the decision bits are independent of the verifier's state is used to argue that the state, after measuring the decision register, is essentially not disturbed. Using this fact, we can then reverse the computation and go back to an earlier round. Once the computation is reversed (or rewound to an earlier round), the simulator forgets all the messages exchanged from the point -- to which its being rewound to -- until the current round. 

\paragraph{Incompatibility of Existing Concurrent ZK Techniques.} To realize our goal of building bounded concurrent QZK, a natural direction to pursue is to look for classical concurrent ZK protocols with the guarantee that the classical simulator satisfies both the oblivious rewinding and no-recording conditions. However, most known classical concurrent ZK techniques are such that they satisfy one of these two conditions but not both. For example, the seminal work of~\cite{PRS02} proposes a concurrent ZK protocol and the simulator they describe satisfies the oblivious rewinding condition but not the no-recording condition. More relevant to our work is the work of Pass et al.~\cite{PTW09}, who construct a bounded concurrent ZK protocol whose simulator satisfies the no-recording condition but not the oblivious rewinding condition. 
\par In more detail, at every round, the simulator (as described in~\cite{PTW09}) makes a decision to rewind based on which session verifier sends a message in that round. This means that the probability of whether the simulator rewinds any given round depends on the scheduling of the messages of the verifiers. Unfortunately, the scheduling itself could be a function of the state of the verifier. The malicious verifier could look at the first bit of its auxiliary state. If it is 0, it will ask the first session verifier to send a message and if it is 1, it will ask the second session verifier to send a message and so on. This means that a simulator's decision to rewind could depend on the state of the verifier.

\paragraph{Bounded Concurrent QZK.} We now discuss our construction of bounded concurrent QZK and how we overcome the aforementioned difficulties. Our construction is identical to the bounded concurrent (classical) ZK construction of Pass et al.~\cite{PTW09}, modulo the setting of parameters. We recall their construction below. 
\par The protocol is divided into two phases. In the first phase, a sub-protocol, referred to as {\em slot}, is executed many times. We will fix the number of executions later when we do the analysis. In the second phase, the prover and the verifier execute a witness-indistinguishable proof system. 
\par In more detail, one execution of a slot is defined as follows: 
\begin{itemize} 

\item Prover sends a commitment of a random bit $b$ to the verifier. This commitment is generated using a statistically binding commitment scheme that guarantees hiding property against quantum polynomial-time adversaries (also referred to as quantum concealing). 

\item The verifier then sends a uniformly random bit $b'$ to the prover. 

\end{itemize}
We say that a slot is {\em matched} if $b=b'$. 
\par In the second phase, the prover convinces the verifier that either the instance is in the language or there is a large fraction, denoted by $\tau$, of matched slots. This is done using a proof system satisfying witness-indistinguishability property against efficient quantum verifiers. Of course, $\tau$ needs to be carefully set such that the simulator will be able to satisfy this constraint while a malicious prover cannot. Before we discuss the precise parameters, we first outline the simulator's strategy to prove zero-knowledge. As remarked earlier, the classical simulation strategy described in Pass et al.~\cite{PTW09} is incompatible with Watrous rewinding. We first discuss a new classical simulation strategy, that we call {\em block rewinding}, for this protocol and then we discuss how to combine this strategy along with Watrous rewinding to prove quantum zero-knowledge property of the above protocol. 

\paragraph{Block Rewinding.} Suppose $Q$ be the number of sessions the malicious verifier initiates with the simulator. Since this is a bounded concurrent setting, $Q$ is known even before the protocol is designed. Let $\ell_{\smprot}$ be the number of messages in the protocol. Note that the total number of messages exchanged in all the sessions is at most $\ell_{\smprot} \cdot Q$. We assume for a moment that the malicious verifier never aborts. Thus, the number of messages exchanged between the prover and the verifier is exactly $\ell_{\smprot} \cdot Q$. 
\par The simulator partitions the $\ell_{\smprot} \cdot Q$ messages into many blocks with each block being of a fixed size (we discuss the parameters later). The simulator then runs the verifier till the end of first block. At this point, it checks if this block contains a slot. Note that the verifier can stagger the messages of a particular session across the different blocks such that the first message of a slot is in one block but the second message of this slot could be in a different block. The simulator only considers those slots such that both the messages of these slots are contained inside the first block. Let the set of all the slots in the first block be denoted by $\mu(B_1)$, where $B_1$ denotes the first block. Now, the simulator picks a random slot from the set $\mu(B_1)$. It then checks if this slot is matched or not. That is, it checks if the bit committed in the slot equals the bit sent by the verifier. If indeed they are equal, it continues to the next block, else it rewinds to the beginning of the first block and then executes the first block again. Before rewinding, it forgets the transcript collected in the first block. It repeats this process until the slot it picked is matched. The simulator then moves on to the second block and repeats the entire process. When the simulator needs to compute a witness-indistinguishable proof for a session, it first checks if the fraction of matched slots for that particular session is at least $\tau$. If so, it uses this information to complete the proof. Otherwise, it aborts. 
\par It is easy to see why the no-recording condition is satisfied: the simulator never stores the messages sent in a block. Let us now analyze why the oblivious rewinding condition is satisfied. Suppose we are guaranteed that in every block there is at least one slot. Then, we claim that the probability that the simulator rewinds is $\frac{1}{2} \pm \negl(\secparam)$, where $\negl$ is a negligible function and $\secparam$ is the security parameter. This is because the simulator rewinds only if the slot is not matched and the probability that a slot is not matched is precisely $\frac{1}{2} \pm \negl(\secparam)$, from the hiding property of the commitment scheme. If we can show that every block contains a slot, then the oblivious rewinding condition would also be satisfied.\\

\noindent \textsc{Absence of Slots and Aborting Issues}: We glossed over a couple of issues in the above description. Firstly, the malicious verifier could abort all the sessions in some block. Moreover, it can also stagger the messages across blocks such that there are blocks that contain no slots. In either of the above two cases, the simulator will not rewind these blocks and this violates the oblivious rewinding condition: the decision to rewind would be based on whether the verifier aborted or whether there were any slots within a block. In turn, these two conditions could depend on the state of the verifier.
\par To overcome these two issues, we fix the simulator as follows: at the end of every block, it checks if there are any slots inside this block. If there are slots available, then the simulator continues as detailed above. Otherwise, it performs a dummy rewind: it picks a bit uniformly at random and rewinds only if the bit is 0. If the bit is 1, it continues its execution. This ensures that the simulator will rewind with probability $\frac{1}{2} \pm \negl(\secparam)$ irrespective of whether there are any slots inside a block. Thus, with this fix, the oblivious rewinding condition is satisfied as well. \\

\noindent \textsc{Parameters and Analysis}: We now discuss the parameters associated with the system. We set the number of slots in the system to be $120Q^7 \secparam$. We set $\tau$ to be $\lfloor \frac{60Q^7 \secparam + Q^4 \secparam}{120Q^7 \secparam} \rfloor$. We set the number of blocks to be $24Q^6 \secparam$. Thus, the size of each block is $\lfloor \frac{120Q^7 \secparam}{24Q^6 \secparam} \rfloor$. Recall that the reason why we need to set these parameters carefully is to ensure that the malicious prover cannot match more than $\tau$ slots with better than negligible probability whereas the simulator can beat this threshold with overwhelming probability.
\par We now argue that the classical simulator can successfully simulate all the $Q$ sessions. To simulate any given session, say the $i^{th}$ session, the number of matched slots needs to be at least $60Q^7 \secparam + Q^4 \secparam$. Note that the number of blocks is $24Q^6 \secparam$; the best case scenario is that each of these blocks contain at least one slot of the $i^{th}$ session and the simulator picks this slot every time. Even in this best case scenario, the simulator can match at most $24Q^6 \secparam$ slots and thus, there still would remain $60Q^7 \secparam + Q^4 \secparam - 24Q^6 \secparam$ number of slots to be matched. Moreover, even the likelihood of this best case scenario is quite low. 
\par Instead, we argue the following: 
\begin{itemize}
    \item The simulator only needs to match $3Q^{4} \secparam$ number of slots for the $i^{th}$ session. We argue that with overwhelming probability, there are $3Q^{4} \secparam$ blocks such that (i) there is at least one slot from the $i^{th}$ session and, (ii) the simulator happens to choose a slot belonging to this session in each of these blocks.   
    \item Roughly, $\frac{120Q^7 \secparam - 3Q^{4} \secparam}{2} \gg 60Q^7 \secparam - 2Q^4 \secparam$ number of slots are matched by luck, even without the simulator picking these slots and trying to match. This follows from the fact that with probability $\frac{1}{2}$, a slot is matched and the number of remaining slots that need to be matched are $120Q^7 \secparam - 3Q^{4} \secparam$. 
\end{itemize}
\noindent From the above two bullet points, it follows that with overwhelming probability, the total number of slots matched is at least $60Q^7 \secparam + Q^4 \secparam$.  
\par We note that although the simulation strategy of Pass et al.~\cite{PTW09} is quite different, their analysis follows the same template as above. \\
\ \\
\noindent \textsc{Simulation of Quantum Verifiers}: So far we have demonstrated a simulator that can simulate classical verifiers. We describe, at a high level, how to simulate quantum verifiers. The quantum simulator runs the classical simulator in superposition. At the end of every block, it measures a single-qubit register, denoted by $\decreg$, which indicates whether the simulator needs to rewind this block or not. If this register has 0, the simulator does not rewind, otherwise it rewinds. We can show that, no matter what the auxiliary state of the malicious verifier is, at the end of a block, the quantum state is of the following form:
$$\sqrt{p} \ket{0}_{\decreg} \ket{\Psi_{\mathsf{Good}}} + \sqrt{1-p} \ket{1}_{\decreg} \ket{\Psi_{\mathsf{Bad}}},$$
where $\ket{\Psi_{\mathsf{Good}}}$ is a superposition of all the transcripts where the chosen slot is matched and on the other hand, $\ket{\Psi_{\mathsf{Bad}}}$ is a superposition of all the transcripts where the chosen slot is not matched. Moreover, using the hiding property of the commitment scheme, we can argue that $|p-\frac{1}{2}| \leq \negl(\secparam)$. 
Then we can apply the Watrous rewinding lemma, to obtain a state that is close to $\ket{\Psi_{\mathsf{Good}}}$. This process is repeated for every block. At the end of the protocol, the simulator measures the registers containing the transcript of the protocol and outputs this along with the private state of the verifier.

\fullversion{
\subsubsection{Standalone Quantum Proofs of Knowledge}
\label{sec:qpok}
}
\submversion{
\subsection{Standalone Quantum Proofs of Knowledge}
\label{sec:qpok}
}
\noindent Towards building a bounded-concurrent QZK system satisfying quantum proof of knowledge property, we first focus on the standalone QZK setting. The quantum proof of knowledge property roughly says the following: for every unbounded prover convincing a verifier to accept an instance $x$ with probability $p$, there exists an  extractor that outputs a witness $w$ with probability negligibly close to $p$ and it also outputs a state $\ket{\Phi}$ that is close (in trace distance) to the output state of the real prover. 
\par Our approach is to design a novel extraction mechanism that uses oblivious transfer to extract a bit from a quantum adversary. 

\paragraph{Main Tool: Statistical Receiver-Private Oblivious Transfer.} Our starting point is an oblivious transfer (OT) protocol~\cite{Rabin05}. This protocol is defined between two entities: a sender and a receiver. The sender has two bits $(m_0,m_1)$ and the receiver has a single bit $b$. At the end of the protocol, the receiver receives the bit $m_b$.
\par The security against malicious senders (receiver privacy) states that the sender should not be able to distinguish (with non-negligible probability) whether the receiver's bit is 0 or 1. The security against malicious receivers (also called sender privacy) states that there is a bit $b'$ such that the receiver cannot distinguish (with non-negligible probability) the case when the sender's input is $(m_0,m_1)$ versus the setting when the sender's input is $(m_{b'},m_{b'})$. 
\par We require receiver privacy to hold against unbounded senders while we require sender privacy to hold against quantum polynomial-time receivers. The reason we require receiver privacy against unbounded senders is because our goal is to design extraction mechanism against computationally unbounded provers.
\par We postpone discussing the construction of statistical receiver-private oblivious transfer\submversion{ to the Appendix}. We will now see how to use this to achieve extraction.

\paragraph{One-bit Extraction with $\left(\frac{1}{2} \pm \negl \right)$-error.}  
\par We begin with a naive attempt to design the extraction mechanism for extracting a single secret bit, say $s$\footnote{For instance, $s$ could be the firt bit of the witness.}. The prover and the verifier execute the OT protocol; prover takes on the role of the OT sender and the verifier takes on the receiver's role. The prover picks bits $b$ and $\alpha$ uniformly at random and then sets the OT sender's input to be $(s,\alpha)$ if $b=0$, otherwise if $b=1$, it sets the OT sender's input to be $(\alpha,s)$. The verifier sets the receiver's bit to be 0. After the OT protocol ends, the prover sends the bit $b$. Note that if the bit $b$ picked by the prover was 0 then the verifier can successfully recover $s$, else it recovers $\alpha$. 
\par We first discuss the classical extraction process. The quantum extractor runs the classical extractor in superposition as we did in the case of quantum zero-knowledge. The extraction process proceeds as follows: the extractor picks a bit $\widetilde{b}$ uniformly at random and sets $\widetilde{b}$ to be the receiver's bit in the OT protocol. By the statistical receiver privacy property of OT, it follows that the probability that the extractor succeeds in recovering $s$ is negligibly close to $\frac{1}{2}$. Moreover, the success probability is independent of the initial state of the prover. This means that we can apply the Watrous rewinding lemma and amplify the success probability. \\

\noindent \textsc{Malicious Provers}: However, we missed a subtle issue: the malicious prover could misbehave. For instance, the prover can set the OT sender's input to be $(r,r)$ and thus, not use the secret bit $s$ at all.  
\par We resolve this issue by additionally requiring the prover to prove to the verifier that one of its inputs in the OT protocol is the secret bit\footnote{For now, assume that there exists a predicate that can check if $s$ is a valid secret bit.} $s$. This is realized by using a quantum zero-knowledge protocol, denoted by $\Pi$. 

\paragraph{Error amplification.} A malicious verifier can successfully recover the secret $s$ with probability $\frac{1}{2}$. To reduce the verifier's success probability, we execute the above process (i.e., first executing the OT protocol and then executing the ZK protocol) $\secparam$ number of times, where $\secparam$ is the security parameter. First, the prover will additively secret share the bit $s$ into secret shares $sh_1,\ldots,sh_{\secparam}$. It also samples the bits $b_1,\ldots,b_{\secparam}$ uniformly at random. In the $i^{th}$ execution, it sets the OT sender's input to be $(sh_i,\alpha_i)$ if $b_i=0$, otherwise it sets the OT sender's input to be $(\alpha_i,sh_i)$, where $\alpha_i$ is sampled uniformly at random. After all the OT protocols are executed, the prover is going to prove using a QZK protocol $\Pi$, as considered above, that the messages in the OT protocols were correctly computed. 
\par We first argue that even in this protocol, the extraction still succeeds with overwhelming probability. In each OT execution, the extractor applies Watrous rewinding, as before, to extract all the shares $sh_1,\ldots,sh_{\secparam}$. From this, it can recover $s$. All is left is to argue that this template satisfies quantum zero-knowledge property. It turns out that arguing this is challenging\footnote{We would like to point out that we are designing the standalone PoK protocol as a stepping stone towards the bounded concurrent PoK protocol. If one were to be interested in just the standalone setting, then it might be possible to avoid the subtelties described above by making use of a simulation-secure OT rather than an indistinguishable-secure OT. The reason why we use an indistinguishable-secure OT in the concurrent PoK setting instead of a simulation-secure OT is because we want to avoid using more than one simulator in the analysis; otherwise, we would have multiple simulators trying to rewind the verifier, making the analysis significantly complicated.}.

\paragraph{Challenges in Proving QZK and Distinguisher-Dependent Hybrids.} We first define the simulator as follows:
\begin{itemize} 

\item The simulator uses $(\alpha_i,\alpha_i)$ as the sender's input in the $i^{th}$ OT execution, where $\alpha_i$ is sampled uniformly at random. 

\item It then simulates the protocol $\Pi$. 

\end{itemize}
To prove that the output distribution of the simulated world is computationally indistinguishable from the real world, we adopt a hybrid argument. The first hybrid, $\hybrid_1$, corresponds to the real world. In the second hybrid, $\hybrid_2$, simulate the protocol $\Pi$. The indistinguishability of $\hybrid_1$ and $\hybrid_2$ follows from the QZK property of $\Pi$. Next, we define the third hybrid, $\hybrid_3$, that executes the simulator. To prove the indistinguishability of $\hybrid_2$ and $\hybrid_3$, we consider a sequence of intermediate hybrids, denoted by  $\{\hybrid_{2.j}\}_{j \in [\secparam]}$. Using this sequence of hybrids, we change the inputs in all the $\secparam$ OT executions one at a time. Finally, we define the third hybrid, $\hybrid_3$, that corresponds to the ideal world. Proving the indistinguishability of the consecutive hybrids, $\hybrid_{2.j}$ and $\hybrid_{2.j+1}$, in this sequence turns out to be challenging. 
\par The main issue is the following: suppose we are in the $j^{th}$ intermediate hybrid $\hybrid_{2.j}$, for $j \leq \secparam$. At this point, we have changed the inputs to the first $j$ OT executions and we are about to change the input to the $(j+1)^{th}$ OT. But what exactly are the inputs we are using for the first $j$ OT executions? It is unclear whether we use the input $(sh_i,sh_i)$ or the input $(\alpha_i,\alpha_i)$, for $i \leq j$, in the $i^{th}$ OT execution. Note that the OT security states that we can either switch the real sender's inputs to either $(sh_i,sh_i)$ or $(\alpha_i,\alpha_i)$, based on the sender's and the distinguisher's randomness. And hence, we define an {\em inefficient} intermediate hybrid, which is a function (not necessarily computable), that determines for every $i$, where $i \leq j$, whether to use $(sh_i,sh_i)$ or $(\alpha_i,\alpha_i)$. Moreover, {\em this hybrid depends on the distinguisher}, that distinguishes the two intermediate hybrids.  
\par The indistinguishability of the consecutive pair of inefficient hybrids, say $\hybrid_{2.j}$ and $\hybrid_{2.j+1}$, is proven by a non-uniform reduction that receives as input the advice corresponding to the first $j$ executions of OT, where the sender's inputs are correctly switched to either $(sh_i,sh_i)$ or $(\alpha_i,\alpha_i)$, for $i \leq j$. This in turn depends on the distinguisher distinguishing these two hybrids. Then, the reduction uses the $(j+1)^{th}$ OT execution in the protocol to break the sender privacy property of OT. If the two hybrids can be distinguished with non-negligible probability then the reduction can succeed with the same probability.  
\par In the hybrid  $\hybrid_{2.\secparam-1}$, we additionally include an abort condition: if the inputs in the first $\secparam-1$ OT executions are all switched to $(sh_i,sh_i)$ then we abort. We show that the probability that $\hybrid_{2.\secparam-1}$ aborts is negligible. This is necessary to argue that the verifier does not receive all the shares of the secret. 
\par Note that only the intermediate hybrids, namely $\{\hybrid_{2.j}\}_{j \in [\secparam]}$, are inefficient, and in particular, the final hybrid $\hybrid_3$ is still efficient.

\paragraph{Extraction of Multiple Bits.} To design a quantum proof of knowledge protocol, we need to be able to extract not just one bit, but multiple bits. To achieve this, we design the prover as follows: on input a witness $w$, it sequentially executes the above extraction template for each bit of the witness. That is, for every $i \in [\ell_w]$, where $\ell_w$ is the length of $w$, it additively secret shares $w_i$ into the shares $(sh_{i,1},\ldots,sh_{i,\secparam})$. It then invokes $\ell_w \cdot \secparam$ number of OT executions, where in the $(i,j)^{th}$ execution, it chooses the input $(sh_{i,j},\alpha_{i,j})$ if $b_{i,j}=0$, or the input $(\alpha_{i,j},sh_{i,j})$ if $b_{i,j}=1$, where $\alpha_{i,j},b_{i,j}$ are sampled uniformly at random. Finally, it uses a QZK protocol to prove that it behaved honestly in the earlier OT executions.
\par The proofs of quantum proof of knowledge and the QZK properties follow along the same lines as the single-bit extraction case. 

\fullversion{

\subsubsection{Statistical Receiver-Private OT with Post-Quantum Security} 
\fullversion{
All that is left is to construct an oblivious transfer protocol that guarantees statistical indistinguishability property against malicious senders and indistinguishability property against QPT malicious receivers. We denote the protocol that we intend to construct to be $\Pi_{\srot}$. 
}
\submversion{
We first give an overview of our construction. 
}
\par The starting point of our construction is another oblivious transfer protocol, denoted by $\Pi_{\ssot}$, that has its properties flipped. That is, $\Pi_{\ssot}$ satisfies statistical indistinguishability property against malicious {\em receivers} and indistinguishability property against QPT malicious {\em senders}. The reason we start with this protocol is that we do know how to achieve this; Brakerski-D\"ottling~\cite{BD18}  constructed such a protocol from QLWE. 
\par Our approach is inspired from previous works~\cite{KKS18,GJJM20} that show how to construct statistical receiver-private OT from statistical sender-private OT. 
\par Our first attempt to construct $\Pi_{\srot}$ is the following: 
\begin{itemize}
    \item The sender of $\Pi_{\ssot}$ samples a random bit $r \xleftarrow{\$} \{0,1\}$. It takes the role of the receiver in the underlying $\Pi_{\srot}$. It then sends the first message of $\Pi_{\srot}$ with the receiver's message set to be $r$.  
    \item The receiver of $\Pi_{\ssot}$, on input choice bit $\beta$, samples another random bit $r'$. It takes the role of the sender in the underlying protocol $\Pi_{\ssot}$. It then sends the sender's message in $\Pi_{\ssot}$, where the sender's input in $\Pi_{\ssot}$ is set to be $(r',r' \oplus \beta)$. 
    \item After the end of the execution of $\Pi_{\ssot}$, the sender on input $(m_0,m_1)$, does the following: it recovers the message $\widetilde{r}$ from the underlying OT. It then sends $(\widetilde{r} \oplus m_0,\widetilde{r} \oplus r \oplus m_1)$ to the receiver.
\end{itemize}
\noindent If $\beta = 0$ then $\widetilde{r}=r'$ and so, the receiver can recover $m_0$. If $\beta=1$ then $\widetilde{r}=r' \oplus r$ and so, the receiver can recover $m_1$. 
\par The receiver privacy against computationally unbounded senders follows from the statistical sender privacy of the underlying two-round oblivious transfer protocol.
\par To prove sender privacy against QPT receivers, first let us make the previously described security notion more precise. The malicious receiver $R^*$, on input state $\ket{\Psi}$, interacts with the sender and produces an auxiliary state $\ket{\widetilde{\Psi}}$. During this interaction, the sender does not use $(m_0,m_1)$. The sender uses $(m_0,m_1)$ to compute the final round message. We define two games: in the first game, the adversary tries to distinguish $(m_0,m_1)$ versus $(m_0,m_0)$ and in the second game, the adversary tries to distinguish $(m_0,m_1)$ versus $(m_1,m_1)$. We say that oblivious transfer satisfies post-quantum computational sender privacy property if the malicious receiver cannot succeed in both the games with non-negligible advantage. 
\par A natural approach to prove that the malicious receiver cannot win both the games is to extract the bit $\beta$ from the malicious receiver; if $\beta=0$ then the receiver will not be able to succeed in the second game if $\beta=1$ then the receiver will not succeed in the first game. To ensure that we can extract the bit $\beta$ from the receiver, we additionally introduce an extraction phase to the protocol.\\

\noindent \textsc{Extraction Phase}: To design the extraction phase, we use the same technique we introduced earlier. The main difference is that instead of using statistical receiver-private OT, we instead use a statistical sender-private OT for extraction. 
\par In the extraction phase of $\Pi_{\srot}$, the sender and the receiver do the following: 
\begin{itemize}
    \item As before, the sender of $\Pi_{\srot}$, plays the role of the receiver of $\Pi_{\ssot}$ and the receiver of $\Pi_{\srot}$ plays the role of the sender of $\Pi_{\ssot}$. 
    \item The sender does the following: it samples a bit $b$ uniformly at random. It sets the $\Pi_{\ssot}$'s receiver's bit to be $b$.  
    \item The receiver, on the other hand, samples $\alpha \xleftarrow{\$} \{0,1\}$ and sets the $\Pi_{\ssot}$'s sender's input to be $(\beta,\alpha)$ with probability $\frac{1}{2}$ and $(\alpha,\beta)$ with probability $\frac{1}{2}$.
    \item At the end of the execution of $\Pi_{\ssot}$, the receiver reveals the location of $\beta$ -- i.e., it sends 0 if $(\beta,\alpha)$ was used in $\Pi_{\ssot}$ or it sends 1 if $(\alpha,\beta)$ was used.   
\end{itemize}
Note that if the location matched with $b$ then the sender can recover $\beta$, otherwise it cannot. With probability at most $\frac{1}{2}$, the sender can recover $\beta$.  We can use the same error amplification technique (via secret sharing) introduced earlier to reduce the probability of success of the malicious sender to be negligible. On the other hand, we can design an extractor that uses Watrous rewinding, as mentioned earlier to recover the bit $\beta$ with probability close to 1.\\

\noindent \textsc{Template.} Using the above ingredients, we now summarise the template to construct a statistical receiver-private oblivious transfer. 
\par The sender, on input $(m_0,m_1)$, and the receiver of $\Pi_{\srot}$ on input $\beta$, do the following: 
\begin{itemize} 

\item The sender and the receiver execute the extraction phase described above. The receiver uses its bit $\beta$ in the extraction phase. 

\item The sender and the receiver then execute $\Pi_{\ssot}$, where each party play the opposite role. The sender sets the input of the receiver in $\Pi_{\ssot}$ to be $r$, where $r \xleftarrow{\$} \{0,1\}$ and the receiver sets the input of the sender in $\Pi_{\ssot}$ to be $(r',r' \oplus \beta)$, where $r' \xleftarrow{\$} \{0,1\}$. After the end of the execution of $\Pi_{\ssot}$, the sender recovers $\widetilde{r}$. 

\item Of course, the receiver could have cheated and used a different $\beta$ in both the extraction phase and in the execution of $\Pi_{\ssot}$. To ensure that the receiver does not cheat, we force the receiver to prove that it used $\beta$ consistently. We use a computational argument system satisfying statistical zero-knowledge property for this step. 
\item Once the sender gets convinced that the receiver did not cheat, it sends $(\widetilde{r} \oplus m_0,\widetilde{r} \oplus r \oplus m_1)$ to the receiver. 

\end{itemize}
\noindent Finally, we show how to implement computational argument system satisfying statistical zero-knowledge property from QLWE. The idea is to start with a statistical NIZK computational argument system in the CRS model and then generate the CRS using a coin flipping protocol.   
}

\subsubsection{Quantum PoK in the Bounded Concurrent Setting}

Our construction of bounded concurrent quantum proof of knowledge is the same as the one described in~\Cref{sec:qpok}, except that we instantiate $\Pi$ using the bounded concurrent QZK protocol that we constructed in~\Cref{sec:BCQZK}\footnote{We emphasize that we use the specific bounded concurrent QZK protocol that we constructed earlier and we do not know how to provide a generic transformation.}. 
\par However, proving the bounded concurrent QZK protocol turns out to be even more challenging than the standalone setting. To grasp the underlying difficulties, let us revisit the proof of QZK in~\Cref{sec:qpok}. To prove the indistinguishability of the real and the ideal world, we first simulated the protocol $\Pi$. Since we are in the bounded concurrent setting, the simulator of $\Pi$ is now simultaneously simulating multiple sessions of the verifier. Then using a sequence of intermediate hybrids, we changed the inputs used in the OT executions of all the sessions one at a time. However, in the bounded concurrent setting, the OT messages can be interleaved with QZK messages. This means that the simulator of QZK could be rewinding the OT messages along with the QZK messages. This makes it difficult to invoke the security of OT.

\subsection{Organization}
\begin{itemize}
    \item We present the necessary preliminaries -- including the notation used in the paper, basics on quantum computing, definitions of commitments and Watrous rewinding lemma -- in Section~\ref{sec:prelims}. 
    \item We present the definitions of concurrent quantum zero-knowledge for both NP and QMA in Section~\ref{sec:cqzk}. We also present the definition of quantum proof of knowledge in the same section. 
    \item Then, we provide the construction of bounded concurrent QZK for NP in Section~\ref{sec:BCQZK}. 
    \item We then focus on construct a quantum proof of knowledge in the bounded concurrent QZK setting. The main tool used in this construction is an oblivious transfer protocol; we present the definition and the construction of this oblivious transfer protocol in Section~\ref{sec:statOT}. 
    \par In Section~\ref{sec:pok}, we present the construction of quantum proof of knowledge in two steps. First we present a construction in the standalone setting. We then extend this construction to the bounded concurrent setting. 
    \item Finally, we present a construction of bounded concurent QZK for QMA in Section~\ref{sec:bcqzk:qma}. 
    
\end{itemize}
}

\fullversion{
\section{Preliminaries}
\label{sec:prelims}
\fullversion{
\noindent We denote the security parameter by $\secparam$. We assume basic familiarity of cryptographic concepts. 
\par We denote (classical) computational indistiguishability of two distributions $\distr_0$ and $\distr_1$ by $\distr_0 \approx_{c,\varepsilon} \distr_1$. In the case when $\varepsilon$ is negligible, we drop $\varepsilon$ from this notation. We denote the process of an algorithm $A$ being executed on input a sample from a distribution $\distr$ by the notation $A(\distr)$.

\paragraph{Languages and Relations.} A language $\lang$ is a subset of $\{0,1\}^*$. A relation $\rel$ is a subset of $\{0,1\}^* \times \{0,1\}^*$. We use the following notation:
\begin{itemize}

\item Suppose $\rel$ is a relation. We define $\rel$ to be {\em efficiently decidable} if there exists an algorithm $A$ and fixed polynomial $p$ such that $(x,w) \in \rel$ if and only if $A(x,w)=1$ and the running time of $A$ is upper bounded by $p(|x|,|w|)$. 

\item Suppose $\rel$ is an efficiently decidable relation. We say that $\rel$ is a NP relation if $\lang(\rel)$ is a NP language, where $\lang(\rel)$ is defined as follows: $x \in \lang(R)$ if and only if there exists $w$ such that $(x,w) \in \rel$ and $|w| \leq p(|x|)$ for some fixed polynomial $p$. 

\end{itemize}
}

\subsection{Notation and General Definitions}
\label{ssec:notation}

For completeness, we present some of the basic quantum definitions, for more details see \cite{nielsen2002quantum}.
\paragraph{Quantum states and channels.} Let $\cH$ be any finite Hilbert space, and let $L(\cH):=\{\cE:\cH \rightarrow \cH \}$ be the set of all linear operators from $\cH$ to itself (or endomorphism). Quantum states over $\cH$ are the positive semidefinite operators in $L(\cH)$ that have unit trace. 

A state over $\cH=\mathbb{C}^2$ is called a qubit. For any $n \in \mathbb{N}$, we refer to the quantum states over $\cH = (\mathbb{C}^2)^{\otimes n}$ as $n$-qubit quantum states. To perform a standard basis measurement on a qubit means projecting the qubit into $\{\ket{0},\ket{1}\}$. A quantum register is a collection of qubits. A classical register is a quantum register that is only able to store qubits in the computational basis.

A unitary quantum circuit is a sequence of unitary operations (unitary gates) acting on a fixed number of qubits. Measurements in the standard basis can be performed at the end of the unitary circuit. A (general) quantum circuit is a unitary quantum circuit with $2$ additional operations: $(1)$ a gate that adds an ancilla qubit to the system, and $(2)$ a gate that discards (trace-out) a qubit from the system. A quantum polynomial-time algorithm (QPT) is a non-uniform collection of quantum circuits $\{C_n\}_{n \in \mathbb{N}}$.

\paragraph{Quantum Computational Indistinguishability.} We define computational indistinguishability; we borrow the following definition from ~\cite{Wat09}. Roughly, the below definition states that two collections of quantum states $\{\rho_x\}$ and $\{\sigma_x\}$ are computationally indistinguishable if any quantum distinguisher, running in time polynomial in $|x|$, cannot distinguish $\rho_x$ from $\sigma_x$, where $x$ is sampled from some distribution. Moreover, the computational indistinguishability should hold even if the distinguisher has quantum advice (that might be entangled with $\rho_x$ and $\sigma_x$). 

\begin{definition}[Computational Indistinguishability of Quantum States] Let $I$ be an infinite subset $I \subset \{0,1\}^*$, let $p : \mathbb{N} \rightarrow \mathbb{N}$ be a polynomially bounded function, and let $\rho_{x}$ and $\sigma_x$ be $p(|x|)$-qubit states. We say that $\{\rho_{x}\}_{x \in I}$ and $\{\sigma_x\}_{x\in I}$ are \textbf{quantum computationally indistinguishable collections of quantum states} if for every QPT $\cE$ that outputs a single bit, any polynomially bounded  $q:\mathbb{N}\rightarrow \mathbb{N}$, and any auxiliary collection of $q(|x|)$-qubits states $\{\nu_x \}_{x \in I}$, and for all (but finitely many) $x \in I$, we have that
$$\left|\Pr\left[\cE(\rho_x\otimes \nu_x)=1\right]-\Pr\left[\cE(\sigma_x \otimes \nu_x)=1\right]\right| \leq \epsilon(|x|) $$
for some negligible function $\epsilon:\mathbb{N}\rightarrow [0,1]$. We use the following notation 
$$\rho_x \approx_{\quant,\epsilon} \sigma_x$$
and we ignore the $\epsilon$ when it is understood that it is a negligible function.
\end{definition}

\paragraph{Interactive Models.} We model an interactive protocol between a prover, $\prvr$, and a verifier, $\vrfr$, as follows. There are 2 registers $\RegP$ and $\RegV$ corresponding to the prover's and the verifier's private registers, as well as a message register, $\RegM$, which is used by both $\prvr$ and $\vrfr$ to send messages. In other words, both prover and verifier have access to the message register. We denote the size of a register $\mathsf{R}$ by $|\mathsf{R}|$ -- this is the number of bits or qubits that the register can store.  There are 3 different notions of interactive computation. 


\begin{enumerate}
    \item \textbf{Classical protocol:} An interactive protocol is classical if $\RegP$, $\RegV$, and $\RegM$ are classical, and $\prvr$ and $\vrfr$ can only perform classical computation.
    \item \textbf{Quantum protocol with classical messages:} An interactive protocol is quantum with classical messages if either one of $\RegP$ or $\RegV$ is a quantum register, and $\RegM$ is classical. $\prvr$ and $\vrfr$ can perform quantum computations if their respective private register is quantum, but they can only send classical messages.
    \item \textbf{Quantum protocol:} $\RegP$, $\RegV$, and $\RegM$ are all quantum registers.  The prover performs quantum operations on $\RegP \otimes \RegM$ and the verifier performs quantum operations on $\RegV\otimes \RegM$.
\end{enumerate}
When a protocol has classical messages, we can assume that the adversarial party will also send classical messages. This is without loss of generality, because the honest party can enforce this condition by always measuring the message register in the computational basis before proceeding with its computations.

\paragraph{Notation.} We use the following notation in the rest of the paper.
\begin{itemize}
    \item $\langle P,V \rangle $ denotes the interactive protocol between the QPT algorithms $P$ and $V$. We denote the $\langle P(y_1),V(y_2) \rangle$ to be $(z_1,z_2)$, where $z_1$ is the prover's output and $z_2$ is the verifier's output. Sometimes we omit the prover's output and write this as $z \leftarrow \langle P(y_1),V(y_2) \rangle$ to indicate the output of the verifier to be $z$.
    \item $\view_V\left(\langle P(y_1),V(y_2) \rangle\right)$ denotes the view of the QPT algorithm $V$ in the protocol $\Pi$, where $y_1$ is the input of $P$ and $y_2$ is the input of $V$. In the classical case, the view includes the output of $V$ and the transcript of the conversation.
    In a quantum protocol, the view is the output on registers $\RegM \otimes \RegV$. Similarly, we can define the view of $P$ to be $\view_P\left(\langle P(y_1),V(y_2) \rangle\right)$ that includes the output on the registers $\RegP \otimes \RegM$. 
\end{itemize}

\subsection{Statistically Binding and Quantum-Concealing Commitments}
\label{sec:prelims:commit}
\noindent We employ a two-message  commitment scheme that satisfies the following two properties. 
 
\begin{definition}[Statistically Binding]
A two-message commitment scheme between a committer ($\comm$) and a receiver ($\receiver$), both running in probabilistic polynomial time, is said to satisfy statistical binding property if the following holds for any adversary $\adversary$: 
$$\prob \left[ \substack{(\bfc,r_1,x_1,r_2,x_2)  \leftarrow \adversary\\ \bigwedge\\  \comm(1^{\secparam},\bfr,x_1;r_1) = \comm(1^{\secparam},\bfr,x_2;r_2) = \bfc\\ \bigwedge\\ x_1 \neq x_2}: \bfr \leftarrow \receiver(1^{\secparam}) \right] \leq \negl(\secparam),$$
for some negligible function $\negl$. 
\end{definition}

\begin{definition}[Quantum-Concealing]
A commitment scheme $\comm$ is said to be quantum concealing if the following holds. Suppose $\adversary$ be a non-uniform QPT algorithm and let $\bfr$ be the message generated by $\adversary(1^{\secparam})$. We require that $\adversary$ cannot distinguish the two distributions,  $\{\comm(1^{\secparam},\bfr,x_1)\}$ and $\{\comm(1^{\secparam},\bfr,x_2)\}$, for any two inputs $x_1,x_2$. 
\end{definition}

\begin{remark}
We only considered two message protocols in the above definition for simplicity. 
\end{remark}

\paragraph{Instantiation.} We can instantiate statistically binding and quantum-concealing commitments from post-quantum one-way functions~\cite{Naor91}. 

\subsection{Watrous Rewinding Lemma}
\noindent We first state the following lemma due to Watrous~\cite{Wat09}.

\begin{lemma}[Watrous Rewinding Lemma] 
\label{lem:watrous}
Suppose $Q$ be a quantum circuit acting on $n+k$ qubits such that for every $n$-qubit state $\ket{\psi}$, the following holds: 
$$Q\ket{\psi}\ket{0^{\otimes k}} = \sqrt{p(\psi)}\ \ket{0} \ket{\phi_0(\psi)} + \sqrt{1-p(\psi)}\ \ket{1} \ket{\phi_1(\psi)}  $$
Let $p_0,p_1 \in (0,1)$ and $\varepsilon \in (0,1/2)$ be real numbers such that:
\begin{itemize}
    \item $|p(\psi)-p_1| \leq \varepsilon$
    \item $p_0(1-p_0) \leq p_1(1-p_1)$, and 
    \item $p_0 \leq p(\psi)$
\end{itemize}
for all $n$-qubit states. Then there exists a general quantum circuit $R$ of size $O \left( \frac{\log(1/\varepsilon) \size(Q) }{p_0(1-p_0)} \right)$ satisfying the following property: 
$$\bra{\phi_0(\psi)} \rho(\psi) \ket{\phi_0(\psi)} \geq 1 - 16 \varepsilon \frac{\log^2(1/\varepsilon)}{p_0^2(1-p_0)^2} $$
\noindent In this case, we define $R$ to be $\amplifier(Q,\varepsilon)$. If $\varepsilon$ is a negligible function in the security parameter, we omit this from the algorithm. 
\end{lemma}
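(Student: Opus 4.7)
The plan is to realize the amplifier circuit by fixed-schedule amplitude amplification, with iteration count selected from the promise parameter $p_1$ and an error analysis that absorbs the input-dependent mismatch $|p(\psi) - p_1| \leq \varepsilon$ into the fidelity loss.

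First I would set up the Grover reflections. Let $\Pi_G := \ket{0}\bra{0}\otimes I$ be the projector onto the ``good'' subspace (first qubit equal to $\ket{0}$), and let $\Pi_I := Q(I_n \otimes \ket{0^k}\bra{0^k})Q^\dagger$ be the projector onto the image under $Q$ of the initialized-ancilla subspace. The reflections $R_G := I - 2\Pi_G$ and $R_I := I - 2\Pi_I$ have implementation cost $O(1)$ and $O(\size(Q))$ respectively, the latter by conjugating a cheap ancilla reflection by $Q$ and $Q^{\dagger}$. Applying Jordan's lemma to the pair $(\Pi_G, \Pi_I)$ decomposes the Hilbert space into a direct sum of joint-invariant $1$D and $2$D blocks of the Grover iterate $G := -R_G R_I$, with $G$ acting as a planar rotation on each $2$D block by twice the corresponding Jordan angle.

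The amplifier $R$ would then be defined to apply $Q$, then $G^t$, and finally discard the first qubit, with $t := \lfloor \pi/(4\theta_1) \rfloor$ and $\theta_1 := \arcsin\sqrt{p_1}$. The key computation is that $Q\ket{\psi}\ket{0^k}$ decomposes across the Jordan blocks in such a way that the aggregate good-branch amplitude $\bra{0}\bra{\phi_0(\psi)} G^t Q\ket{\psi}\ket{0^k}$ is close to $1$ whenever all the relevant Jordan angles lie near $\theta_1$. By the mean-value estimate for $\arcsin\sqrt{\cdot}$, together with the hypotheses $p_0 \leq p(\psi)$ and $p_0(1-p_0) \leq p_1(1-p_1)$, each relevant Jordan angle lies within $O(\varepsilon/\sqrt{p_0(1-p_0)})$ of $\theta_1$, and compounding over $t = O(1/\sqrt{p_0(1-p_0)})$ iterations yields an angular defect of order $\varepsilon/(p_0(1-p_0))$ and thus a fidelity defect of order $\varepsilon^2/(p_0(1-p_0))^2$. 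The advertised $\log^2(1/\varepsilon)$ factor in the fidelity bound and the $\log(1/\varepsilon)\,\size(Q)/(p_0(1-p_0))$ circuit size are then obtained by nesting the entire amplification inside an outer loop of $O(\log(1/\varepsilon))$ rotate-and-correct rounds, each geometrically suppressing the residual amplitude on the bad branch; the $p \leftrightarrow 1 - p$ symmetry in the hypothesis $p_0(1-p_0) \leq p_1(1-p_1)$ is what allows the same analysis to cover both the $p_0 \to 0$ and the $p_0 \to 1$ regimes.

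The main obstacle is the state dependence of the effective rotation: because $p(\psi)$ is unknown and varies across $\ket{\psi}$, the Grover schedule $t$ must be chosen from the promise parameter $p_1$ alone, and any mismatch compounds linearly over the iterations, which is precisely why the final bound degrades as the product $p_0(1-p_0)$ shrinks. A secondary subtlety is that $\Pi_I$ has rank $n \gg 1$, so one cannot view the dynamics simply as a two-dimensional Grover rotation within $\mathrm{span}\{\ket{0}\ket{\phi_0(\psi)}, \ket{1}\ket{\phi_1(\psi)}\}$; instead, $Q\ket{\psi}\ket{0^k}$ generally lives in a superposition of Jordan blocks with slightly different angles, and showing that the aggregate output fidelity concentrates near $1$ when these angles are close but not identical is the quantitative heart of the argument.
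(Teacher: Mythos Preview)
The paper does not give its own proof of this lemma; it is stated and attributed to Watrous~\cite{Wat09}. So the relevant comparison is with Watrous's original argument, not anything in this paper.

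Your Jordan's-lemma framing is sound and is the right way to see why the problem is essentially two-dimensional: since the $\Pi_I$-vector $|v_j\rangle$ in each Jordan block of $(\Pi_G,\Pi_I)$ is of the form $Q|\psi_j\rangle|0^k\rangle$ for some $|\psi_j\rangle$, the hypothesis $|p(\psi)-p_1|\le\varepsilon$ for \emph{all} $\psi$ forces every Jordan angle to satisfy $|\sin^2\theta_j-p_1|\le\varepsilon$.

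The gap is in the amplifier itself. Fixed-schedule coherent amplitude amplification with $t=\lfloor\pi/(4\theta_1)\rfloor$ carries an overshoot error $|(2t+1)\theta_1-\pi/2|$ of order $\theta_1$ coming purely from the integrality of $t$, and this term is independent of $\varepsilon$. In the regime that matters for this paper one has $p_1\approx 1/2$, hence $\theta_1=\pi/4$, and then $|\sin((2t+1)\pi/4)|=1/\sqrt{2}$ for \emph{every} integer $t\ge 0$: the good amplitude is pinned at $1/\sqrt{2}$ and the fidelity with $|\phi_0(\psi)\rangle$ never exceeds $1/2$ no matter how many coherent Grover iterations you run. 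Your claimed defect $O(\varepsilon^2/(p_0(1-p_0))^2)$ tracks only the $\varepsilon$-mismatch across Jordan blocks and omits this dominant discreteness term; the outer ``rotate-and-correct'' loop you sketch cannot repair it if each inner round is still coherent Grover, because coherent Grover simply cannot push the amplitude past $|\sin((2t+1)\theta_1)|$.

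Watrous's actual construction is an \emph{incoherent} measure-and-correct loop: apply $Q$, measure the first qubit, and on outcome~$1$ apply the single correction $Q\bigl(I\otimes(2|0^k\rangle\langle0^k|-I)\bigr)Q^\dagger$ and measure again. In your Jordan picture (all $\theta_j\approx\theta_1$) one checks that a failed measurement collapses the state to $\sum_j d_j|u_j\rangle$, the correction sends this to $\sum_j d_j(\cos\theta_j|v_j\rangle+\sin\theta_j|v_j^\perp\rangle)$, and the next measurement succeeds with probability $\approx\sin^2(2\theta_1)=4p_1(1-p_1)$, outputting a state negligibly far from $|\phi_0(\psi)\rangle$; on failure one is back (up to phase and an $O(\varepsilon)$ drift) to the same bad state. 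The per-round measurement is precisely what kills the overshoot. The failure probability thus decays as $\bigl(1-4p_0(1-p_0)\bigr)^T$, giving $T=O\bigl(\log(1/\varepsilon)/(p_0(1-p_0))\bigr)$ and, after absorbing the $\varepsilon$-drift across rounds, the fidelity bound in the statement. If you want a coherent route that avoids overshoot, you would need fixed-point amplitude amplification (the $\pi/3$-trick or Yoder--Low--Chuang), not plain Grover with a fixed iteration count.
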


}
\section{Concurrent Quantum ZK Proof Systems: Definitions} 
\label{sec:cqzk}
\submversion{
\noindent We denote the security parameter by $\secparam$. 
\par We denote the (classical) computational indistiguishability of the two distributions $\distr_0$ and $\distr_1$ by $\distr_0 \approx_{c,\varepsilon} \distr_1$, where $\varepsilon$ is the distinguishing advantage. In the case when $\varepsilon$ is negligible, we drop $\varepsilon$ from this notation.
\par We define two distributions $\distr_0$ and $\distr_1$ to be quantum computationally indistinguishable if they cannot be distinguished by QPT distinguishers; we define this formally in the full version. We denote this by $\distr_0 \approx_{\quant,\varepsilon} \distr_1$, where $\varepsilon$ is the distinguishing advantage. We denote the process of an algorithm $A$ being executed on input a sample from a distribution $\distr$ by the notation $A(\distr)$.

\paragraph{Languages and Relations.} A language $\lang$ is a subset of $\{0,1\}^*$. A (classical) relation $\rel$ is a subset of $\{0,1\}^* \times \{0,1\}^*$. We use the following notation:
\begin{itemize}

\item Suppose $\rel$ is a relation. We define $\rel$ to be {\em efficiently decidable} if there exists an algorithm $A$ and fixed polynomial $p$ such that $(x,w) \in \rel$ if and only if $A(x,w)=1$ and the running time of $A$ is upper bounded by $p(|x|,|w|)$. 

\item Suppose $\rel$ is an efficiently decidable relation. We say that $\rel$ is a NP relation if $\lang(\rel)$ is a NP language, where $\lang(\rel)$ is defined as follows: $x \in \lang(R)$ if and only if there exists $w$ such that $(x,w) \in \rel$ and $|w| \leq p(|x|)$ for some fixed polynomial $p$. 

\end{itemize}
}
\noindent In Section~\ref{sec:bczknp}, we define the notion of bounded concurrent QZK for NP. In Section~\ref{sec:bczk:qma}, we define the notion of bounded concurrent ZK for QMA. We present the definition of quantum proof of knowledge in~\Cref{sec:qpok:def}. 

\subsection{Bounded Concurrent QZK for NP}
\label{sec:bczknp}
We start by recalling the definitions of the completeness and soundness properties of a classical interactive proof system. 

\begin{definition}[Proof System]
Let $\Pi$ be an interactive protocol between a classical PPT prover $P$ and a classical PPT verifier $V$. Let $\rel(\lang)$ be the NP relation associated with $\Pi$.
\par $\Pi$ is said to satisfy {\bf completeness} if the following holds: 
\begin{itemize} 
\item {\bf Completeness}: For every $(x,w) \in \rel(\lang)$, 
$$\prob[\accept \leftarrow \langle P(x,w),V(x) \rangle] \geq 1 - \negl(\secparam),$$
for some negligible function $\negl$. 
\end{itemize}
\par $\Pi$ is said to satisfy {\bf (unconditional) soundness} if the following holds: 
\begin{itemize}
    \item {\bf Soundness}: For every  prover $P^*$ (possibly computationally unbounded), every $x \notin \rel(\lang)$, 
    $$\prob\left[ \accept \leftarrow \langle P^*(x),V(x) \rangle  \right] \leq \negl(\secparam),$$
   for some negligible function $\negl$. 
\end{itemize}
\end{definition}

\fullversion{\begin{remark}
In Section~\ref{sec:pok}, we define a stronger property called proof of knowledge property that subsumes the soundness property. 
\end{remark}}

\submversion{\begin{remark}
We will later define a stronger property called proof of knowledge property that subsumes the soundness property. 
\end{remark}}

\noindent To define (bounded) concurrent QZK, we first define $Q$-session adversarial verifiers. Roughly speaking, a $Q$-session adversarial verifier is one that invokes $Q$ instantiations of the protocol and in each instantiation, the adversarial verifier interacts with the honest prover. In particular, the adversarial verifier can interleave its messages from different instantiations.  

\begin{definition}[$Q$-session Quantum Adversary]
\label{def:classical:qadv}
Let $Q \in \mathbb{N}$. Let $\Pi$ be an interactive protocol between a (classical) PPT prover and a (classical) PPT verifier $V$ for the relation $\rel(\lang)$. Let $(x,w) \in \rel(\lang)$. We say that an adversarial non-uniform QPT verifier $V^*$ is a {\bf $Q$-session adversary} if it invokes $Q$ sessions with the prover $P(x,w)$. 
\par Moreover, we assume that the interaction of $V^*$ with $P$ is defined as follows: denote by $V^*_i$ to be the verifier algorithm used by $V^*$ in the $i^{th}$ session and denote by $P_i$ to be the $i^{th}$ invocation of $P(x,w)$ interacting with $V^*_i$. Every message sent by $V^*$ is of the form $\left( \left(1,\msg_1 \right),\ldots,\left(Q,\msg_Q \right) \right)$, where $\msg_i$ is defined as: 
$$\msg_i = \left\{ \begin{array}{ll} \na, & \text{if }V^*_i\text{ doesn't send a message},\\ (\rnd,z), & \text{if }V^*_{i}\text{ sends }z\text{ in the round }\rnd  \end{array}\right. $$
$P_i$ responds to $\msg_i$. If $\msg_i=\na$ then it sets $\msg'_i=\na$. If $V_i^*$ has sent the messages in the correct order\footnote{That is, it has sent  $(1,z_1)$ first, then $(2,z_2)$ and so on.}, then $P_i$ applies the next message function on its own private state and $\msg_i$ to obtain $z'$ and sets $\msg'_i=(t+1,z')$. Otherwise, it sets  $\msg'_i=(\bot,\bot)$. Finally, $V^*$ receives $\left( (1,\msg'_1),\ldots,(Q,\msg'_Q) \right)$. In total, $V^*$ exchanges $\ell_{\smprot} \cdot Q$ number of messages, $\ell_{\smprot}$ is the number of the messages in the protocol.
\end{definition}

\noindent While the above formulation of the adversary is not typically how concurrent adversaries are defined in the concurrency literature, we note that this formulation is without loss of generality and does capture all concurrent adversaries. 
\par We define quantum ZK for NP in the concurrent setting below. 

\begin{definition}[Concurrent Quantum ZK for NP]
An interactive protocol $\Pi$ between a (classical) PPT prover $P$ and a (classical) PPT verifier $V$ for a language $\lang \in \np$ is said to be a {\bf concurrent quantum zero-knowledge (QZK) proof system} if it satisfies completeness, unconditional soundness and the following property: 
\begin{itemize} 

\item {\em Concurrent Quantum Zero-Knowledge}:  For every sufficiently large $\secparam \in \mathbb{N}$, every polynomial $Q=Q(\secparam)$, every $Q$-session QPT adversary $V^*$ there exists a QPT simulator $\simr$ such that for every $(x,w) \in \rel(\lang)$, $\poly(\secparam)$-qubit bipartite advice state, $\rho_{AB}$, on registers $A$ and $B$, the following holds: 
$$\view_{V^*} \left\langle P(x,w),V^*(x,\rho_{AB}) \right\rangle \approx_{\quant} \simr(x,\rho_{AB})$$
where $V^*$ and $\simr$ only have access to register $A$. In other words, only the identity is performed on register $B$.
\end{itemize}
\end{definition}

\noindent In this work, we consider a weaker setting, called bounded concurrency. The number of sessions, denoted by $Q$, in which the adversarial verifier interacts with the prover is fixed ahead of time and in particular, the different complexity measures of a protocol can depend on $Q$. 

\begin{definition}[Bounded Concurrent Quantum ZK for NP]
Let $Q \in \mathbb{N}$. An interactive protocol between a (classical) probabilistic polynomial time (in $Q$) prover $P$  and a (classical) probabilistic polynomial time (in $Q$) verifier $V$ for a language $\lang \in \np$ is said to be a {\bf bounded concurrent quantum zero-knowledge (QZK) proof system} if it satisfies completeness, unconditional soundness and the following property: 
\begin{itemize} 

\item {\em Bounded Concurrent Quantum Zero-Knowledge}: For every sufficiently large $\secparam \in \mathbb{N}$, every $Q$-session concurrent QPT adversary $V^*$, there exists a QPT simulator $\simr$ such that for every $(x,w) \in \rel(\lang)$, $\poly(\secparam)$-qubit bipartite advice state, $\rho_{AB}$, on registers $A$ and $B$, the following holds: 
$$\view_{V^*} \left\langle P(x,w),V^*(x,\rho_{AB}) \right\rangle \approx_{\quant} \simr(x,\rho_{AB})$$
where $V^*$ and $\simr$ only have access to register $A$. In other words, only the identity is performed on register $B$.

\end{itemize}
\end{definition}

\subsection{Bounded Concurrent QZK for QMA}
\label{sec:bczk:qma}
We start by recalling the definitions of completeness and soundness properties of a quantum interactive proof system for promise problems. 

\begin{definition}[Interactive Quantum Proof System for QMA]
$\Pi$ is an interactive proof system between a QPT prover $P$ and a QPT verifier $V$, associated with a promise problem $A=A_{\text{yes}}\cup A_{\text{no}} \in \qma$, if the following two conditions are satisfied.
\begin{itemize} 
\item {\bf Completeness}: For all $x \in A_{\text{yes}}$, there exists a $\poly(|x|)$-qubit state $\ket{\psi}$ such that the following holds: 
$$\prob[\accept \leftarrow \langle P(x,\ket{\Psi}),V(x) \rangle] \geq 1 - \negl(|x|),$$
for some negligible function $\negl$. 
\end{itemize}
\par $\Pi$ is said to satisfy {\bf (unconditional) soundness} if the following holds: 
\begin{itemize}
    \item {\bf Soundness}: For every  prover $P^*$ (possibly computationally unbounded), every $x \in A_{\text{no}}$, the following holds: 
    $$\prob\left[ \accept \leftarrow \langle P^*(x),V(x) \rangle  \right] \leq \negl(|x|),$$
   for some negligible function $\negl$. 
\end{itemize}

\end{definition}

\noindent To define bounded concurrent QZK for QMA, we first define the notion of $Q$-sesssion adversaries. 

\begin{definition}[Q-session adversary for QMA] Let $Q \in \mathbb{N}_{\geq 1}$. Let $\Pi$ be a quantum interactive protocol between a QPT prover and a QPT verifier $V$ for a $\qma$ promise problem $A=A_{\text{yes}}\cup A_{\text{no}}$. We say that an adversarial non-uniform QPT verifier $V^*$ is a Q-session adversary if it invokes $Q$ sessions with the prover $P(x,\ket{\psi})$.
\par As in the case of concurrent verifiers for NP, we assume that the interaction of $V^*$ with $P$ is defined as follows: denote by $V^*_i$ to be the verifier algorithm used by $V^*$ in the $i^{th}$ session and denote by $P_i$ to be the $i^{th}$ invocation of $P(x,w)$ interacting with $V^*_i$. Every message sent by $V^*$ is of the form $\left( \left(1,\msg_1 \right),\ldots,\left(Q,\msg_Q \right) \right)$, where $\msg_i$ is defined as: 
$$\msg_i = \left\{ \begin{array}{ll} \na, & \text{if }V^*_i\text{ doesn't send a message},\\ (\rnd,\rho), & \text{if }V^*_{i}\text{ sends the state }\rho\text{ in the round }\rnd  \end{array}\right. $$
$P_i$ responds to $\msg_i$. If $\msg_i=\na$ then it sets $\msg'_i=\na$. If $V_i^*$ has sent the messages in the correct order, $P_i$ applies the next message function (modeled as a quantum circuit) on $\msg_i$ and its private quantum state to obtain $\rho'$ and sets $\msg'_i=(t+1,\rho')$. Otherwise, it sets $\msg'_i=(\bot,\bot)$. Finally, $V^*$ receives $\left( (1,\msg'_1),\ldots,(Q,\msg'_Q) \right)$. In total, $V^*$ exchanges $\ell_{\smprot} \cdot Q$ number of messages, where $\ell_{\smprot}$ is the number of the messages in the protocol.
\end{definition}

\begin{remark}
To invoke $Q$ different  sessions, we assume that the prover has $Q$ copies of the witness state. 
\end{remark}

\begin{remark}
We assume, without loss of generality, the prover will measure the appropriate registers to figure out the round number for each verifier. This is because the malicious verifier can always send the superposition of the ordering of messages.
\end{remark}

\noindent We define quantum ZK for QMA in the bounded concurrent setting below. 

\begin{definition}[Bounded Concurrent QZK for QMA]
Let $Q \in \mathbb{N}$. An interactive protocol $\Pi$ between a QPT prover $P$ (running in time polynomial in $Q$) and a QPT verifier $V$ (running in time polynomial in $Q$) for a $\qma$ promise problem $\cA=\cA_{\text{yes}}\cup \cA_{\text{no}}$ if it satisfies completeness, unconditional soundness and the following property:
\begin{itemize}
    \item \textbf{{\em Bounded Concurrent Quantum Zero-Knowledge:}}  For every sufficiently large $\secparam \in \mathbb{N}$, for every $Q$-session QPT adversary $V^*$, there exists a QPT simulator $\simr$ such that for every $x \in \cA_{\text{yes}}$ and any witness $\ket{\psi}$, $\poly(\secparam)$-qubit bipartite advice state, $\rho_{AB}$, on registers $A$ and $B$, the following holds: 
$$\view_{V^*} \left\langle P(x,\ket{\psi}),V^*(x,\rho_{AB}) \right\rangle \approx_{\quant} \simr(x,\rho_{AB})$$
where $V^*$ and $\simr$ only have access to register $A$. In other words, only the identity is performed on register $B$.
\end{itemize}
\end{definition}

\subsection{Quantum Proofs of  Knowledge}
\label{sec:qpok:def}
We present the definition of quantum proof of knowledge; this is the traditional notion of proof of knowledge, except that the unbounded prover could be a quantum algorithm and specifically, its intermediate states could be quantum states.

\begin{definition}[Quantum Proof of Knowledge]
\label{def:pok}
We say that an interactive proof system $(P,V)$ for a NP relation $\rel$ satisfies $(\varepsilon,\delta)$-proof of knowledge property if the following holds: 
suppose there exists a malicious (possibly computationally unbounded prover) $P^*$ such that for every $x$, and quantum state $\rho$ it holds that:
$$\prob \left[ \left(\widetilde{\rho},\decision \right) \leftarrow \langle P^*(x,\rho),V(x) \rangle \bigwedge \decision = \mathsf{accept} \right] = \varepsilon $$
Then there exists a quantum polynomial-time extractor $\extractor$, such that: 
$$\prob \left[ \left(\widetilde{\rho}', \decision, w \right) \leftarrow \extractor \left( x,\rho \right) \bigwedge \decision = \mathsf{accept} \right] = \delta $$
Moreover, we require $T(\widetilde{\rho},\widetilde{\rho}')=\negl(|x|)$, where $T(\cdot,\cdot)$ denotes the trace distance and $\negl$ is a negligible function. 
\par We drop $(\varepsilon,\delta)$ from the notation if $|\delta-\varepsilon| \leq \negl(|x|)$, for a negligible function $\negl$. 
\end{definition}

\begin{remark}[Comparison with Unruh's Proof of Knowledge~\cite{Unruh12}]
Our definition is a special case of Unruh's quantum proof of knowledge definition. Any proof system satisfying our definition is a quantum proof of knowledge system (according to Unruh's definition) with knowledge error $\kappa$, for any $\kappa$. Moreover, in Unruh's definition, the extraction probability is allowed to be polynomially related to the acceptance probability whereas in our case, the extraction probability needs to be negligibly close to the acceptance probability.
\end{remark}

\begin{definition}[Concurrent Quantum ZK PoK]
We say that a concurrent (resp., bounded) quantum ZK is a concurrent (resp., bounded) QZKPoK if it satisfies proof of knowledge property. 
\end{definition}


\subsection{Intermediate Tool: Quantum  Witness-Indistinguishable Proofs for NP} For our construction, we use a proof system that satisfies a property called quantum witness indistinguishability. We recall this notion below. 
\begin{definition}[Quantum Witness-Indistinguishability] 
\label{def:qwi}
An interactive protocol between a (classical) PPT prover $P$  and a (classical) PPT verifier $V$ for a language $L \in \np$ is said to be a \textbf{quantum witness-indistinguishable proof system} if in addition to completeness, unconditional soundness, the following holds: 
\begin{itemize}
    \item {\bf Quantum Witness-Indistinguishability}: For every $x \in \lang$ and $w_1,w_2$ such that $(x,w_1) \in \rel(\lang)$ and $(x,w_2) \in \rel(\lang)$, for every QPT verifier $V^*$ with $\poly(\secparam)$-qubit advice $\rho$, the following holds: 
    $$\left\{ \view_{\vrfr^*}\left(\langle P(x,w_1),V^*(x,\rho) \right) \right\} \approx_{\quant} \left\{ \view_{\vrfr^*}\left(\langle P(x,w_2),V^*(x,\rho) \right) \right\}$$
\end{itemize}
\end{definition}

\paragraph{Instantiation.} By suitably instantiating the constant round WI argument system of Blum~\cite{Blu86} with statistically binding commitments (which in turn can be based on post-quantum one-way functions~\cite{Naor91}), we achieve a 4 round quantum WI proof system for NP. Moreover, this proof system is a public-coin proof system; that is, the verifier's messages are sampled uniformly at random.

\fullversion{

\fullversion{
\section{Bounded Concurrent QZK for NP}\label{sec:BCQZK}

}
\submversion{
\subsection{Construction}\label{sec:BCQZK}
}

\noindent We present the construction of quantum zero-knowledge proof system for NP in the bounded concurrent setting in Figure~\ref{fig:pcoinczk}. As remarked earlier, the construction is the same as the classical bounded concurrent ZK by Pass et al.~\cite{PTW09}, whereas our proof strategy is significantly different from that of Pass et al. 
\par The relation associated with the bounded concurrent system will be denoted by $\rel(\lang)$, with $\lang$ being the associated NP language. Let $Q$ be an upper bound on the number of sessions. We use the following tools in our construction. 
\begin{itemize}
    \item Statistically-binding and quantum-concealing commitment protocol\fullversion{ (see Section~\ref{sec:prelims:commit})}, denoted by $(\comm,\receiver)$.
    \item Four round quantum witness-indistinguishable proof system $\protwi$ (Definition~\ref{def:qwi}). The relation associated with $\protwi$, denoted by $\relwi$, is defined as follows: 
    $$\relwi = \Bigg\{ \left( \left(x,\bfr_1,\bfc_1,b'_1,\ldots,\bfr_{120Q^{7}\secparam},\bfc_{120Q^{7}\secparam},b'_{120Q^7\secparam}\right)\ ;\ \left(w,r_1,\ldots,r_{120Q^7\secparam} \right) \right)\ :\  (x,w) \in \rel(\lang) \bigvee $$
    $$\ \ \ \ \ \ \ \ \ \ \ \ \ \ \ \ \ \ \ \ \ \ \ \ \  \left( \exists j_1,\ldots,j_{60Q^7\secparam + Q^4\secparam} \in [120Q^7 \secparam] \text{  s.t. } \bigwedge_{i=1}^{60Q^7\secparam + Q^4 \secparam}  \comm(1^{\secparam},\bfr_{j_i},b'_{j_i};r_{j_i})=\bfc_{j_i}  \right) \Bigg\}$$
\end{itemize}

\begin{figure}[!htb]
   \begin{center} 
   \begin{tabular}{|p{13cm}|}
    \hline \\
    {\bf Input of $P$}: Instance $x \in \lang$ along with witness $w$.  \\
    {\bf Input of $V$}: Instance $x \in \lang$. \\
    \ \\
    \noindent {\bf Stage 1}: For $j=1$ to $120Q^{7}\secparam$, 
    \begin{itemize}
        \item $P \leftrightarrow V$: Sample $b_j \xleftarrow{\$} \{0,1\}$ uniformly at random. $P$ commits to $b_j$ using the statistical-binding commitment scheme. Let the verifier's message (verifier plays the role of the receiver) be $\bfr_j$ and let the prover's message be  $\bfc_j$.   
        \item $V \rightarrow P$:  Sample $b'_j \xleftarrow{\$} \{0,1\}$ uniformly at random. Respond with $b'_j$. 
    \end{itemize}
    {\em // We refer to one execution as a slot. So, $P$ and $V$ execute $120Q^7\secparam$ number of slots.}\\
    \ \\
    \noindent {\bf Stage 2}: $P$ and $V$ engage in $\protwi$ with the common input being the following: $$(x,\bfr_1,\bfc_1,b'_1,\ldots,\bfr_{120Q^7\secparam},\bfc_{120Q^7\secparam},b'_{120Q^7 \secparam})$$ Additionally, $P$ uses the witness  $(w,\bot,\ldots,\bot)$. \\
    \ \\
    \hline
   \end{tabular}
    \caption{Construction of classical bounded concurrent ZK for NP.}
    \label{fig:pcoinczk}
    \end{center}
\end{figure}

\ \\

\fullversion{
\noindent Observe that our construction is also a public-coin system. This follows from the fact that the instantiation of the four-round witness-indistinguishable proof system is a public-coin system. We are now ready to prove the following theorem. }

\fullversion{
\begin{theorem}
Assuming the security of $(\comm,\receiver)$ and $\protwi$, the construction in Figure~\ref{fig:pcoinczk} is a bounded concurrent QZK proof system. 
\end{theorem}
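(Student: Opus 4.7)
The proof proceeds by verifying completeness, unconditional soundness, and bounded concurrent QZK in turn. Completeness is immediate: the honest prover uses $w$ as the first component of the WI witness in Stage~2. For unconditional soundness, fix $x\notin\lang$. By the statistical binding property of $(\comm,\receiver)$ the bit $b_j$ committed inside $\bfc_j$ is (up to a negligible correction) determined before $V$ samples $b'_j$ uniformly, so each slot is matched with probability at most $\tfrac{1}{2}+\negl(\secparam)$; an additive Chernoff bound over the $120Q^7\secparam$ independent slots shows that more than $60Q^7\secparam+Q^4\secparam$ matches occur with probability at most $2^{-\Omega(Q^4\secparam)}$. Combined with the unconditional soundness of $\protwi$, this rules out all cheating provers for $x\notin\lang$.

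The heart of the argument is the construction of a simulator $\simr$. I partition the full concurrent transcript of $\ell_{\smprot}\cdot Q$ messages into $24Q^6\secparam$ equal-size blocks. $\simr$ runs the $Q$-session QPT verifier $V^*$ and the honest-prover next-message unitaries coherently, maintaining a snapshot register at the start of every block. At the end of each block it computes, into a single decision qubit $\decreg$, the following classical function of the transcript: if at least one slot completes (both commitment and response lie) inside this block, pick a uniformly random such slot and set $\decreg=0$ iff that slot is matched; otherwise copy a freshly sampled uniform bit into $\decreg$. It measures $\decreg$; on outcome $0$ it keeps the block and proceeds, on outcome $1$ it rewinds to the pre-block snapshot and retries with fresh simulator randomness, until success. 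Finally, for every session $i$, if the number of matched slots of session $i$ is at least $60Q^7\secparam+Q^4\secparam$, $\simr$ uses the corresponding slot indices as the second component of the WI witness in Stage~2; otherwise it aborts. To invoke Lemma~\ref{lem:watrous}, I need that at the end of a block the joint state has the canonical Watrous form $\sqrt{p}\ket{0}_{\decreg}\ket{\Psi_{\Good}}+\sqrt{1-p}\ket{1}_{\decreg}\ket{\Psi_{\Bad}}$ with $|p-\tfrac{1}{2}|\le\negl(\secparam)$ independently of the verifier's residual state. For empty-slot blocks this is immediate because $\decreg$ is a classical uniform bit; for non-empty blocks, the uniform-over-slot choice combined with the quantum-concealing property of $(\comm,\receiver)$ makes the match probability negligibly close to $\tfrac{1}{2}$, since a non-negligible deviation would, by an averaging argument over the slot choice, yield a QPT distinguisher against hiding on at least one slot. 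The lemma then produces a rewinding circuit driving each block into a state negligibly close to $\ket{\Psi_{\Good}}$, and these near-perfect block recoveries compose across the $24Q^6\secparam$ blocks into a transcript whose Stage~1 portion is negligibly close in trace distance to the honest interaction.

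What remains is (a) a purely classical probabilistic argument that, conditioned on the Watrous amplification succeeding in every block, each session accumulates at least $60Q^7\secparam+Q^4\secparam$ matches with overwhelming probability, and (b) a hybrid replacing the matched-slot witness by $w$ inside the Stage~2 WI proof, one session at a time. For (a) I follow the overview: with overwhelming probability at least $3Q^4\secparam$ blocks per session contain a session-$i$ slot actively chosen and matched by $\simr$ (a balls-and-bins count over the $24Q^6\secparam$ blocks together with a Chernoff bound, using that a slot is chosen with probability proportional to $1/|\mu(B)|$ and $|\mu(B)|\le Q$), while each of the remaining uncontested slots independently contributes a match with probability $\tfrac{1}{2}\pm\negl$; a second Chernoff push the total past $60Q^7\secparam+Q^4\secparam$. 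For (b), I apply the quantum WI property of $\protwi$ (Definition~\ref{def:qwi}) sequentially across the $Q$ sessions, against a QPT distinguisher that internally runs $\simr$ for all other sessions. The main obstacle I anticipate is precisely this WI reduction inside the coherent concurrent simulation: the reduction must embed an external WI prover into the $j$-th session while continuing to perform the block-level rewinds for the other sessions, which are entangled with the external session's transcript registers through the verifier's interleaving. I plan to handle this by exploiting the fact that $\simr$'s rewinding is confined to Stage~1 blocks and never touches Stage~2 messages, together with the oblivious-rewinding guarantee that the residual transcript state after each block is (up to negligible error) exactly the honest one that the external WI challenger expects, so the reduction can safely pass Stage~2 messages of the $j$-th session to the external prover without disturbing the coherent rewinds elsewhere.
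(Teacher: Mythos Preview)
Your overall skeleton---completeness, Chernoff for soundness, block-wise Watrous rewinding, matching-by-rigging plus matching-by-luck counting, then a session-by-session WI hybrid---tracks the paper's proof closely. The soundness bound and the counting in part~(a) are essentially what the paper does (the paper formalises the latter as two separate claims, one lower-bounding the number of blocks in which a session-$i$ slot gets chosen, and one Chernoff-bounding the lucky matches among the remaining slots).

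There is, however, a real gap in your plan for part~(b). Your assertion that ``$\simr$'s rewinding is confined to Stage~1 blocks and never touches Stage~2 messages'' is false in the concurrent setting. Blocks are carved out of the \emph{global} $\ell_{\smprot}\cdot Q$-message transcript, and because the $Q$ sessions are arbitrarily interleaved, a single block can simultaneously contain Stage~1 slots of some sessions and Stage~2 (WI) messages of others. The simulator \emph{does} apply Watrous rewinding to such blocks. So when you try to embed an external WI prover for the $j$-th session, the block-level rewinds will in general rewind through that session's WI messages, and you cannot simply ``pass Stage~2 messages of the $j$-th session to the external prover'' while the surrounding block is still being amplified.

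The paper's resolution is different from what you sketch, and it is the crux of the WI reduction. The reduction $\cB_i$ receives as non-uniform advice the simulator's state up to the point where the $i$-th WI begins (so all earlier blocks have already been Watrous-rewound). From that point on, $\cB_i$ runs \emph{straight-line}---no further block rewinding at all---and forwards the $i$-th session's WI messages to the external challenger. This change is justified not by any structural separation of stages, but by re-invoking the same indistinguishability already proved for the $\hybrid_{2.i}$ sequence: rewinding a block and not rewinding it produce computationally indistinguishable post-block states, by the quantum-concealing property of the commitments. Hence the reduction's straight-line suffix is indistinguishable from the fully-rewound simulator, and the WI challenge is correctly embedded. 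In short, the ``oblivious rewinding'' property is used twice: once to make Watrous's lemma applicable block by block, and once more to let the WI reduction \emph{turn off} rewinding past the target WI without detection. Your proposal captures the first use but misses the second.

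A minor remark: your description of Watrous rewinding as ``maintaining a snapshot register \ldots\ rewinds to the pre-block snapshot and retries with fresh randomness'' is not what Lemma~\ref{lem:watrous} does---there is no snapshot (that would require cloning). The $\amplifier$ circuit is an amplitude-amplification-style procedure built from $U_j^{V^*}$ and its inverse. Since you do invoke the lemma, this may just be loose phrasing, but be careful not to rely on any snapshot intuition in the actual argument.
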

\begin{proof}
We prove the completeness, soundness and the quantum zero-knowledge properties. 

\paragraph{Completeness.} This follows from the completeness of $\protwi$. \\

\noindent Before we prove soundness and quantum zero-knowledge, we first give the following useful definition.

\begin{definition}[Matched Slot]
We say that a slot is matched if the bit committed by $P$ equals $V$'s response. 
\end{definition}

\paragraph{Soundness.} To argue soundness, we need to argue that with probability negligibly close to 1, the number of matched slots in a transcript, associated with an instance not in the language, is less than $60Q^7 \secparam + Q^4 \secparam$. 

Let $P^*$ be the malicious prover and let $x \notin \lang$.  Denote by  $\bfc_1,\ldots,\bfc_{120Q^7\secparam}$, the commitments produced by $P^*$ in Stage 1. 
\par We first observe that  $(x,\bfr_1,\bfc_1,b'_1,\ldots,\bfr_{120Q^7\secparam},\bfc_{120Q^7\secparam},b'_{120Q^7 \secparam}) \notin \relwi$ with probability negligibly close to 1. By the statistical binding property of the underlying commitment scheme, we have that for every $j \in [60Q^7\secparam + Q^4\secparam]$, there exists a $b_j$ such that $\bfc_{j}$ (prover's message in the $j^{th}$ slot) is a commitment of $b_j$ with respect to some randomness. Let $\rv_j$ be a random variable such that $\rv_j=1$ if $b_j=b'_j$, where $b'_j$ is the bit sent by $V$. The following holds (over the randomness of the verifier):
\begin{eqnarray*}
& & \prob_{}\left[ \exists j_1,\ldots,j_{60Q^7\secparam + Q^4\secparam} \in [120Q^7 \secparam] \text{ s.t. } \bigwedge_{i=1}^{60Q^7\secparam + Q^4 \secparam}  \left( \comm(1
^{\secparam},\bfr_{j_i},b_{j_i};r_{j_i})=\bfc_{j_i} \bigwedge b_{j_i}=b'_{j_i} \right) \right] \\
& = & \prob\left[ \sum_{j=1}^{120Q^7\secparam} \rv_j \geq 60Q^7\secparam + Q^4\secparam \right] \\
& \leq & e^{-\frac{(Q^4\secparam)^{2}}{3(60Q^7 \secparam)}}\ \text{(By Chernoff Bound)} \\
& = & e^{-\frac{Q\secparam}{180} } \\
& = & \negl(\secparam)
\end{eqnarray*}
\noindent The above observation, combined with the fact that $x \notin \lang$, proves the following holds: $$(x,\bfr_1,\bfc_1,b'_1,\ldots,\bfr_{120Q^7\secparam},\bfc_{120Q^7\secparam},b'_{120Q^7 \secparam}) \notin \relwi$$ 
with probability negligibly close to 1.


\subsection{Quantum Zero-Knowledge} 
Let the malicious QPT verifier be $V^*$. We start by describing some notation. \\

\noindent {\em Parameters}. 
\begin{itemize}
    \item $\ell_{\smprot}$ denotes the number of messages in any given protocol. 
    
    \item We divide the messages exchanged by the simulator with all the sessions into blocks. Let $L$ denote the number of blocks. We set $L=24 Q^6 \secparam$.
    
    \item $\ell_{\smslot}$ denotes the number of slots in Stage 1 of the protocol. That is, $\ell_{\smslot}=120Q^7 \secparam$. Note that every slot contains three messages. We have  $\ell_{\smprot}=3\ell_{\smslot}+\smsecmsg$. 
    \item $\ell_B$ denotes the number of messages contained inside one block. Note that $\ell_B = \frac{\ell_{\smprot} \cdot Q}{L}$. 
    \item $B_i$ denote the $i^{th}$ block.
    \item $\bfN_i$ to be number of blocks containing at least one slot of the $i^{th}$ verifier. 
\end{itemize}

\noindent {\em Registers used by the simulator}: The quantum simulator uses the following registers: 
\begin{itemize}
    \item $\randreg_t$, for $t \in [\ell_{\smprot} \cdot Q]$: it contains the input and randomness used by the simulator to compute the $t^{th}$ message in the transcript; a transcript consists of all the messages in the $Q$ sessions. 
    
    \item $\simreg_t$, for $t \in [\ell_{\smprot} \cdot Q]$: it contains the $t^{th}$ message if it is sent by the simulator. 
    
    \item $\verreg_t$, for $t \in [\ell_{\smprot} \cdot Q]$: it contains the $t^{th}$ message if it is sent by the malicious verifier $V^*$. 
    
    \item $\slotreg_i$, for $i \in [L]$: it contains the matched slots of the $i^{th}$ block. 

    \item $\bitreg_i$, for $i \in [Q]$: this is a single-qubit register that  contains a bit that indicates whether the simulator needs to use the witness or the matched slots to compute the $i^{th}$ WI proof (where the ordering is determined based on the point of arrival of WI messages).  
    
    \item $\witreg$: it contains the NP witness. 
    
    \item $\auxreg$: it contains the private state of the verifier. It is initialized with the auxiliary state of the verifier. 
    
      \item $\decreg$: it contains the decision register that indicates whether to rewind or not. 
      
      \item $\regX$: this is a $\poly(\secparam)$-qubit ancillary register.
      
\end{itemize}

\medskip

\newcommand{\regV}{\mathsf{V}}
\noindent{\underline{Description of $\mathsf{Sim}^{V^*}(1^\secparam, x, \ket{\Psi}):$}}
\begin{enumerate}
    \item For any $w$, let $\ket{\Psi_{0,w}}$ denote the following state:
    $$  \ket{\Psi_{0,w}} = \left( \overset{\ell_{\smprot}\cdot Q}{\underset{t=1}{\bigotimes}} \ket{0}_{\randreg_{t}}\ket{0}_{\simreg_t} \ket{0}_{\verreg_t} \right) \otimes \left( \overset{L}{\underset{j=1}{\bigotimes}} \ket{0}_{\slotreg_j}  \right) \otimes \left( \overset{Q}{\underset{i=1}{\bigotimes}} \ket{0}_{\bitreg_i}  \right) \otimes \ket{w}_{\witreg} \otimes \ket{\Psi}_{\auxreg} \otimes \ket{0}_{\decreg} \otimes \ket{0^{\otimes \poly(\secparam)}}_{\regX} $$
    Initialize the state $\ket{\Psi_{0,\bot}}$. 
    \item For all $j=\{1,2,\ldots,L\}$, let $U_j^{V^*}$ be the unitary that performs the following operations ((a) and (b)) in superposition.
    
    \begin{enumerate}
     \item For all integers $t \in [(j-1)\ell_B+1, j\ell_B]$: 
    \begin{itemize}
        \item If the $t^{th}$ message is a Stage 1 message from the prover responding to the first session message of a slot,  apply the following operation in superposition over the receiver's message\footnote{We assume without loss of generality that the length of the sender's randomness in the commitment scheme is $\secparam$.}: $$\ket{\bfr}_{\verreg_t}\ket{0}_{\randreg_{t}}\ket{0}_{\simreg_t} \rightarrow \frac{1}{\sqrt{2^{\secparam+1}}}\underset{b \in \{0,1\},r \in \{0,1\}^\secparam}{\sum} \ket{\bfr}_{\verreg_t}\ket{b,r}_{\randreg_t}\ket{\comm(1^\secparam, \bfr,b;r)}_{\simreg_t},$$
        while leaving all the other registers intact. Note that we can prepare this state efficiently by first applying $H^{\otimes (\secparam + 1)}$ to the $\randreg_t$ register followed by applying $\comm$ in superposition and storing the output in the $\simreg_t$ register.
        \item If the $t^{th}$ message is a verifier's message, apply $V^*$ on the registers corresponding to the transcript of the protocol until the $t^{th}$ message (i.e. registers $\{(\simreg_i)\}_{i  \leq t},\{\verreg_i\}_{i < t}$, $\auxreg$) and on $\auxreg$ register that corresponds to the verifier's private state, and output in the register $\verreg_t$. 
        \item If the $t^{th}$ message is a Stage 2 message from the prover responding to the $i^{th}$ WI initiated by the verifier (this just means that so far, $(i-1)$ WIs from $(i-1)$ sessions have already been initiated in the transcript), let $w$ be the string in the register $\witreg$. Let $c_i$ be the bit in register $\bitreg_i$.  If $c_i = 1$, use $w$ as the witness to the WI proof. If $c_i=0$, check if at least $\frac{\ell_{\smslot}}{2}+Q^4 \secparam$ matched slots corresponding to the session whose WI message is being computed. If so, compute the WI of Stage 2 using these matched slots. Otherwise, abort and output $\bot$ on register $\simreg_t$\footnote{It may not be clear why we need this register. However, having this register would help us in the presentation of the hybrids.}.

    \end{itemize}
    \item Let $T$ contain the transcript of messages sent in block $B_j$ along with the input and randomness used by the simulator to create these messages (i.e. the string stored in the registers $\{(\randreg_t,\simreg_t,\verreg_t)\}_{i \in B_j}$), and let $\mu(T)$ denote the set of all slots that are inside $B_j$ in the transcript $T$. In superposition, perform the unitary $U'$ defined below. Let $I$ be a register containing a subset of qubits in $\regX$. We omit the subscripts of the registers associated with the transcript $T$. 
    \begin{eqnarray*}
   & &  U' \ket{T} \ket{0}_{\slotreg_j} \ket{0}_{\decreg} \ket{0^{\otimes |I|}}_{I} \\
    & \approx & \ket{T} \otimes \left(\frac{1}{\sqrt{|\mu(T)|}} \sum_{(\bfc,b') \in \mu(T)}  \ket{\bfc,b'}_{\slotreg_j}\ket{1 \oplus \mathsf{Match}(T,\bfc,b'))}_{\decreg} \ket{\phi_{\bfc,b'}}_I \right) \text{ if } \mu(T) \neq \emptyset \\
    & = & \ket{T} \ket{0}_{\slotreg_j} \ket{+}_{\dec} \ket{0^{\otimes |I|}}_I \text{ if } \mu(T)=\emptyset
    \end{eqnarray*}
    where $\mathsf{Match}(T,\bfc,b') = 1$ if $\bfc$ is a commitment to $b'$ and $0$ otherwise. $\ket{\phi_{\bfc,b'}}$ is some auxiliary state. Note that $T$, in addition to containing the transcript of messages exchanged in $B_j$, also contains the input and the randomness used by the simulator to create these messages. 
    \par By $\approx$, we mean the following: we say $\ket{\phi_0} \approx \ket{\phi_1}$ if both the states $\ket{\phi_0}$ and $\ket{\phi_1}$ are exponentially close (in trace distance) to each other. To see how we can obtain the above state, the unitary $U'$ creates uniform superpositions over $[1],[2],\ldots,[|T|]$. Then, $U'$ determines $\mu(T)$ and uses the uniform superposition over $[|\mu(T)|]$ to create a uniform superposition over $\ket{\bfc,b'}$. 
    
    \end{enumerate}
Let $W_{j} = \amplifier \left( U_j^{V^*} \right)$; where $\amplifier$ is the circuit guaranteed by Lemma~\ref{lem:watrous}. Simulator computes $\ket{\Psi_{j,\bot}} = W_j \ket{\Psi_{j-1,\bot}}$. 
    
    \item For all $t \in \{1,...,\ell_{\smprot}\cdot Q \}$, measure all the $\simreg_t$ and $\verreg_t$ registers in the computational basis, and output the measurement outcomes along with the resulting state in the $\auxreg$ register. In other words, let $Y$ be the measurement outcome after measuring the registers corresponding to the protocol's transcript. Then, output $Y$ along with
    $$\widetilde{\rho} = \frac{\tr_{\overline{\aux}} \left[ \Pi_{Y} \ket{\Psi_{L,\bot}}\bra{\Psi_{L,\bot}} \Pi_{Y}\right]}{\tr \left[ \Pi_{Y} \ket{\Psi_{L,\bot}}\bra{\Psi_{L,\bot}} \Pi_{Y}\right]} $$
    where $\Pi_Y$ projects the registers $(\simreg_1,\verreg_1,\ldots,\simreg_{\ell_{\smprot}\cdot Q}, \verreg_{\ell_{\smprot}\cdot Q})$ onto $Y$. By $\tr_{\overline{\aux}}[\cdot]$, we mean the operation of tracing out all the registers except $\aux$. 
    \end{enumerate}

\begin{remark}
Using the description of the unitaries $U_i^{V^*}$ as above, note that for any $(x,w) \in \rel(\lang)$, if the prover and the verifier ran their protocol in superposition (and never measured), their combined output would be $U_L^{V^*} \cdots U_1^{V^*} \left( I \otimes X^{\otimes_{j \in [Q]} \bitreg_j} \right) \ket{\Psi_{0,w}}$, where $X^{\otimes_{j \in [Q]} \bitreg_j}$ is Pauli X's applied to the $\{\bitreg_i\}_{i \in [Q]}$ registers and $I$ is the identity operator applied on the rest of the registers. On the other hand, the state obtained by the simulator just before the final partial measurement is $W_L \cdots W_1 \ket{\Psi_{0,\bot}}$.
\end{remark} 

\noindent We will show that for any verifier's auxiliary state $\ket{\Psi}$, the output of this simulator is indistinguishable from the output of the verifier when interacting with the honest prover.

\begin{lemma}
For any $(x,w) \in \rel(\lang)$, and for any auxiliary $\poly(\secparam)$-qubits state\footnote{We can assume without of generality, via the process of purification, that the input state of the verifier is a pure state.} $\ket{\Psi}$, the output of $\Simu^{V^*}(1^\secparam,x, \ket{\Psi})$ is computationally indistinguishable from $\view_{V^*} \left\langle P(x,w),V^*(x,\ket{\Psi}) \right\rangle$.
\end{lemma}

\begin{proof}

We will proceed with a series of hybrids.\\

\noindent \underline{$\hybrid_0$}: The output of this hybrid is the output of the verifier when interacting with the honest prover.\\

\noindent \underline{$\hybrid_1$}: Define a hybrid simulator $\hybrid_{1}.\Simu^{V^*}(x,w,\ket{\Psi})$ that behaves like the honest prover, but performs the execution of the prover and the verifier in all the sessions in superposition. This simulator first prepares the state $U_L^{V^*} \ldots U_1^{V^*} \left( I \otimes X^{\otimes_{j \in [Q]} \bitreg_j} \right) \ket{\Psi_{0,w}}$, then, it measures the registers corresponding to the transcript (that is, $\{(\simreg_t,\verreg_t)\}_{t \in [\ell_{\smprot}]]}$) and outputs the measurement outcome along with the resulting verifier's private state.\\

\noindent The distribution of outputs in $\hybrid_0$ and $\hybrid_1$ are identical, since measurements can be deferred to the end by the {\em principle of deferred measurement}.\\

\noindent \underline{$\hybrid_{2.i}$, for $i=1$ to $L$}: Consider the following sequence of hybrid simulators, $\hybrid_{2.i}.\Simu^{V^*}(x,w)$, that behaves like $\hybrid_1.\Simu^{V^*}(x,w)$, but perform Watrous' rewinding on blocks $B_1,\ldots,B_i$. In other words, instead of performing the unitary $U_i^{V^*}$, it performs $W_i=\amplifier\left( U_i^{V^*} \right)$. This means that $\hybrid_{2.i}.\Simu^{V^*}(x,w,\ket{\Psi})$ computes:
$$U_L^{V^*}\cdots U_{i+1}^{V^*}W_i\cdots W_1 \left( I \otimes X^{\otimes_{j \in [Q]} \bitreg_j} \right) \ket{\Psi_{0,w}}$$
The final partial measurement is performed as in the previous hybrid.\\

\noindent We defer the proof of the following claim to Section~\ref{sec:quanconc}.

\begin{claim}
\label{clm:quanconc}
Assuming that $\comm$ satisfies hiding against quantum polynomial-time adversaries, the output distributions of the verifier in $\hybrid_{2.i}$ is computationally indistinguishable from the output distribution of the verifier in $\hybrid_{2.i+1}$. 
\end{claim}

\noindent \underline{$\hybrid_{3.i}$ for $i \in [Q]$}: Define a hybrid simulator $\hybrid_{3.i}.\Simu^{V^*}$ that behaves like $\hybrid_{2.L}$ except that it does not applies the initial bit flip $X$ on registers $\bitreg_k$ for all $k \leq i$. Formally, hybrid $\hybrid_{3.i}$ computes:
$$W_LW_{L-1}\ldots W_1 \left(I \otimes X^{\otimes_{j > i } \bitreg_j}\right) \ket{\Psi_{0,w}}. $$ This change means that in Stage 2 of the protocol, for the sessions that initiate the first $i$ WI protocols, the hybrid simulator $\hybrid_{3.i}.\Simu$ will use matched slots instead of the actual witness to compute the WI proof. For the rest of the sessions, the hybrid simulator still uses the witness $w$ to produce the WI proof.  \\

\noindent We defer the proof of the following claim to Section~\ref{sec:mainclaim}.

\begin{claim}\label{ind WI}
Assuming the witness-indistinguishability property of $\protwi$, the output distributions of the hybrids $\hybrid_{3.i}$ and $\hybrid_{3.i+1}$ are computationally indistinguishable. 
\end{claim}

\noindent \underline{$\hybrid_{4}$}: The output of this hybrid is the output of the simulator. \\ \ \\
\noindent The output distributions of $\hybrid_{3.L}$ and $\hybrid_4$ are identical. 
\end{proof}
\end{proof}

\subsubsection{Proof of Claim~\ref{clm:quanconc}}
\label{sec:quanconc}
\noindent We prove this in the following steps: 
\begin{enumerate} 

\item First, we reduce proving the indistinguishability of $\hybrid_{2.i}$ and $\hybrid_{2.i-1}$ to proving the following statement: the following two distributions are computationally indistinguishable.
\begin{itemize}
    \item $\distr_1$: Measure the $\{\simreg_t,\verreg_t\}_{t \leq i}$ registers at the end of execution of the block $B_i$ in $\hybrid_{2.i-1}$ and output the measurement outcome along with the residual state in the register $\auxreg$. 
    \item $\distr_2$: Measure the $\{\simreg_t,\verreg_t\}_{t \leq i}$ registers at the end of execution of the block $B_i$ in $\hybrid_{2.i}$ and output the measurement outcome along with the residual state in the register $\auxreg$. 
\end{itemize}

\item Next, we show the indistinguishability of $\distr_1$ and $\distr_2$ by using Watrous rewinding and quantum-concealing property of the commitments. 

\end{enumerate} 

\noindent Bullet 1 follows from the fact that the registers $\{\simreg_t,\verreg_t\}_{t \leq i}$ are never written upon after the execution of Block $B_i$ and hence measurement operators applied on these registers in the end commute with the unitaries applied after the execution of $B_i$. 
\par For Bullet 2, we first make some observations on the state obtained in $\hybrid_{2.i}$ after applying Watrous rewinding.

\paragraph{Applying Watrous Rewinding.} Let $\ket{\Psi_{0,w}^{i-1}} = W_{i-1}\ldots W_1 \left( I \otimes X^{\otimes_{j \in [Q]} \bitreg_j} \right) \ket{\Psi_{0,w}}$. Without loss of generality,  we can write $U_i^{V^*} \ket{\Psi_{0,w}^{i-1}}$ the following way: 

\begin{eqnarray*}
U_i^{V^*} \ket{\Psi_{0,w}^{i-1}} & = & \sqrt{q} \ket{\Phi_{i,\noslot}} \ket{+}_{\decreg} + \sqrt{(1-q) } \ket{\Phi_{i,\slot}}\\ 
\end{eqnarray*}
\noindent where:
\begin{itemize}
    \item $\ket{\Phi_{i,\noslot}}$ is a  superposition of all the transcripts containing no slot in the $i^{th}$ block $B_i$. This is defined on all the registers except the $\decreg$ register. 
    \item $\ket{\Phi_{i,\slot}}$ is a superposition of all the transcripts containing at least one slot in the $i^{th}$ block $B_i$. This is defined on all the registers. 
\end{itemize}
Furthermore, $\ket{\Phi_{i,\slot}}$ can be written as $\sqrt{p(\Phi_{i,\slot})} \ket{\Phi_{\yes}}\ket{0}_{\decreg} + \sqrt{1-p(\Phi_{i,\slot})} \ket{\Phi_{\no}}\ket{1}_{\decreg}$, for some states $\ket{\Phi_{\yes}}$,  $\ket{\Phi_{\no}}$ and some function $p(\cdot)$. We first claim the following. 

\begin{claim}
\label{clm:quanconc:inter}
Assuming quantum concealing property of $(\comm,\receiver)$, the following holds: $$\left|p(\Phi_{i,\slot}) - \frac{1}{2} \right| \leq \negl(\secparam)$$ 
\end{claim}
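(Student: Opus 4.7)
The plan is to reduce the claim to the quantum-concealing property of $(\comm,\receiver)$ via a hybrid argument over the slot positions inside block $B_i$. To set up the quantity, I first unpack $p(\Phi_{i,\slot})$ using the definitions: since in each slot position the prover prepares the coherent superposition $\frac{1}{\sqrt{2\cdot 2^{\secparam}}}\sum_{b,r}\ket{b,r}_{\randreg_t}\ket{\comm(\bfr,b;r)}_{\simreg_t}$ and $U'$ then chooses a uniformly random slot from $\mu(T)$, the quantity $p(\Phi_{i,\slot})$ equals the probability, in the induced quantum distribution over transcripts $T$ with $\mu(T)\neq\emptyset$ and a uniformly chosen slot $(\bfc,b')\in\mu(T)$, that the bit committed in $\bfc$ coincides with the verifier's response $b'$.

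For a bound $M=\poly(\secparam)$ on the number of Stage 1 slot prover-messages inside $B_i$, I define hybrid unitaries $\widetilde{U}_i^{(0)}=U_i^{V^*},\widetilde{U}_i^{(1)},\dots,\widetilde{U}_i^{(M)}$ that agree with $U_i^{V^*}$ except that in the first $k$ slot positions of $B_i$ the commitment register is prepared as $\frac{1}{\sqrt{2\cdot 2^{\secparam}}}\sum_{b,r}\ket{b,r}\ket{\comm(\bfr,0;r)}$, so that the commitment is information-theoretically independent of the bit stored in $\randreg_t$. The crux is to show that each pair of neighboring hybrids produces quantum-computationally indistinguishable output states. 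This follows by a reduction to quantum concealing: upon receiving a receiver message $\bfr^*$ and a classical challenge $c^*=\comm(\bfr^*,b^*;r^*)$, the reduction embeds $c^*$ into the $b=1$ branch of the $(k+1)$-th slot while pairing it with a self-generated $\comm(\bfr^*,0;r'')$ in the $b=0$ branch, and then runs the remainder of $U_i^{V^*}$ before handing the output to the distinguisher; a mild sub-hybrid averaging over the commitment randomness of the target slot bridges the superposition over $r$ with the challenger's fixed $r^*$. Chaining the $M=\poly(\secparam)$ indistinguishability hops gives $\widetilde{U}_i^{(0)}\ket{\Psi^{i-1}_{0,w}}\approx_{\quant}\widetilde{U}_i^{(M)}\ket{\Psi^{i-1}_{0,w}}$.

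In the fully-replaced hybrid $\widetilde{U}_i^{(M)}$, every commitment register in $B_i$ is information-theoretically independent of the uniform bit held in the corresponding $\randreg_t$. Since $V^*$ acts only on commitment and private registers, its responses $b'$ are information-theoretically independent of the $b$'s, so the matching event $b=b'$ at a uniformly chosen slot in $\mu(T)$ has probability exactly $\tfrac{1}{2}$; thus the hybrid analogue $\widetilde{p}(\Phi_{i,\slot})=\tfrac{1}{2}$ exactly. The projection onto the $\slot$-branch is efficient because $\mu(T)$ is computable from the classical transcript $T$, so combined with the $\decreg$ measurement it forms an efficient two-outcome measurement; the quantum indistinguishability from the chain therefore transfers to $|p(\Phi_{i,\slot})-\tfrac{1}{2}|\le\negl(\secparam)$. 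The main obstacle will be the reduction to concealing in the hybrid step: the concealing game only offers a single classical challenge $c^*$ for a single bit $b^*$ with a single randomness $r^*$, whereas the hybrid states are coherent superpositions in both $b$ and $r$ at the target slot. Splitting the $b$-superposition across the challenge and a self-generated commitment, and averaging the challenger's randomness via an auxiliary sub-hybrid, is what makes this reduction go through without losing a super-polynomial factor.
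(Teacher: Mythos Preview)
Your approach is correct in outline but takes a genuinely different route from the paper. The paper gives a direct single-reduction argument: it sets up the one-commitment guessing game (given $\comm(b;r)$ for random $b,r$, output $b'$; by concealing the win probability is negligibly close to $\tfrac12$), runs this game coherently, and then asserts that by ``suitably instantiating'' the adversary $\mathcal{A}$ with the verifier together with the slot-selection step $U'$ (and folding all other block-$B_i$ activity into $\mathcal{A}$'s auxiliary state $\ket{\Phi}$), the coherent game state coincides with $\ket{\Phi_{i,\slot}}$, so $p(\Phi_{i,\slot})=p'$. In other words, the paper collapses the whole block into a single guessing adversary rather than hybridizing over slot positions. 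Your hybrid-over-slots argument is more systematic and arguably more transparent about where each negligible loss comes from; the paper's argument is shorter but leaves the instantiation of $\mathcal{A}$ (in particular, how the \emph{randomly chosen} slot in $\mu(T)$ becomes the single challenge position) to the reader.

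One point in your write-up deserves tightening: the ``mild sub-hybrid averaging over the commitment randomness'' that is supposed to reconcile the challenger's fixed $(c^*,r^*)$ with the coherent $\sum_r\ket{r}\ket{\comm(\bfr,b;r)}$ is doing real work and is not obviously sound as described. The coherent state you want to emulate has the $r$-register \emph{entangled} with the commitment, so tracing out $r$ leaves off-diagonal terms $\sum_r\ketbra{\comm(\bfr,0;r)}{\comm(\bfr,1;r)}$ between the $b=0$ and $b=1$ branches; your reduction's state (with junk in the $r$-register and independent $r'',r^*$) does not reproduce these. The clean fix---and what makes your final-hybrid computation ``$\widetilde p=\tfrac12$'' valid---is to observe first that the distinguishing measurement you apply (project onto $\mu(T)\neq\emptyset$, then read $\decreg$) touches only the $b$-part of $\randreg_t$ via $\mathsf{Match}$ and the transcript registers; it never reads the $r$-part. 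Equivalently, by deferred measurement you may measure all the $(b_t,r_t)$ at preparation time, reducing $p(\Phi_{i,\slot})$ to the classical probability $\Pr[b_s=b'_s\mid \mu(T)\neq\emptyset]$, after which your hybrid step becomes the standard concealing reduction with no superposition issues. Stating this explicitly removes the need for any ad hoc ``sub-hybrid averaging.''
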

\begin{proof} 
By the quantum-concealing property of $\comm$, any QPT adversary $\adversary$, with auxiliary state $\ket{\Phi}$, can win the following game with probability negligibly close to $\frac{1}{2}$: given a commitment $c = \comm(b;r)$, where $b\xleftarrow{\$} \{0,1\}$ and $r\xleftarrow{\$} \{0,1\}^{\secparam}$, we say that $\adversary$ wins if it outputs $b'=b$. 
\par We execute the above experiment in superposition: 
\begin{itemize}
\item $\cA$ sends the first commitment message, $\bfr$.
\item Challenger prepares the following state (omitting the register containing $\bfr$): $$\frac{1}{\sqrt{2^{\secparam + 1}}} \underset{b\in \{0,1\}, r\in \{0,1\}^{\secparam}}{\sum} \ket{b,r}_X\ket{\comm(1^\secparam,b,\bfr;r)}_Y \ket{0}_{Z}\ket{\Phi}_{\auxreg}\ket{0}_{\decreg}$$
\item $\cA$ is computed (over the registers $Y,Z,\auxreg$) in superposition: $$ \frac{1}{\sqrt{2^{\secparam + 1}}} \underset{b\in \{0,1\}, r\in \{0,1\}^{\secparam}}{\sum} \ket{b,r}_X \ket{\comm(1^\secparam, \bfr,b;r)}_Y \ket{\cA(\comm(b;r))}_Z  \ket{\Phi'}_{\auxreg} \ket{0}_\decreg $$ 

\item The challenger computes the following:
$$ \frac{1}{\sqrt{2^{\secparam + 1}}} \underset{b\in \{0,1\}, r\in \{0,1\}^{\secparam}}{\sum} \ket{b,r}_X \ket{\comm(1^\secparam, \bfr,b;r)}_Y \ket{\cA(\comm(1^\secparam,\bfr,b;r))}_Z  \ket{\Phi'}_{\auxreg} \ket{b \oplus \adversary(\comm(b;r))}_\decreg $$
\end{itemize}

We can rewrite the above state as follows: $$ \sqrt{p'} \ket{\phi_0} \ket{0}_\decreg +\sqrt{1- p'} \ket{\phi_1} \ket{1}_\decreg $$
From the above game, it follows that $p'$ is negligibly close to $\frac{1}{2}$. Moreover, if we suitably instantiate $\adversary$ (using the verifier) and $\ket{\Phi}$, it follows that $\ket{\Phi_{\yes}}=\ket{\phi_0}$ and $\ket{\Phi_{\no}}=\ket{\phi_1}$. Thus, we have $p(\Phi_{i,\slot})$ to be negligibly close to $\frac{1}{2}$.

\end{proof}
\noindent Using above, we write $U_{i}^{V^*} \ket{\Psi_{0,w}^{i-1}}$ as follows: 
\begin{eqnarray*}
U_{i}^{V^*} \ket{\Psi_{0,w}^{i-1}} 
& = & \sqrt{p(\Phi_{i,\slot})} \ket{\Psi_{i,\Good}}\ket{0}_{\decreg} + \sqrt{1-p(\Phi_{i,\slot})} \ket{\Psi_{i,\Bad}}\ket{1}_{\decreg},
\end{eqnarray*}
where $\ket{\Psi_{i,\Good}}$ is a superposition over transcripts such that either one of the following two conditions are satisfied: (i) the slot chosen in the $i^{th}$ block is matched or, (ii) the verifier aborts and the simulator decides to not rewind. Similarly, we can define $\ket{\Psi_{i,\Bad}}$. 
\noindent Define $p_1 = \frac{1}{2}$ and $p_0 = 0.49$. We note that the following holds:
\begin{itemize}
    \item $|p(\Phi_{i,\slot}) - p_1| \leq \varepsilon$, where $\varepsilon = \nu(\secparam)$, for some negligible function $\nu(\cdot)$ and,
    \item $p_0(1-p_0) \leq p_1(1-p_1)$ and, 
    \item $p_0 \leq p(\Phi_{i,\slot})$.  
\end{itemize}
Thus, from the Watrous rewinding lemma (Lemma~\ref{lem:watrous}),  $\amplifier\left(U_i^{V^*} \right)$ outputs a circuit $W_i$, of polynomial size, such that $W_i$ on input the state $\ket{\Psi_{0,w}^{i-1}}$, outputs a state $\ket{\Psi_{0,w}^{i}}$ that is exponentially (in $\secparam$) close in trace distance to the state $\ket{\Psi_{i,\Good}}$. This means that, in hybrid $\hybrid_{2.i+1}$, the state obtained after the execution of block $B_i$ is exponentially close in trace distance to the state $\ket{\Psi_{i,\Good}}\ket{0}_{\decreg}$. 

\paragraph{Indistinguishability of $\distr_1$ and $\distr_2$.} We just argued above that the intermediate state  obtained in $\hybrid_{2.i}$ is $\ket{\Psi_{i,\Good}}\ket{0}_{\decreg}$. On the other hand, the intermediate state obtained in $\hybrid_{2.i-1}$ is  $\ket{\Psi_{0,w}^{i-1}}$ is $U_{i}^{V^*} \ket{\Psi_{0,w}^{i-1}} 
 =  \sqrt{p(\Phi_{i,\slot})} \ket{\Psi_{i,\Good}}\ket{0}_{\decreg} + \sqrt{1-p(\Phi_{i,\slot})} \ket{\Psi_{i,\Bad}}\ket{1}_{\decreg}$. We need to argue that the distribution of measurements of the registers $\{\simreg_t,\verreg_t\}$ along with the residual state $\auxreg$ register in both the cases are computationally indistinguishable. 
 
 Note that for any $\rho_0,\rho_1$ such that $\rho_0 \approx_c \rho_1$\footnote{By $\rho_0 \approx_c \rho_1$, we mean that the state sampled according to $\rho_0$ is computationally indistinguishable from the state sampled according to $\rho_1$.}, then for any $p\geq 0$ we have that $\rho_0 = p \cdot \rho_0 + (1-p) \rho \approx_c p \cdot \rho_0 + (1-p) \cdot \rho_1$.  In our case we have, $\rho_0$ is the post-measurement state on the registers $\{\simreg_t,\verreg_t\}_{t \leq i},\auxreg$  after measuring the  $\{\simreg_t,\verreg_t\}_{t \leq i}$ registers of the state $\ket{\Psi_{\Good}}$. Similarly, we define $\rho_1$ with respect to $\ket{\Psi_{\Bad}}$. In $\hybrid_{2.i-1}$, the intermediate state is a mixture of $\rho_0$ and $\rho_1$ and in $\hybrid_{2.i}$, the intermediate state is $\rho_0$. 
\par Thus, it suffices to show that with probability neglibly close to $1$, the post-measurement states $\rho_0$ and $\rho_1$ are computationally indistinguishable. This follows from the quantum-concealing property of commitment schemes and is similar to the proof of Claim~\ref{clm:quanconc:inter}; if the verifier can distinguish a matched slot versus an unmatched slot then this verifier is violating the quantum-concealing property of the commitment scheme. \\

\noindent This proves that hybrids $\hybrid_{2.i}$ and $\hybrid_{2.i+1}$ are computationally indistinguishable. 

\subsubsection{Proof of Claim~\ref{ind WI}}
\label{sec:mainclaim}
Before we prove Claim~\ref{ind WI}, we first give an auxiliary definition and some claims. 

\paragraph{Auxiliary Definition and Claims.}
\begin{definition}[Partitioning]
We define a partitioning of a protocol transcript (consisting of messages from all the sessions) $\sch$ to be $\{B_1,\ldots,B_L\}$ associated with parameter $\ell_B$ as follows: $B_1$ consists of the first $\ell_{B}$ messages of $S$, $B_2$ consists of the second $\ell_{B}$ messages of $S$ and so on. If $|S| - \ell_B \cdot (L-1) < \ell_B$ then the last block $B_L$ will just contain the remaining $|S| - \ell_B \cdot (L-1)$ messages.
\end{definition}

\noindent The following claim lower bounds the number of blocks that will contain a full slot for any given verifier.  In particular, with our chosen parameters, we can show that the number of such blocks is at least $6Q^5 \secparam$. This will turn out to be enough number of blocks for the simulator to be able to obtain more than $60Q^7 \secparam + Q^4 \secparam$ matched commitments, with probability neglibly close to 1, for every verifier before starting Stage 2.
\begin{claim}
\label{clm:noblocks}
For any transcript $\sch$ of $Q$ verifiers $V_1,\ldots,V_Q$ with partitioning $\{B_1,\ldots,B_L\}$, for every verifier $V_i$, we have $\bfN_i \geq 6Q^5 \secparam$; that is, there are at least $6Q^5 \secparam$ number of blocks containing at least one slot of $V_i$. 
\end{claim}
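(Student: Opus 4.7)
The plan is a simple counting argument in two steps. Step one lower-bounds the number of slots of session $i$ that are \emph{fully contained} inside some single block (call these \emph{intact} slots); step two upper-bounds the number of intact slots any one block can contain, so that dividing the two bounds yields the desired lower bound on $\bfN_i$.

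For step one, observe that the three messages constituting any slot of session $i$ are consecutive among the (linearly ordered) messages of session $i$, and distinct slots of session $i$ are disjoint in time. Hence at each of the $L-1$ block boundaries at most one slot of session $i$ can be \emph{in progress}, i.e.\ have some of its messages on one side of the boundary and others on the other. Every split (non-intact) slot crosses at least one boundary, so the total number of split slots of session $i$ is at most $L-1 = 24Q^6\secparam - 1$, which leaves at least $\ell_{\smslot} - (L-1) \geq 120Q^7\secparam - 24Q^6\secparam$ intact slots of session $i$.

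For step two, by direct computation $\ell_B = \ell_{\smprot}\cdot Q / L = (360Q^7\secparam+4)Q/(24Q^6\secparam) = 15Q^2 + 1/(6Q^5\secparam)$, so $\ell_B \leq 15Q^2+1$. Since each slot of session $i$ occupies exactly three messages, a block of at most $15Q^2+1$ messages contains at most $\lfloor (15Q^2+1)/3\rfloor = 5Q^2$ intact slots of session $i$. Combining the two bounds,
\[ \bfN_i \;\geq\; \frac{120Q^7\secparam - 24Q^6\secparam}{5Q^2} \;=\; 24Q^5\secparam - \tfrac{24}{5}Q^4\secparam \;\geq\; 6Q^5\secparam, \]
where the final inequality rearranges to $90Q \geq 24$ and holds for all $Q\geq 1$.

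The only real subtlety is the bound in step one: one must argue that split slots are controlled by the number of block \emph{boundaries}, not by the (potentially much larger) number of messages straddling boundaries. This rests on the time-disjointness of any single session's slots: at any instant, session $i$ is ``inside'' at most one slot, so each global boundary can cut at most one slot per session, yielding the clean upper bound $L-1$ on split slots per session.
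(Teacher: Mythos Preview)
Your proof is correct and takes a genuinely different route from the paper's.

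The paper's argument works at the level of \emph{messages}: it claims that any block containing at least four messages of session $i$ must contain a full slot of $V_i$, and then lower-bounds the number of such blocks via a pigeonhole on the total number of session-$i$ messages (blocks with at most three session-$i$ messages can account for at most $3L$ of them, so many blocks must have four or more). Your argument instead works directly at the level of \emph{slots}: you lower-bound the number of intact slots by observing that each of the $L-1$ block boundaries can sever at most one slot of any fixed session (using the time-disjointness of one session's slots), and you upper-bound the intact slots per block by $\lfloor \ell_B/3\rfloor$ since each intact slot consumes three of the block's messages.

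Your decomposition is cleaner and gives a stronger constant (about $24Q^5\secparam$ before the lower-order term, versus the paper's $\approx 12Q^5\secparam$). It also sidesteps a delicate point in the paper's route: with three-message slots, four consecutive session-$i$ messages beginning at a position $\equiv 2 \pmod 3$ straddle two slots without completing either, so the ``four messages $\Rightarrow$ one slot'' step requires either bumping four to five or reading ``slot'' as the two-message pair $(\bfc_j,b'_j)$. Your boundary-counting argument is immune to this ambiguity.
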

\begin{proof}
Fix a verifier $V_i$. Note that the number of blocks containing at least 4 messages of $V_i$ lower bounds $\bfN_i$. Denote $\mu_i$ be the number of blocks containing at least 4 messages of $V_i$.
\par Let $b_1,\ldots,b_{\mu_i}$ be the number of messages of $V_i$ in each of these $\mu_i$ blocks. Let the number of messages in the remaining $L-\mu_i$ blocks be denoted by $a_1,\ldots,a_{L-\mu_i}$. 
\par The following holds:  $\sum_{i=1}^{\mu_i} b_i + \sum_{i=1}^{L- \mu_i} a_i = \frac{2(\ell_{\smprot}-1)}{3}$. Since $\sum_{i=1}^{\mu_i} b_i \leq \ell_B \mu_i$, $\sum_{i=1}^{L-\mu_i} a_i \leq 3(L-\mu_i)$  and $\ell_B = \frac{\ell_{\smprot} \cdot Q}{L}$, we have:  
$$ \mu_i \ell_B + 3(L - \mu_i) \geq \frac{2 (\ell_{\smprot}-1)}{3} \geq \frac{\ell_\smprot}{2}$$ 
From this, we can determine $\mu_i$ to be at least $\frac{\frac{\ell_{\smprot}}{2} - 3L}{\ell_{B} - 3}$. We can now lower bound the number of blocks containing at least 4 messages as follows.

\begin{align*}
    \bfN_i \geq \mu_i &\geq \left(\frac{ \frac{\ell_{\smprot}}{2} - 3L}{\frac{\ell_{\smprot} Q}{L} - 3} \right) \\
    &\geq \frac{ \frac{\ell_{\smprot}}{2} - 3L}{\ell_{\smprot}}\cdot \frac{L} {Q} \\
    &\geq \left[1-\frac{6L}{\ell_{\smprot}} \right] \frac{L}{2Q} \\
    &\geq \left[ 1-\frac{6L}{3\ell_{\smslot}} \right]\frac{L}{2Q} \\
    &\geq \left[ 1-\frac{2}{5Q} \right]\frac{L}{2Q}\\
    & \geq \left(1 - \frac{1}{2} \right) 12\secparam Q^5\ \ \ \ \ \ (\because L=24Q^6\secparam,\ \ell_{\smslot}=120Q^7\secparam) \\
&\geq 6\secparam Q^5
\end{align*}

\end{proof}

\noindent The following claim lower bounds the expected number of slots that will be {\em rigged} by the simulator (i.e., these are the slots the simulator matches by rewinding) for any given verifier before starting Stage 2. Specifically, it bounds the number of slots that it will be able to match thanks to block rewinding. 

\begin{claim}[Matching by Rigging]
\label{clm:int:clm}
Let $\sch$ be a scheduling of $Q$ verifiers $V_1,\ldots,V_Q$. Let $\{B_1,\ldots,B_L\}$ be the partitioning associated with $\sch$. 
\par Consider the following process: for $i=1,\ldots,L$, 
\begin{itemize}
    \item Let $T_i$ be such that all the verifiers $\{V_j\}_{j \in T_i}$ have a slot in $B_i$. 
    \item Pick $j^* \xleftarrow{\$} T$. 
    \item Finally, pick a slot of $V_{j^*}$ in block $B_i$ uniformly at random.
\end{itemize}
Let $\rv_{i,j}$ be a random variable defined to be 1 if in the $j^{th}$ block, a slot of $V_i$ is picked.   Then, for any $i \in [Q]$, $\expct[\sum_{j \in [L]} \rv_{i,j}] \geq 6\secparam Q^4$.   Furthermore,  we have that
$$\prob\left[\exists i \in[Q], \sum_{j \in [L]} \rv_{i,j} \leq 3 \secparam Q^4\right] \leq \negl(\secparam)$$.

\end{claim}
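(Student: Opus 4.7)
The plan is to decompose the claim into two parts: an expectation bound that follows directly from Claim~\ref{clm:noblocks}, and a concentration bound that follows from a standard Chernoff argument combined with a union bound over the $Q$ verifiers.

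For the expectation, I would first fix $i \in [Q]$ and observe that by Claim~\ref{clm:noblocks} there are at least $\bfN_i \geq 6\secparam Q^5$ blocks that contain at least one slot of $V_i$. For each such block $B_j$, we have $i \in T_j$, and since at most $Q$ distinct verifiers have a slot in any block, $|T_j| \leq Q$. Hence conditioned on reaching block $B_j$, the probability that $j^* = i$ is at least $1/Q$, and conditioned on that event, $\rv_{i,j} = 1$ with probability $1$. Therefore $\mathbb{E}[\rv_{i,j}] \geq 1/Q$ whenever $B_j$ contains a slot of $V_i$, and the remaining blocks contribute $0$. Summing gives
\[
\expct\Bigl[\sum_{j \in [L]} \rv_{i,j}\Bigr] \;\geq\; \frac{\bfN_i}{Q} \;\geq\; \frac{6\secparam Q^5}{Q} \;=\; 6\secparam Q^4,
\]
which is the first part of the claim.

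For the concentration bound, the key observation is that the random variables $\{\rv_{i,j}\}_{j \in [L]}$ are mutually independent: although their marginals depend on the (adversarially chosen) scheduling $\sch$, once $\sch$ is fixed, each $\rv_{i,j}$ depends only on the independent uniform choice of $j^* \in T_j$ and the independent uniform choice of a slot within $B_j$. Applying a standard multiplicative Chernoff bound with $\mu \geq 6\secparam Q^4$ and deviation parameter $\delta = 1/2$, we get
\[
\prob\Bigl[\sum_{j \in [L]} \rv_{i,j} \leq 3\secparam Q^4\Bigr] \;\leq\; \exp\!\Bigl(-\tfrac{(1/2)^2 \cdot 6\secparam Q^4}{2}\Bigr) \;=\; \exp\!\bigl(-\tfrac{3\secparam Q^4}{4}\bigr),
\]
which is negligible in $\secparam$. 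A union bound over the $Q$ verifiers yields the stated bound $Q \cdot \negl(\secparam) = \negl(\secparam)$.

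I do not expect any serious obstacles here; the claim is primarily a bookkeeping statement that combines Claim~\ref{clm:noblocks} with a standard concentration argument. The only subtlety worth double-checking is the independence structure used for Chernoff: the scheduling $\sch$ and the partitioning $\{B_1,\ldots,B_L\}$ are fixed beforehand (they are determined by the verifier's behavior, but the claim conditions on $\sch$), so the sets $T_j$ and hence the distributions of $\rv_{i,j}$ are also fixed, and the simulator's independent coins per block give genuine independence across $j$. With this in hand, the Chernoff step and union bound complete the proof.
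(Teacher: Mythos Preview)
Your proposal is correct and follows essentially the same approach as the paper: both use Claim~\ref{clm:noblocks} to lower bound the expectation by $\bfN_i/Q \geq 6\secparam Q^4$, then apply a multiplicative Chernoff bound (yielding the same $e^{-\frac{3}{4}Q^4\secparam}$) followed by a union bound over the $Q$ verifiers. Your additional remark justifying the independence of the $\rv_{i,j}$ across blocks is a detail the paper leaves implicit, but otherwise the arguments coincide.
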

\begin{proof}
Let $b_{i,j}$ be such that $b_{i,j}=1$ if the  $i^{th}$ verifier has a slot in the $j^{th}$ block, else its set to 0. Then, we have $\expct[\sum_{j \in [L]} \rv_{i,j}] \geq \sum_{j \in [L]} b_{i,j} \cdot \frac{1}{Q}$. Note that $\left|\{j: b_{i,j} \neq 0\} \right| = \bfN_i$. Thus, we have $\expct[\sum_{j \in [L]} \rv_{i,j}] \geq \frac{1}{Q} \cdot \bfN_i$. Further applying  Claim~\ref{clm:noblocks}, we have $\expct[\sum_{j \in [L]} \rv_{i,j}] \geq 6\secparam Q^4$.  To finish the proof of the claim, first notice that by Chernoff bound, we have that for any $i \in [Q]$, 
$$\prob\left[\sum_{j \in [L]} \rv_{i,j} \leq 3\secparam Q^4 \right] \leq e^{-\frac{3}{4}Q^4 \secparam}.  $$ By the union bound, we obtain that
$$\prob\left[\exists i \in[Q], \sum_{j \in [L]} \rv_{i,j} \leq 3 \secparam Q^4\right] \leq Qe^{-\frac{3}{4}Q^4 \secparam} $$
\end{proof}

\noindent While the above claim provides a lower bound on the number of rigged slots, the following claim lower bounds the number of slots matched by luck. Combining the above and the below claim, it follows that with overwhelming probability, the number of matchd slots is at least $60Q^7 \secparam + Q^4 \secparam$.  

\begin{claim}[Matching by Luck]
\label{clm:main}
Let $\sch$ be a transcript of the $Q$ verifiers $V_1, \ldots, V_Q$. For any $i \in [Q]$ let $Z_{i,1},...Z_{i,120Q^7 \secparam}$ be binary random variables such that $Z_{i,j} = 1$ iff $\comm(b'_j;r_j) = \bfc_j$ where $b'_j$ is the $j^{th}$ response of the $i^{th}$ verifier to commitment $\bfc_j$ by the prover. Let $X_{i,j}$ be as defined in Claim~\ref{clm:int:clm}. The following holds: 
$$\prob \left[\exists T_i \subseteq [L],\  \left(\forall j \in T_i, X_{i,j}=1\right) \bigwedge \left( \sum_{j \in [L] \setminus T_i} Z_{i,j} \geq 60Q^7 \secparam - 2 Q^4 \secparam \right)  \right] \geq 1 - \nu(\secparam),$$
for some negligible function $\nu(\cdot)$. 
\end{claim}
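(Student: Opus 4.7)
The plan is to combine the ``matching-by-rigging'' guarantee of Claim~\ref{clm:int:clm} with a concentration bound for the ``lucky'' matches, bridged by a hybrid that uses quantum-concealing of $(\comm,\receiver)$ to decouple the verifier's responses from the prover's committed bits. First, I would apply Claim~\ref{clm:int:clm} to condition on the event that, for every $i \in [Q]$, the rigging process selects a slot of $V_i$ in at least $3Q^4\secparam$ distinct blocks; by construction of the rewinding each such slot is matched, so $T_i$ can be taken to be the corresponding block indices, and the conditioning costs only negligible probability.

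Next, I would show that among the remaining $\geq 120Q^7\secparam - 3Q^4\secparam$ non-rigged slots of $V_i$, the number of lucky matches is at least $60Q^7\secparam - 2Q^4\secparam$ except with negligible probability. To this end, I would introduce a hybrid experiment $\hybrid^*$ in which the prover commits to the fixed bit $0$ in every slot, while the ``true'' bit $b_j$ is sampled independently \emph{after} the transcript is generated and used only to define $Z_{i,j}$ via ``$Z_{i,j}=1$ iff $b_j = b'_j$.'' In $\hybrid^*$ the transcript (and hence $T_i$) is independent of the $b_j$'s, so conditioning on any fixed transcript, the variables $\{Z_{i,j}\}_{j \notin T_i}$ are i.i.d.\ $\mathrm{Bernoulli}(1/2)$, and Hoeffding's inequality yields
\[ \prob_{\hybrid^*}\!\left[\, \sum_{j \notin T_i} Z_{i,j} < 60Q^7\secparam - 2Q^4\secparam \,\right] \leq e^{-\Omega(Q\secparam)}. \]

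To transfer this bound back to the real experiment, I would run a sub-hybrid argument with $120Q^7\secparam$ steps, swapping the committed bit in one slot at a time from ``to $b_j$'' to ``to $0$.'' Each swap is computationally indistinguishable against QPT distinguishers by quantum-concealing, applied with $V^*$ and its advice as non-uniform auxiliary state; the distinguisher is efficient because the bad event---thresholding a sum of $Z_{i,j}$'s---is polynomial-time computable from the transcript together with the $b_j$'s. Summing $\poly(\secparam)$ negligible gaps keeps the overall loss negligible, and a final union bound over $i \in [Q]$ completes the proof.

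The step I expect to be the main obstacle is managing the fact that the rigged set $T_i$ is itself a function of the (simulator-produced) transcript being hybridized, and that the simulator's rewinding can correlate slot matches within a block. Both issues are resolved by the observation that in $\hybrid^*$ the transcript is produced \emph{before} the $b_j$'s are sampled, so $T_i$ can be treated as a fixed transcript-dependent set when applying Hoeffding, and the $Z_{i,j}$'s for $j \notin T_i$ remain i.i.d.; the subsequent computational hybridization only perturbs the joint distribution of transcript and $b_j$'s by a negligible amount and leaves the combinatorial structure of $T_i$ intact.
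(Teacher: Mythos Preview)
Your plan follows the same skeleton as the paper's proof: invoke Claim~\ref{clm:int:clm} to obtain the rigged set $T_i$, apply a concentration bound to the non-rigged slots, and finish with a union bound over $i\in[Q]$. The paper simply asserts $\expct[Z_{i,j}]=\tfrac12$ and applies Chernoff directly to the remaining $120Q^7\secparam - 3Q^4\secparam$ slots, yielding the bound $e^{-Q\secparam/480}$.

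The one substantive difference is that you insert a hybrid (commit to $0$ everywhere, sample the $b_j$ afterwards) and a slot-by-slot reduction to quantum-concealing in order to \emph{justify} treating the non-rigged $Z_{i,j}$ as i.i.d.\ $\mathrm{Bernoulli}(1/2)$; the paper does not spell this out. Your extra step is sound and arguably needed for full rigor, since a malicious QPT $V^*$ could in principle correlate $b'_j$ with $b_j$ through the commitment, making the $Z_{i,j}$ only \emph{computationally} i.i.d. So your argument is a more careful execution of the same approach rather than a different route; it buys rigor at the cost of length, while the paper's version is the informal shortcut one would expect to see.
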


\begin{proof}
By  the previous Claim, we have that with probability neglible close to $1$, for all $i \in [Q]$, there exists $T_i$ satisfying the desired properties, what is left is to show that $$\sum_{j \in [L] \setminus T_i} Z_{i,j} \geq 60Q^7 \secparam - 2 Q^4 \secparam$$
for all $i \in [Q]$.

For any $i \in [Q]$, we have that $\expct[\sum_{j \in [L] \setminus T_i} Z_{i,j}] = 60 Q^7 \secparam - \frac{3}{2} Q^4 \secparam$, and by Chernoff bound:
\begin{align*}
    \prob\left[\sum_{j \in [L] \setminus T_i} Z_{i,j} \leq 60Q^7\secparam - 2Q^4 \secparam \right] &= \prob\left[\sum_{j \in [L] \setminus T_i} Z_{i,j} \leq \left(60Q^7\secparam - \frac{3}{2}Q^4 \secparam\right) - \frac{1}{2}Q^4 \secparam \right]\\
    & \leq \exp{\left(-\frac{(\frac{1}{2}Q^4 \secparam)^2}{2 (60Q^7\secparam - \frac{3}{2}Q^4 \secparam)}\right)} \\
    & \leq \exp{\left(-\frac{(\frac{1}{2}Q^4 \secparam)^2}{2 (60Q^7\secparam)}\right)} \\
    &= e^{-\frac{Q \secparam}{480}}.
\end{align*}

Again, by union bound, we have that 
$$\prob\left[\exists i\in [Q], \sum_{j \in [L] \setminus T_i} Z_{i,j} \leq 60Q^7\secparam - 2Q^4 \secparam\right] \leq Q e^{-\frac{Q \secparam}{480}}.$$
\end{proof}

\noindent Combining these last two claims we conclude that the probability that there is a session $V^*_i$ for which the simulator does not have more than $60 Q^7 \secparam + Q^4 \secparam$ matched commitments is negligibly small in $\secparam$.

\newcommand{\rigged}{\mathsf{rigged}}
\newcommand{\nonrigged}{\mathsf{nonrig}}
\paragraph{Finishing Proof of Claim~\ref{ind WI}.}
We use the auxiliary claims from the previous section to complete the proof. 
We prove this via the following hybrid argument. \\

\noindent \underline{$\hybrid_{3.i}^{(1)}$}: This is identical to the hybrid $\hybrid_{3.i}$. \\


\noindent \underline{$\hybrid_{3.i}^{(2)}$}: This is the same as the previous hybrid except that the simulator sets it responses, to the $i^{th}$ session, as $\bot$ if the number of matched slots for the $i^{th}$ session is $< 60Q^7 \secparam + Q^4 \secparam$.
\par From Claim~\ref{clm:main}, we have that the probability that this hybrid aborts is negligible in $\secparam$. Conditioned on this hybrid not aborting, the output distributions of $\hybrid_{3.i}^{(1)}$ and $\hybrid_{3.i}^{(2)}$ are identical.\\

\noindent \underline{$\hybrid_{3.i}^{(3)}$}:  This is identical to the hybrid $\hybrid_{3.i+1}$.

\par To argue that $\hybrid_{3.i}^{(3)}$ and $\hybrid_{3.i}^{(2)}$ are computationally indistinguishable we will use the quantum witness indistinguishable property of $\protwi$. Suppose that there is an adversary $\cA$ that distinguishes the output distributions of these two hybrids. We define the following QPT $\cB_i$ that breaks the security of $\protwi$. That is, $\cB_i$ is a QPT verifier, in the WI experiment, that can distinguish whether the prover used one witness versus another. $\cB_i$ is given as auxiliary advice a transcript (and verifier's private state) of $\hybrid_{3.i}^{(2)}$ executed until the verifier's first message of the $i^{th}$ WI execution in the transcript. In particular, conditioned on not aborting, this transcript has enough number of matching slots corresponding to the $i^{th}$ execution (in the order of arrival of messages) of WI. Then, $\cB_i$ interacts with the verifier $V^*$ as in the protocol with $P$ from then on, but forwards the verifier's messages corresponding to the $i^{th}$ WI execution to the prover of WI. The output of $\cB_i$ is the same as the output of the verifier $V^*$.
\par Firstly, from the security of WI, the output distribution of $\cB_i$ when the prover uses $w$ is computationally indistinguishable from the output distribution of $\cB_i$ when the prover uses the other witness, i.e., decommitments of matched slots.
\par If the prover used the witness $w$, then the output distribution of $\cB_i$ is computationally indistinguishable from the output of $\hybrid_{3.i}^{(2)}$. To see why, note that the only difference between $\cB_i$ and $\hybrid_{3.i}^{(2)}$ is that in $\cB_i$, all the blocks starting from the $i^{th}$ WI are not rewound. But we already showed, assuming security of commitments, that $V^*$ cannot distinguish the case when the block is being rewound versus the case when it is not. \par Furthermore, similarly, when the prover is using the decommitments of matched slots, the output distribution of $\cB_i$ is computationally indistinguishable from the output of $\hybrid_{3.i}^{(3)}$. 
\par Thus, the output distributions of $\hybrid_{3.i}^{(2)}$ and $\hybrid_{3.i}^{(3)}$ are computationally indistinguishable.

}

\section{Post-Quantum Statistical Receiver Oblivious Transfer}
\label{sec:statOT}
We begin by presenting the definition of statistical receiver oblivious transfer with post-quantum security. We consider a natural adaption of the definition of~\cite{GJJM20} (see also~\cite{DGHMW20}), who originally defined in the classical setting. 

The definition we provide is written this way to make it compatible with the 3-round OT definition from~\cite{GJJM20}.  The main difference is that we allow for an interactive phase instead of the sender's first message in~\cite{GJJM20}.

\subsection{Definition}
\label{sec:pqstatot}

\begin{definition}[Post-Quantum Statistical Receiver-Private Oblivious Transfer] 
\label{def:ot:definition1}
An oblivious transfer protocol, $\Pi_{OT}$, is an interactive protocol between a PPT sender and a PPT receiver $(\sender, \receiver)$, and a triplet of algorithms $(\ot_2,\ot_3, \ot_4)$ such that \\ 

\noindent {\bf Interactive Phase}. $\sender$ and $\prover$ interact for $\poly(\secparam)$ rounds. The receiver's input is $\secparam$ and a bit $\beta \in \{0,1\}$. The sender's input is $\secparam$. Let $\smot_1$ be the transcript generated in this round, and let $\state_S$ and $\state_R$ be the private state of the sender and receiver (respectively) at the end of the round.\\

\noindent {\bf Receiver's Final Message}. The receiver $R$ computes $(\smot_2, \state_R') \leftarrow \ot_2(1^\secparam, \smot_1,\beta,\state_R)$ \\

\noindent {\bf Sender's Final Message}. $\sender$ with input $(m_0,m_1) \in \{0,1\}^2$ computes $(\smot_3, \state_S') \leftarrow \ot_3(1^{\secparam},\smot_2,\state_S,m_0,m_1)$, and it sends $\smot_3$ to $\receiver$. \\

\noindent {\bf Reconstruction}. The receiver computes $m' \leftarrow \ot_4(1^{\secparam},\smot_3,\state_R')$. Output $m'$. \\

\noindent {\bf Correctness.} For any $\beta \in \{0,1\}, (m_0,m_1) \in \{0,1\}^2$, we have: 
$$\prob \left[ \substack{ (\smot_1,\state_S,\state_R)\leftarrow \langle\sender(1^\secparam), \receiver(1^\secparam,\beta) \rangle\\ (\smot_2,\state_R') \leftarrow \ot_2(1^\secparam,\smot_1,\beta,\state_R)
\\ (\smot_3,\state_S') \leftarrow \ot_3(1^{\secparam},\smot_2,\state_S,m_0,m_1)\\ m' \leftarrow \ot_4(1^{\secparam},\smot_3,\state_R')}\ :\ m' = m_{\beta} \right] = 1$$

\noindent {\bf Statistical Receiver-Privacy.} For any sender $S^*$, denote $(\smot_1^{(0)}, \state_R^{(0)}) \leftarrow \langle S^*(1^\secparam), R(1^\secparam,0) \rangle$ and $(\smot_1^{(1)}, \state_R^{(1)}) \leftarrow \langle S^*(1^\secparam), R(1^\secparam,1) \rangle$. Furthermore, let $(\smot_2^{(0)},(\state_R^{(0)})') \leftarrow \ot_2(1^\secparam, \smot_1^{(0)},0,\state_R^{(0)})$ and let $(\smot_2^{(1)},(\state_R^{(1)})') \leftarrow \ot_2(1^\secparam, \smot_1^{(1)},1,\state_R^{(1)})$.

Then the statistical distance between the marginal distributons $\{(\smot_1^{(0)}, \smot_2^{(0)})\}$ and $\{(\smot_1^{(1)},\smot_2^{(1)})\}$ is a negligible function in $\secparam$.\\

\noindent {\bf Post-Quantum  Sender-Privacy.} For any non-uniform QPT distinguisher $\adversary$ and any malicious receiver $\receiver^*$, which receives as input state that is possibly entangled with the input state of $\adversary$, we define the following games. \\

\noindent {\em Interact with $\receiver^*$.} The challenger plays the role of an honest sender in the interactive phase with $\receiver^*$. 
Then the receiver $\receiver^*$ outputs 
a state in a register $\regB$ and a message $z$. The message $z$ is sent to the challenger. The register $\regB$ is given to $\adversary$. \\

\noindent \underline{{\em Game $G_0(m_0,m_1)$}}: The challenger samples $b_0 \leftarrow \{0,1\}$ at random and computes $ot_3 \leftarrow \ot_3(1^{\secparam},z,\state_S,\allowbreak m_{b_0},m_1)$. Then, $ot_3$ is sent to $\adversary$. Finally, $\adversary$ outputs two bits $b'_0$ and $b'_1$. If $b_0=b'_0$ then we say that $\adversary$ wins the game $G_0$. \\

\noindent \underline{{\em Game $G_1(m_0,m_1)$}}: The challenger samples $b_1 \xleftarrow{\$} \{0,1\}$ at random, and then computes $ot_3 \leftarrow \ot_3(1^{\secparam},z,\state_S,\allowbreak m_0,m_{b_1})$. Then, $ot_3$ is sent to $\adversary$. Finally, $\adversary$ outputs two bits $b'_0$ and $b'_1$. If $b_1=b'_1$ then we say that $\adversary$ wins the game $G_1$.  \\

\noindent We define the advantage as follows: 
$$\advantage(\adversary,\receiver^*,m_0,m_1) = \mathbb{E}_{\viewR} \left[ \min\left\{ p_0,p_1 \right\}  \right],$$
where: 
\begin{itemize}
    \item $p_0 = \left| \prob[\adversary \text{ wins } \game_0(m_0,m_1)] - \frac{1}{2} \right| $
    \item $p_1 = \left| \prob[\adversary \text{ wins } \game_1(m_0,m_1)] - \frac{1}{2} \right| $
\end{itemize}
We say that the oblivious transfer scheme is computational sender-secure if for every $m_0,m_1 \in \{0,1\}$, we have $\advantage(\adversary,\receiver^*,m_0,m_1)$ to be negligible in $\secparam$. 
\end{definition} 

\submversion{

}

\subsection{Main Tools}
\noindent To construct a statistical receiver-private oblivious transfer, we use two tools: (i) a statistical zero-knowledge argument system and, (ii) statistical sender-private oblivious transfer. 

\subsubsection{Statistical Zero-Knowledge Quantum Argument System}
\label{sec:statzk}

\begin{definition}[Statistical ZK Quantum Argument System]
Let $\Pi$ be an interactive protocol between a classical PPT prover $P$ and a classical PPT verifier $V$. Let $\rel(\lang)$ be the NP relation associated with $\Pi$.
\par $\Pi$ is said to satisfy {\bf completeness} if the following holds: 
\begin{itemize} 
\item {\bf Completeness}: For every $(x,w) \in \rel(\lang)$, 
$$\prob[\accept \leftarrow \langle P(x,w),V(x) \rangle] \geq 1 - \negl(\secparam),$$
for some negligible function $\negl$. 
\end{itemize}
\par $\Pi$ is said to satisfy {\bf (quantum computational) soundness} if the following holds: 
\begin{itemize}
    \item {\bf (Quantum Computational) Soundness}: For every  QPT adversary $P^*$, every $x \notin \rel(\lang)$, every $\poly(\secparam)$-qubit advice $\rho$,
    $$\prob\left[ \accept \leftarrow \langle P^*(x,\rho),V(x) \rangle  \right] \leq \negl(\secparam),$$
   for some negligible function $\negl$. 
\end{itemize}
\par $\Pi$ is said to satisfy {\bf statistical zero-knowledge} if the following holds: 
\begin{itemize}
    \item {\bf Statistical Zero-Knowledge}:For every sufficiently large $\secparam \in \mathbb{N}$, every computationally unbounded adversary $V^*$, there exists a QPT simulator $\simr$ such that for every $(x,w) \in \rel(\lang)$, the state output by $V^*$ is close in trace distance to the state output by $\simr$. 
\end{itemize}
\end{definition}

\newcommand{\nizk}{\mathsf{nizk}}
\newcommand{\bfa}{\mathbf{a}}
\newcommand{\crs}{\mathsf{CRS}}

To construct a statistical ZK quantum argument system, we will use a non-interactive (statistical) ZK protocol for NP (NIZK).  We define NIZK below.

\begin{definition}[Non-interactive statistical ZK argument system] A NIZK argument system, $\Pi_{\nizk}$, for an NP relation $\rel(\lang)$ in the common random string model is a triplet of PPT algorithms:
\begin{itemize}
    \item $\mathsf{Gen}(1^\secparam)$: outputs a public random string $\crs$. The output distribution of $\crs$ is generated according to the uniform distribution. 
    \item $P(\crs,x,w)$:  On instance $(x;w) \in \rel(\lang)$ outputs a proof $\pi$.
    \item $V(\crs,x,\pi)$: Outputs $1$ if it accepts the proof, and $0$ if it rejects.
\end{itemize}
$\Pi_{\nizk}$ is said to satisfy \textbf{completeness} if the following holds:
\begin{itemize}
    \item {\bf Completeness:} For all $(x,w) \in \rel(\lang)$,
    $$\Pr \left[\substack{\crs \leftarrow \mathsf{Gen}(1^\secparam) \\ 
    \pi \leftarrow P(\crs,x,w)} : V(\crs,x,\pi)=1 \right] = 1 $$
\end{itemize}
$\Pi_{\nizk}$ is said to satisfy \textbf{(quantum computational) soundness}  if the following holds:
\begin{itemize}
    \item {\bf (Quantum Computational) Soundness:} For all $x \notin \lang$, for any QPT $P^*$ and auxiliary $\poly(\secparam)$-qubits state $\rho$,
    $$\Pr \left[\substack{\crs \leftarrow \mathsf{Gen}(1^\secparam) \\ 
    \pi \leftarrow P^*(\crs,x,\rho)} : V(\crs,x,\pi)=1 \right] = \negl(\secparam) $$
\end{itemize}
$\Pi_{\nizk}$ is said to satisfy \textbf{statistical zero-knowledge}  if the following holds:
\begin{itemize}
    \item {\bf Statistical Zero-Knowledge:} There exists a QPT simulator $\Simu$ such that for all $(x,w) \in \rel(\lang)$, the following two distributions are statistically close: 
    \begin{enumerate}
        \item Sample $\crs \leftarrow \mathsf{Gen}(1^\secparam)$, sample $\pi \leftarrow P(\crs,x,w)$.  Output $(\crs,\pi)$.
        \item Sample $(\crs^*, \pi^*) \leftarrow \Simu(1^\secparam,x)$.  Output $(\crs^*,\pi^*)$.
    \end{enumerate}
\end{itemize}
\end{definition}

\paragraph{Instantiation.} The work of~\cite{PS19} shows how to construct statistical NIZK arguments for NP in the LWE. We note that the same construction and proof can be ported to the quantum setting to demonstrate a construction of statistical NIZK quantum argument system for NP from QLWE. 
A discussion on the quantum security of~\cite{PS19} can be found in~\cite{CVZ20}.

\subsubsection{Construction}

In order to construct a statistical ZK quantum argument system for an NP relation $\rel(\lang)$, we use the following ingredients:

\begin{itemize}
    \item A quantum zero-knowledge protocol $\Pi_{\zk}$ for the NP relation $\rel(\lang_\zk)$. We described the relation $\rel(\lang_\zk)$, parametrized by security parameter $\secparam$, described below:
    \[
    \rel(\lang_\zk) = \{\left((crs, \bfc, b); (a,\bfr)\right) : \substack{crs=a \oplus b \\ \bigwedge \\
    \bfc = \comm(1^\secparam, a; \bfr)}   \}
    \]
    \item A perfectly binding and quantum computationally hiding commitment scheme, $\comm$, where the length of randomness is $\secparam$.
    \item A non-interactive statistical zero-knowledge argument system $\Pi_{\nizk}$ for $\rel(\lang)$, where the length of the CRS is $q(\secparam)$. 
\end{itemize}

We present a construction in~\Cref{fig:statzk}. 
\begin{figure}[!htb]
   \begin{center}
   \begin{tabular}{|p{12cm}|}
    \hline \\
    {\bf Instance}: $x$.  \\
    {\bf Witness}: $w$. \\
    \begin{itemize}
        \item $V$ samples $\bfa \xleftarrow{\$} \{0,1\}^{\secparam}$. 
        \item For every $i \in [q(\secparam)]$, $P$ and $V$ perform the following operations: 
        \begin{itemize}
            \item $V \rightarrow P$: $V$ computes $\bfc_i \leftarrow \comm(1^{\secparam},a_i;\bfr_i)$, where $\bfr_i \xleftarrow{\$} \{0,1\}^{\secparam}$ and $a_i$ is the $i^{th}$ bit of $\bfa$. $V$ sends $\bfc_i$ to $P$. 
            \item $P \rightarrow V$: $P$ sends $b_i$ to $V$, where $b_i \xleftarrow{\$} \{0,1\}$. 
            \item $V \rightarrow P$: $V$ sends $crs_i = a_i \oplus b_i$ to $P$. 
            \item $V \leftrightarrow P$: $P$ and $V$ will execute $\Pi_{\zk}$, with $P$ playing the role of verifier in $\Pi_{\zk}$ and $V$ plays the role of the prover. The instance is $\left(crs_i,\bfc_i,b_i \right)$ and the witness is $\left( a_i,\bfr_i \right)$. 
        \end{itemize}
        
        \item Set $\crs=\left(crs_1,\ldots,crs_{q(\secparam)} \right)$. 
        
        \item $P \rightarrow V$: $P$ computes a proof $\pi$ on input common random string $\crs$,  instance $x$ and witness $w$ using $\Pi_{\nizk}$. It sends $\pi$ to $V$.
        \item $V$ computes the verifier of $\Pi_{\nizk}$ on input $(\crs,x,\pi)$. It outputs the decision bit of the verifier of $\Pi_{\nizk}$. 
    \end{itemize}
    \\ 
    \hline
   \end{tabular}
    \caption{Statistical ZK Quantum Argument System}
    \label{fig:statzk}
    \end{center}
\end{figure}

\paragraph{Completeness.} The completeness follows from the completeness of $\Pi_{\zk}$ and $\Pi_{\nizk}$. 

\paragraph{Quantum Computational Soundness.} Let $x \notin \lang$. Suppose $P^*$ be a QPT prover that on input $(x,\rho)$, for some $\poly(\secparam)$-qubit advice $\rho$, convinces the verifier $V^*$ to accept $x$ with probability $\varepsilon$. We prove that $\varepsilon$ is negligible using a hybrid argument. \\

\noindent \underline{$\hybrid_1$}: This corresponds to the execution of $P^*$ and $V$. The probability that $V$ accepts is $\varepsilon$. \\  

\noindent \underline{$\hybrid_{2.i}$ for $i \in [q(\secparam)]$}: We consider a hybrid verifier $\hybrid_{2.i}.V$ that executes the simulator $\simr$ in the $i^{th}$ execution of $\Pi_{\zk}$, instead of running the real prover. Except this change, the hybrid verifier  $\hybrid_{2.i}.V$ behaves the same as  $\hybrid_{2.i-1}.V$ if $i > 1$ or as $V$ if $i=1$. 
\par From the (computational) quantum zero-knowledge property of $\Pi_{\zk}$, the probability that $\hybrid_{2.i}.V$ accepts is negligibly close to $\varepsilon$.\\ 

\noindent \underline{$\hybrid_{3.i}$, for $i \in [q(\secparam)]$}: We consider a hybrid verifier $\hybrid_{3.i}.V$ that computes the $i^{th}$ commitment $\bfc_i$ as follows: $\bfc_i \leftarrow \comm(1^{\secparam},0)$. Except this change, the hybrid verifier $\hybrid_{3.i}.V$ behaves the same as $\hybrid_{3.i-1}.V$ if $i> 1$ or as $\hybrid_{2.q(\secparam)}.V$ if $i=1$.  
\par From the quantum-concealing property of $\comm$, the probability that $\hybrid_{3.i}.V$ accepts is negligibly close to $\varepsilon$. \\

\noindent \underline{$\hybrid_4$}: We consider a hybrid verifier $\hybrid_4.V$, which is essentially the same as $\hybrid_{3.q(\secparam)}.V$, except that it generates $\crs \xleftarrow{\$} \{0,1\}^{q(\secparam)}$ and sends $\crs$ to $P$.
\par Since the hybrids $\hybrid_{3.q(\secparam)}.V$ and $\hybrid_{4}.V$ are identical, the probability that $\hybrid_{4}.V$ accepts is negligibly close to $\varepsilon$.\\

\noindent From the computational soundness of $\Pi_{\nizk}$, the probability that $\hybrid_4.V$ accepts is negligible. Thus, $\varepsilon$ is negligible.  

\newcommand{\bfR}{\mathbf{R}}
\newcommand{\bfC}{\mathbf{C}}
\newcommand{\bfB}{\mathbf{B}}
\newcommand{\bfT}{\mathbf{IZ}}
\newcommand{\bfNZ}{\mathbf{NZ}}
\newcommand{\bfcrs}{\mathbf{C}}
\newcommand{\simcrsgen}{\mathsf{SimCRSGen}}
\newcommand{\match}{\mathbf{Match}}
\paragraph{Statistical Zero-Knowledge.} Let $V^*$ be a computationally unbounded verifier and let $\ket{\Psi}$ be the initial state of $V^*$. Before we describe the simulator we first define the following registers. For $i=1,\ldots,q(\secparam)$:
\begin{itemize}
    \item $\bfB_i$: it contains the bit sent by the simulator in the $i^{th}$ iteration.
    \item $\bfR_i$: it contains the receiver's commitment and the $i^{th}$ bit of $\crs$ sent during the $i^{th}$ iteration. 
    \item $\bfT_i$: it contains the messages of zero-knowledge exchanged during the $i^{th}$ iteration.
    \item $\decreg$: it contains the decision bit. 
    \item $\auxreg$: it contains the auxiliary state of the verifier. 
    \item $\bfNZ$: it contains the final NIZK proof sent by the simulator. 
    \item $\bfcrs$: it contains the common reference string. 
    \item $\bfX$: this is a $\poly(\secparam)$-qubit ancillary register. 
\end{itemize}

\noindent {\em Description of Simulator}:
\begin{itemize}
    \item The simulator prepares the following state: 
    $$\ket{\Psi_1} = \left( \overset{q(\secparam)}{\underset{i=1}{\otimes}} \ket{0}_{\bfR_i} \ket{0}_{\bfB_i} \ket{0}_{\bfT_i} \right) \otimes \ket{0}_{\bfNZ} \ket{0}_{\bfX} \ket{0}_{\bfcrs} \ket{\Psi}_{\auxreg} \ket{0}_{\dec} $$
    \item It runs the NIZK simulator, $(\widehat{\crs}, \widehat{\pi})\leftarrow \Pi_{\nizk}.\Simu(1^\secparam,x)$, to compute $\widehat{\crs}$. It stores $\widehat{\crs}$ in the register $\bfcrs$, and it stores $\widehat{\pi}$ in $\bfNZ$.
    \item For all $i=1,\ldots,q(\secparam)$, let $U_i$ be the unitary that performs the following operations in superposition. 
    \begin{itemize}
        \item It first applies $V^*$ on the registers $\{\bfB_j,\bfR_j,\bfT_j,\auxreg\}_{j < i},\bfR_{i}$. 
        \item It then maps $\ket{0}_{\bfB_i}$ to $\ket{+}_{\bfB_i}$. 
        \item It then applies $V^*$ on the registers $\{\bfB_j,\bfR_j,\bfT_j,\auxreg\}_{j < i},\bfR_i,\bfB_i$. 
        \item It then performs the $i^{th}$ iteration of $\Pi_{\zk}$ with $V^*$ in superposition. The transcript is stored in $\bfT_i$. 
        \item It then applies the following unitary $\widehat{U}$: 
        $$\widehat{U} \ket{b_i}_{\bfB_i} \ket{\bfc_i\ crs_i}_{\bfR_i} \ket{\tau_i}_{\bfT_i}\ket{\widehat{\crs}}_{\bfcrs}\ket{0}_{\decreg} = \left\{ \begin{array}{ll} \ket{b_i}_{\bfB_i} \ket{\bfc_i\ crs_i}_{\bfR_i} \ket{\tau_i}_{\bfT_i}\ket{\widehat{\crs}}_{\bfcrs}\ket{1 \oplus \theta_i}_{\decreg}, & \text{if }\tau_i\text{ is valid},  \\ \ket{b_i}_{\bfB_i} \ket{\bfc_i\ crs_i}_{\bfR_i} \ket{\tau_i}_{\bfT_i} \ket{\widehat{\crs}}_{\bfcrs} \ket{+}_{\decreg}, & \text{othewise}  \end{array} \right.  $$
    \end{itemize}
    We define $\theta_i=1$ if the $i^{th}$ bit of $\widehat{\crs}$ is the same as $crs_i$, where $crs_i$ is the $i^{th}$ bit of $\crs$ computed by $V^*$ in the register $\bfR_i$. Furthermore, we define $\tau_i$ to be valid if the verifier in the $i^{th}$ execution of $\Pi_{\zk}$ accepts. 
    
    \item Let $W_{i} = \amplifier \left( U_i \right)$; where $\amplifier$ is the circuit guaranteed by Lemma~\ref{lem:watrous}. Simulator computes $\ket{\Psi_{i}} = W_i \ket{\Psi_{i-1}}$.
    
    \item Finally, it uses $\widehat{\pi}$ stored in $\bfNZ$ as the proof for the NIZK step. 
    
    \item Measure all the registers except for $\auxreg$.
    
\end{itemize}

\noindent We now prove the statistical indistinguishability of the real and the ideal world using a hybrid argument. Consider the following hybrids.\\

\noindent \underline{$\hybrid_1$}: This corresponds to the real execution between $P$ and $V^*$.\\ 

\noindent \underline{$\hybrid_{2.i}$ for $i \in [q(\secparam)]$}: We define a hybrid prover as follows: sample $\crs \leftarrow \mathsf{Gen}(1^{\secparam})$. Note that $\crs$ is generated according to the uniform distribution. Prepare the state $\ket{\Psi_1}$ as given in the description of the simulator. Apply $W_{i} \cdots W_1 \ket{\Psi_1}$ to obtain $\ket{\Psi_i}$. That is, perform Watrous rewinding for the first $i$ iterations of the OT protocol, similarly to the way the simulator does, but using $\crs$,  instead of $\widehat{\crs}$. Then, the hybrid prover uses the real prover to interact with $V^*$, that receives as input  $\ket{\Psi_i}$, to perform the operations for the rest of the protocol. 
\par We now show that the output distributions of the hybrids $\hybrid_{2.i-1}$ and $\hybrid_{2.i}$ are computationally indistinguishable. In order to show this, we use a similar argument that was used in the proof of Claim~\ref{clm:quanconc}. It suffices to argue that the following distributions are statistically close: 
\begin{itemize}
    \item $\distr_1$: Measure the registers $\{\bfR_i,\bfT_i\}_{i \in [q(\secparam)]},\bfNZ$ after the $i^{th}$ iteration in $\hybrid_{2.i-1}$ and output the measurement outcome along with the residual state in $\auxreg$. 
    \item $\distr_2$: Measure the registers  $\{\bfR_i,\bfT_i\}_{i \in [q(\secparam)]},\bfNZ$ after the $i^{th}$ iteration in $\hybrid_{2.i}$ and output the measurement outcome along with the residual state in $\auxreg$.
\end{itemize}
We prove this in two steps: first, we apply Watrous rewinding and analyze the state obtained after the $i^{th}$ iteration in $\hybrid_{2.i}$. In the next step, we use this to argue the indistinguishability of $\distr_1$ and $\distr_2$. 

\paragraph{Applying Watrous Rewinding.}  Suppose $\ket{\Psi_{i-1}}= W_{i-1} \cdots W_1 \ket{\Psi_1}$. Consider the following: 
\begin{eqnarray*}
U_i \ket{\Psi_{i-1}} & = & \sqrt{q}\left( \sqrt{p} \ket{\phi_{{\sf good}}} \ket{0}_{\decreg} + \sqrt{1-p} \ket{\phi_{{\sf bad}}} \ket{1}_{\decreg} \right) + \sqrt{1 - q} \ket{\phi_{{\sf invalid}}} \ket{+}_{\decreg},\\
& = & \sqrt{p}\left( \sqrt{q} \ket{\phi_{{\sf good}}} + \sqrt{1-q} \ket{\phi_{{\sf invalid}}} \right) \ket{0}_{\decreg} + \sqrt{1-p} \left( \sqrt{q} \ket{\phi_{{\sf bad}}} + \sqrt{1-q} \ket{\phi_{{\sf invalid}}} \right) \ket{1}_{\decreg},
\end{eqnarray*}
where: 
\begin{itemize}
    \item $q$ is the probability with which $V^*$ convinces $P$ in the $i^{th}$ execution of $\Pi_{\zk}$,
    \item $\left| p-\frac{1}{2} \right| \leq \negl(\secparam)$: this follows from a similar argument as in the proof of Claim~\ref{clm:quanconc:inter},
    \item $\ket{\phi_{{\sf invalid}}}$ (defined on all the registers except the $\decreg$ register) is a superposition over the messages containing the $i^{th}$ iteration $\Pi_{\zk}$ transcripts that are not accepted by the verifier of $\Pi_{\zk}$,
    \item $\ket{\phi_{{\sf good}}}$ (defined on all the registers except the $\decreg$ register) is a superposition over the messages of the $i^{th}$ iteration containing $crs_i=\crs_i$  and, 
    \item $\ket{\phi_{{\sf bad}}}$ (defined on all the registers except the $\decreg$ register) is a superposition over the messages of the $i^{th}$ iteration containing $crs_i \neq \crs_i$. 
\end{itemize}
\noindent Once we apply~\Cref{lem:watrous}, the resulting state will be $W_i \ket{\Psi_{i-1}}=\left( \sqrt{q} \ket{\phi_{{\sf good}}} + \sqrt{1-q} \ket{\phi_{{\sf invalid}}} \right) \ket{0}_{\decreg}$ with probability negligibly close to 1.

\paragraph{Indistinguishability of $\distr_1$ and $\distr_2$.} As in the proof of Claim~\ref{clm:quanconc}, it suffices to argue that the distribution of measurements of $\{\bfR_i,\bfT_i\}_{i \in [q(\secparam)]},\bfNZ$ in $\ket{\phi_{{\sf good}}}$ along with the residual state in $\auxreg$ is computationally indistinguishable from the distribution of measurements of $\{\bfR_i,\bfT_i\}_{i \in [q(\secparam)]},\bfNZ$ in $\ket{\phi_{{\sf bad}}}$ along with the residual state in $\auxreg$. This follows from the perfect binding property of $\comm$ and the statistical soundness property of $\Pi_{\zk}$ using a similar argument used in Claim~\ref{clm:quanconc:inter}: if the verifier is not computed in superposition then the verifier cannot distinguish whether $crs_i = \crs_i$ or whether  $crs_i \neq \crs_i$. Moreover, this is true even if the verifier is computed in superposition and measuring the transcript registers in the end.   
\\

\noindent \underline{$\hybrid_3$}: Execute the simulator on input the state $\ket{\Psi}$.
\par From the statistical zero-knowledge property of $\Pi_{\nizk}$, it follows that the state output by $V^*$ in the hybrid $\hybrid_{2.q(\secparam)}$ is close in trace distance to the state output by $V^*$ in the hybrid $\hybrid_{3}$. 
\subsubsection{Post-Quantum Statistical Sender-Private OT}
\noindent The tool we use in this construction is a two-round oblivious transfer protocol that has computational security against senders and statistical security against receivers. We define this tool below. We instantiate this primitive with the QLWE-based construction in~\cite{BD18}.

\begin{definition}[Post-Quantum Statistical Sender-Private OT]
A two-round oblivious transfer is a tuple of algorithms $(\ot_1,\ot_2,\ot_3)$ which specifies the following protocol. \\

\noindent {\bf Round 1.} The receiver $\receiver$, on input security parameter $\secparam$, bit $\beta$, computes $(\smot_1,\state_R) \leftarrow \ot_1(1^{\secparam},\beta)$ and sends $\smot_1$ to the sender $\sender$.\\

\noindent {\bf Round 2.} The sender $\sender$, on input $\smot_1$ and message bits $(m_0,m_1)$, computes $\smot_2 \leftarrow \ot_2(1^{\secparam},\ot_1,(m_0,\allowbreak m_1))$. It sends $\smot_2$ to the receiver $R$. \\

\noindent {\bf Reconstruction.} The receiver computes $m' \leftarrow  \ot_3(1^{\secparam},\smot_1,\smot_2,\state_R)$. \\

\noindent {\bf Correctness.} For any $\beta \in \{0,1\}, (m_0,m_1) \in \{0,1\}^2$, we have: 
$$\prob \left[ \substack{ (\smot_1,\state_R) \leftarrow \ot_1(1^{\secparam},\beta)\\ \smot_2 \leftarrow \ot_2(1^{\secparam},\smot_1,(m_0,m_1))\\ m' \leftarrow \ot_3(1^{\secparam},\smot_1,\smot_2,\state_R)}\ :\ m' = m_{\beta} \right] = 1$$

\noindent {\bf Post-Quantum Receiver-Privacy.} The following holds: 
$$\{\ot_1(1^{\secparam},0)\} \approx_{c,Q} \{\ot_1(1^{\secparam},1)\}$$

\noindent {\bf Statistical Sender-Privacy.} There exists a computationally unbounded extractor such that for every the first round message $\smot_1$, it outputs a bit $b \in \{0,1\}$ such that the following holds for every $(m_0,m_1) \in \{0,1\}$:
$${\sf SD}(\ot_2(1^{\secparam},\smot_1,(m_0,m_1)),\ot_2(1^{\secparam},\smot_1,(m_b,m_b))) \leq \negl(\secparam),$$
where ${\sf SD}$ denotes statistical distance and $\negl$ is a negligible fucntion. 
\end{definition}

\subsection{Construction}

\begin{itemize}

    \item A 2-round post-quantum statistical sender-private OT, $\Pi_{OT}=(OT_1,OT_2,OT_3)$. Without loss of generality, we assume that the length of the randomness is $\secparam$. 
    
     We say that a transcript $\tau$, consisting of messages $(\msg_1,\msg_2)$, is valid with respect to sender's randomness $r$ and its input bits $(m_0,m_1)$ if the following holds: $(\msg_2,\state) \leftarrow \ot_2(1^{\secparam},\msg_1,m_0,m_1;r)$.

    \item A statistical zero-knowledge quantum argument system, $\Pi_{\zk}$, for the NP relation $\rel(\lang_\zk)$. We described the relation $\rel(\lang_\zk)$, parametrized by security parameter $\secparam$, described below:

     $$\rel\left(\lang_{\zk} \right)=\Bigg\{\left(\left(\tau_{\ot}^*,\{\tau_{\ot}^{(i,j)},b_{i,j}\}_{i \in [\secparam+2],j \in [\secparam]} \right);\ \left(r',\beta,r_{\ot}^*, \left\{r_{\ot}^{(i,j)},sh_{i,j},\alpha_{i,j} \right\}_{i \in [\secparam+2],j \in [\secparam]}\right) \right)\ :\ $$
     $$ \left(\substack{\forall i \in [\secparam+2],j \in [\secparam],\\ \tau_{\ot}^{(i,j)} \text{ is valid w.r.t }\\
     r_{\ot}^{(i,j)} \text{ and }
     (((1-b_{i,j})sh_{i,j} + b_{i,j} \cdot \alpha_{i,j}),\  (b_{i,j}sh_{i,j} + (1-b_{i,j}) \cdot \alpha_{i,j}))\\ \bigwedge \\
     \tau_{\ot}^* \text{ is valid w.r.t } r_{\ot}^* \text{ and } (r',r'\oplus \beta) } \right) \bigwedge \left(\substack{\forall i \in [\secparam+2],\\ \oplus_{j=1}^{\secparam} sh_{i,j} = w_i} \right) \bigwedge w=(r',\beta,r_{\ot}^*) \Bigg\}$$

\end{itemize}


\newpage
\begin{figure}[!htb]
   \begin{center}
   \begin{tabular}{|p{16cm}|}
    \hline \\
    {\bf Input of sender $S$}: $(m_0,m_1)$.  \\
    {\bf Input of receiver $R$}: $\beta$. \\
    \begin{itemize}
    \setlength\itemsep{0.7em}
        \item $R$ generates $r' \xleftarrow{\$} \{0,1\}$ uniformly at random. $R$ samples $r^*_{OT} \xleftarrow{\$} \{0,1\}^{\secparam}$. 
        \item Let $w=(r',\beta,r^*_{OT})$. For every $i \in [\secparam+2]$, $R$ generates shares $sh_{i,1},\ldots,sh_{i,\secparam}$ uniformly at random conditioned on $ \oplus_{j=1}^{\secparam} sh_{i,j} = w_i$. 
        \item For $i \in [\secparam+2]$, $R$ also generates bits $\alpha_{i,1},\ldots,\alpha_{i,\secparam}$ uniformly at random. 
        
        \item For $i \in [\secparam+2],j \in [\secparam]$, do the following: 
        \begin{itemize} 
        
        \item $S \leftrightarrow R$: $S$ and $R$ execute $\Pi_{OT}$ with $S$ playing the role of the receiver, denoted by $R'_{OT}$, in $\Pi_{OT}$ and $R$ playing the role of the sender, denoted by $S'_{OT}$, in $\Pi_{\ot}$. The input of the receiver $R'_{OT}$ in this protocol is 0, while the input of the sender $S'_{OT}$ is set to be $(sh_{i,j},\alpha_{i,j})$ if $b_{i,j}=0$, otherwise it is set to be $(\alpha_{i,j},sh_{i,j})$ if $b_{i,j}=1$, where the bit $b_{i,j}$ is sampled uniformly at random by $S'_{OT}$. We call this execution as $(i,j)^{th}$ execution of $\Pi_{OT}$.
        \par Call the resulting transcript of the protocol to be $\tau_{OT}^{(i,j)}$ and let $r_{OT}^{(i,j)}$ be the randomness used by the sender $S'_{OT}$ in $\Pi_{OT}$. 
        
        \item $R \rightarrow S$: $R$ sends $b_{i,j}$ to $S$. 
        
        
        \end{itemize}
        
    \item $S$ samples $r \xleftarrow{\$} \{0,1\}$. 
        
    \item $S \leftrightarrow R$: $S$ and $R$ execute $\Pi_{OT}$ with $S$ playing the role of the receiver, denoted by $R'_{OT}$, in $\Pi_{OT}$ and $R$ playing the role of the sender, denoted by $S'_{OT}$, in $\Pi_{\ot}$. The input of the receiver $R'_{OT}$ is $r$ and the input of the sender is $(r',r' \oplus \beta)$. Let $\widetilde{r}$ be the bit recovered by $R'_{OT}$ at the end of the protocol. 
    \par We call this execution the main execution of $\Pi_{OT}$. Call the resulting transcript of the protocol to be $\tau_{OT}^{*}$ and let $r^*_{OT}$ be the randomness used by the sender $S'_{OT}$ in $\Pi_{OT}$. 
        
    \item $S \leftrightarrow R$: $S$ and $R$ execute $\Pi_{\zk}$ with $R$ playing the role of the prover $P$ of $\Pi_{\zk}$ and $S$ playing the role of the verifier $V$ of $\Pi_{\zk}$. The instance is $\left(\tau_{OT}^*,\left\{ \tau_{OT}^{(i,j)}, b_{i,j} \right\}_{i \in [\ell_w],j \in [\secparam]} \right)$ and the witness is $\left( r',\beta,r^*_{OT}, \left\{r_{OT}^{(i,j)},sh_{i,j}, \alpha_{i,j} \right\}_{i \in [\ell_w],j \in [\secparam]} \right)$. If the verifier in $\Pi_{\zk}$ rejects, then $S$ aborts.
        
    \item $S$ sends $\left(\widetilde{r} \oplus m_0,\widetilde{r} \oplus r \oplus m_1 \right)$. 
        
    \end{itemize}
    \\ 
    \hline
   \end{tabular}
    \caption{}
    \label{fig:statot}
    \end{center}
\end{figure}


\noindent We show that the construction in~\Cref{fig:statot} is a post-quantum statistical receiver oblivious transfer protocol.
\paragraph{Correctness.} The correctness of our protocol follows from the correctness of $\Pi_{OT}$ and the completeness of $\Pi_{\zk}$. 

\paragraph{Statistical Receiver Privacy.} Let $S^*$ be a computationally unbounded sender. Instead of proving that the sender cannot distinguish receiver's bit to be 0 versus receiver's bit to be 1 with non-negligible probability, we instead prove the following: suppose receiver chooses its bit uniformly at random then the probability that the malicious sender can output $\beta$ with probability negligibly close to $\frac{1}{2}$. We prove this via a hybrid argument. In the first hybrid, the receiver behaves honestly and uses the receiver's bit to be $\beta$, where $\beta$ is chosen uniformly at random. We define a sequence of hybrids and show computational indistinguishability of every pair of consecutive hybrids. In the final hybrid, the receiver's bit will be information-theoretically hidden in the messages exchanged with $S^*$, which will prove the statistical receiver privacy property. \\

\noindent \underline{$\hybrid_1$}: In this hybrid, the receiver uses the bit $\beta$. 
\par Let $\varepsilon$ be the probability that $S^*$ outputs $\beta$. \\

\noindent \underline{$\hybrid_2$}: Let $\simr_{\zk}$ be the simulator associated with $\Pi_{\zk}$. Instead of $R$ playing the role of the prover in $\Pi_{\zk}$, it executes $\simr_{\zk}$.
\par From the statistical zero-knowledge property of $\Pi_{\zk}$, the output distributions of $S^*$ in the hybrids $\hybrid_1$ and $\hybrid_2$ are statistically close. The probability that $S^*$ outputs $\beta$ is negligibly close to $\varepsilon$ in this hybrid. \\

\noindent \underline{$\hybrid_{3.(i,j)}$, for $i \in [\secparam+2]$, $j \in [\secparam]$}: In the $(i,j)^{th}$ execution of $\Pi_{OT}$, perform inefficient extraction to extract $b'_{i,j}$ from $R'_{OT}$. Recall that $S^*$ plays the role of $R'_{OT}$. If $b'_{i,j}=b_{i,j}$ then set the input of the sender $S'_{OT}$ to be $(sh_{i,j},sh_{i,j})$ and if $b'_{i,j}\neq b_{i,j}$ then set the input of the sender $S'_{OT}$ to be $(\alpha_{i,j},\alpha_{i,j})$.   
\par The statistical indistinguishability of this hybrid and the previous hybrid follows from the statistical sender-privacy property of $\Pi_{OT}$. The probability that $S^*$ outputs $\beta$ is negligibly close to $\varepsilon$ in this hybrid. \\

\noindent \underline{$\hybrid_{4}$}: If there exists $i \in [\secparam+2]$, such that for every $j \in [\secparam]$, $b_{i,j}=b'_{i,j}$ then abort. 
\par The probability that this hybrid aborts is at most $\frac{\secparam+2}{2^{\secparam}}$. Conditioned on this hybrid not aborting, this hybrid is identical to the previous one. The probability that $S^*$ outputs $\beta$ is negligibly close to $\varepsilon$ in this hybrid. \\

\noindent \underline{$\hybrid_5$}: In the main execution of $\Pi_{OT}$, perform inefficient extraction to extract $r$ from $\Pi_{OT}$. If $r=0$, then set the input of the sender $S'_{OT}$ to be $(r',r')$ and if $r=1$, then set the input of the sender to be $(r' \oplus \beta,r' \oplus \beta)$. 
\par The statistical indistinguishability of $\hybrid_{4}$ and $\hybrid_5$ follows from the statistical sender-privacy property of $\Pi_{OT}$. The probability that $S^*$ outputs $\beta$ is negligibly close to $\varepsilon$ in this hybrid. \\

\noindent Note that in $\hybrid_5$, the receiver's bit $\beta$ is information-theoretically hidden in the messages exchanged with $S^*$. Thus, the probability that $S^*$ guesses $\beta$ in $\hybrid_5$ is $\frac{1}{2}$. This proves that the probability that the receiver outputs $\beta$ in $\hybrid_1$ is negligibly close to $\frac{1}{2}$. 

\paragraph{Post-Quantum Sender Privacy.} Let $R^*$ be a QPT receiver and $\adversary$ be a QPT adversary such that the following holds for some $m_0 \in \{0,1\},m_1 \in \{0,1\}$,
$$\mathbb{E}_{\viewR} \left[ \min\left\{ p_0,p_1 \right\}  \right] \geq \nu(\secparam),$$
where: 
\begin{itemize}
    \item $\viewR$ is the view of $R^*$. 
    \item $p_0 = \left| \prob[\adversary \text{ wins } \game_0(m_0,m_1)] - \frac{1}{2} \right| $
    \item $p_1 = \left| \prob[\adversary \text{ wins } \game_1(m_0,m_1)] - \frac{1}{2} \right| $
\end{itemize}

\noindent for some non-negligible fucntion $\nu(\secparam)$, where $G_0,G_1$ are defined with respect to $R^*$ and $\adversary$ as in~\Cref{sec:pqstatot}. We define $p_0^{({\bf i})}$ to be the absolute difference of the probability that $\adversary$ wins in the game $G_0$ and $\frac{1}{2}$ in the hybrid $\hybrid_{{\bf i}}$. Similarly, we define $p_1^{({\bf i})}$.     \\

\noindent Consider the following hybrids. \\

\noindent \underline{$\hybrid_1$}: This hybrid corresponds to the real execution of the protocol. 
\par By our initial assumption, we have $\mathbb{E}_{\viewR} \left[ \min\left\{ p_0,p_1 \right\}  \right] \geq \nu(\secparam)$.\\

\noindent \underline{$\hybrid_2$}: In this hybrid, defer the measurements of the receiver until the end. 
\par The output distributions of $\hybrid_1$ and $\hybrid_2$ are identical. Thus, $\mathbb{E}_{\viewR} \left[ \min\left\{ p_0^{(2)},p_1^{(2)} \right\}  \right] \geq \nu(\secparam)$. \\

\noindent \underline{$\hybrid_{3.(i,j)}$ for every $i \in [\secparam+2],j \in [\secparam]$}: Instead of computing $S$, perform the following hybrid extractor as follows. \\
We first give a description of the registers used in the system. 

\begin{itemize}
    \item $\randreg_{i,j}$ for $i \in [\secparam+2],j \in [\secparam]$: this consists of the sender $S$ -- recall that $S$ is taking the role of the receiver $R'$ of the $(i,j)^{th}$ execution of the OT -- randomness used by the extractor in the $(i,j)^{th}$ executions of $\Pi_{\ot}$.  
    \item $\bitreg_{i,j}$, for $i \in [\secparam+2],j \in [\secparam]$: this is a single-qubit register that contains a bit that is used by the extractor in the $(i,j)^{th}$ execution of the OT protocol.
    \item $\decreg$: it contains the decision register that indicates whether to rewind or not.
    \item $\auxreg$: this is initialized with the auxiliary state of the receiver.
    \item $\transreg_{i,j}$, for $i \in [\secparam+2],j \in [\secparam]$: it contains the transcripts of the $(i,j)^{th}$ executions of the OT protocol.
    \item $\transreg^*$: it contains the transcript of the protocol $\Pi_{\zk}$.
    \item $\regX$: this is a $\poly(\secparam)$-qubit ancillary register. 
\end{itemize}
\noindent {\em Description of $\hybrid_{3.(i,j)}.\extractor(x,\ket{\Psi})$}: The state of the extractor is initialized as follows: 

    $$\left( \overset{}{\underset{i \in [\ell_w],\\ j \in [\secparam]}{\bigotimes}} \ket{0}_{\bitreg_{i,j}} \ket{0}_{\randreg_{i,j}}\ket{0}_{\transreg_{i,j}} \right) \otimes \ket{0}_{\transreg^*} \otimes  \ket{\Psi}_{\auxreg} \otimes \ket{0}_{\decreg} \otimes \ket{0^{\otimes \poly(\secparam)}}_{\regX} $$

\begin{itemize}
    \item For $i' \in [\secparam+2],j' \in [\secparam]$ such that $(i',j') \geq (i,j)$, perform the following operations in superposition:

    \begin{itemize}
        \item Let $\ket{\widetilde{\Psi}}$ be the state of the system at the beginning of the $(i,j)^{th}$ execution. 
        \item Prepare the following state\footnote{We will assume, without loss of generality, that the length  of the random strings used in the OT protocol be $\secparam$.}: 
        $$ \ket{0}_{\bitreg_{i,j}}\ket{0}_{\randreg_{i,j}} \rightarrow \frac{1}{\sqrt{2^{\secparam+1}}} \sum_{\beta_{i,j} \in \{0,1\},s_{\ot}^{(i,j)} \in \{0,1\}^{\secparam}} \ket{\beta_{i,j}}_{\bitreg_{i,j}} \ket{s_{\ot}^{(i,j)}}_{\randreg_{i,j}} $$
        
        \item It then performs the $(i,j)^{th}$ execution of $\Pi_{\ot}$ along with the $R^*$'s message immediately after the $(i,j)^{th}$ execution of $\Pi_{\ot}$ in superposition. The resulting transcript is stored in the register $\transreg_{i,j}$. We denote the unitary that performs this step to be $U_{i,j}^{(1)}$.

        \item After $R^*$ sends the message immediately after the $(i,j)^{th}$ execution of $\Pi_{\ot}$, apply the  unitary $U_{i,j}^{(2)}$ defined as follows: 
        \begin{eqnarray*}
        & & U_{i,j}^{(2)} \ket{\beta_{i,j}}_{\bitreg_{i,j}} \ket{s_{\ot}^{(i,j)}}_{\randreg_{i,j}} \ket{\tau_{\ot}^{(i,j)},b_{i,j}}_{\transreg_{i,j}} \ket{0}_{\decreg}\\
        & = & \left\{ \begin{array}{ll} \ket{\beta_{i,j}}_{\bitreg_{i,j}} \ket{s_{\ot}^{(i,j)}}_{\randreg_{i,j}}\ket{\tau_{\ot}^{(i,j)},b_{i,j}}_{\transreg_{i,j}} \ket{{\sf Match}_{i,j}}_{\decreg}\ \text{if }{\sf acc}_{i,j}=1, \\ \ \\ \ket{\beta_{i,j}}_{\bitreg_{i,j}} \ket{s_{\ot}^{(i,j)}}_{\randreg_{i,j}}\ket{\tau_{\ot}^{(i,j)}, b_{i,j}}_{\transreg_{i,j}} \ket{+}_{\decreg},\ \text{ otherwise}\end{array} \right.
        \end{eqnarray*}
        Here, ${\sf Match}_{i,j}=0$ if and only if $\beta_{i,j}=b_{i,j}$, where $b_{i,j}$ is the bit sent by $R^*$ after the $(i,j)^{th}$ execution of the OT protocol. Moreover, ${\sf acc}_{i,j}=1$ only if $R^*$ has not aborted in $(i,j)^{th}$ OT execution. 
        \par Let $W_{i,j} = \amplifier\left(U_{i,j}^{(2)} U_{i,j}^{(1)} \right)$, where $\amplifier$ is defined in Lemma~\ref{lem:watrous}. Perform $W_{i,j} \ket{\widetilde{\Psi}}$ to obtain $\ket{\Psi_{i,j}}$.
    \end{itemize}
    \item For $i' \in [\secparam+2],j' \in [\secparam]$, such that $(i',j')<(i,j)$ perform the $(i',j')^{th}$ execution of $\Pi_{OT}$ as in the previous hybrid. 
    \item Perform the main execution of $\Pi_{OT}$ in superposition.
    \item Perform the execution of $\Pi_{\zk}$ in superposition. 
    \item Measure all the registers except $\auxreg$. Perform the OT reconstruction on input the measured transcript $\tau_{\ot}^{i,j}$, for $i \in [\secparam+2], j \in [\secparam]$, measured randomness $s_{\ot}^{i,j}$ and receiver's bit $b_{i,j}$. Call the resulting reconstruction output to be $\widetilde{sh_{i,j}}$. Let $\widetilde{u}_i = \oplus_{j=1}^{\ell_w} \widetilde{sh_{i,j}}$. Let $(r',\beta,r^*_{OT})$ be the concatenation of all the $\widetilde{u}_i$ bits. If either the $S'_{OT}$'s inputs to the main execution of $\Pi_{OT}$ is not $(r',r' \oplus \beta)$ or if $S'_{OT}$'s randomness is not $r^*_{OT}$ then abort. Otherwise output the state in $\auxreg$ along with $w$.  
\end{itemize}
\noindent From the post-quantum computational receiver privacy of $\Pi_{OT}$, it holds that $\hybrid_{3.(i,j)}$ and the previous hybrid are computationally indistinguishable. Thus, the following holds:
\par $\mathbb{E}_{\viewR} \left[ \min\left\{ p_0^{(3.(i,j))},p_1^{(3.(i,j))} \right\}  \right] \geq \nu_{3.(i,j)}(\secparam)$, where $\nu_{3.(i,j)}$ is a non-negligible function. \\

\noindent \underline{$\hybrid_4$}: In the main execution of $\Pi_{OT}$, the input of $R'_{OT}$ is set to be 0. Recall that in the previous hybrids, the input of $R'_{OT}$ was $r$.  
\par From the post-quantum computational receiver privacy of $\Pi_{OT}$, it holds that the hybrids $\hybrid_4$ and $\hybrid_{3.(\secparam+2,\secparam)}$ are computationally indistinguishable. Thus, the following holds:
\par $\mathbb{E}_{\viewR} \left[ \min\left\{ p_0^{(4)},p_1^{(4)} \right\}  \right] \geq \nu_{4}(\secparam)$, where $\nu_{4}$ is a non-negligible function.\\ 

\noindent \underline{$\hybrid_5$}: Sample $u \xleftarrow{\$} \{0,1\}$. If $\beta=1$, set the last message to be $(u,r' \oplus m_1)$. Else if $\beta=0$, set the last message to be $(r' \oplus m_0,u)$. 
\par From the computational soundness of $\Pi_{\zk}$, it follows that $\beta$ extracted from all the $(i,j)^{th}$, for $i \in [\secparam+2],j \in [\secparam]$, executions of $\Pi_{OT}$ is the same as the $\beta$ used by the receiver in the main execution of $\Pi_{OT}$ with probability negligibly close to 1. This further implies that the bit reconstructed by $R'_{OT}$ in the main execution is $\widetilde{r} = r' \oplus (r \cdot \beta)$. Thus, the last message sent by $S$ can be rewritten as follows: $(\widetilde{r} \oplus m_0,\widetilde{r} \oplus r \oplus m_1)=(r' \oplus (r \cdot \beta) \oplus m_0,r' \oplus (r \cdot \beta) \oplus r \oplus m_1)$. If $\beta=1$, we have the message sent by $S$ to be $(r' \oplus r\oplus m_0,r' \oplus m_1)$. If $\beta=0$, we have the message sent by $S$ to be $(r' \oplus m_0,r' \oplus r \oplus m_1)$. We now use the fact that $r$ is information-theoretically hidden from the receiver $R^*$ to show that the hybrids $\hybrid_4$ and $\hybrid_5$ are computationally indistinguishable. Thus, the following holds: 
\par  $\mathbb{E}_{\viewR} \left[ \min\left\{ p_0^{(5)},p_1^{(5)} \right\}  \right] \geq \nu_{5}(\secparam)$, where $\nu_{5}$ is a non-negligible function.\\

\noindent But one of the two sender's inputs are information-theoretically hidden from the malicious receiver; in one of the two games $G_0$ or $G_1$, the adversary can win only with negligible probability. This contradicts the fact that $\mathbb{E}_{\viewR} \left[ \min\left\{ p_0^{(5)},p_1^{(5)} \right\}  \right]$ is non-negligible. Thus, our construction satisfies post-quantum sender privacy.



\fullversion{
\section{Quantum Proofs of Knowledge for Bounded Concurrent QZK}
\label{sec:pok}

In this section, we construct a bounded concurrent QZK satisfying quantum proof of knowledge property assuming post-quantum statistical receiver-private oblivious transfer. 
\par We first start with a simpler case: we present a construction of quantum proof of knowledge for a standalone quantum ZK proof system for NP. We then show how to extend this construction to the bounded concurrent QZK setting.
}
\submversion{
\section{Bounded Concurrent QZKPoK: Proofs}
\label{sec:pok}
}

\fullversion{

\subsection{Construction of (Standalone) QZKPoK}
\label{sec:qzkpok}
\noindent We construct a (standalone)   QZKPoK $(P,V)$ for an NP relation $\rel(\lang)$. The following tools are used in our construction: 
\begin{itemize}
    \item A post-quantum statistical receiver-private oblivious transfer protocol, $\Pi_{\ot}=(\sender,\receiver)$\fullversion{   (Section~\ref{sec:statOT})} satisfying perfect correctness property. 
    
    \par We say that a transcript $\tau$ is valid with respect to sender's randomness $r$ and its input bits $(m_0,m_1)$ if $\tau$ can be generated with a sender that uses $r$ as randomness for the protocol and uses $(m_0,m_1)$ as inputs.
     
     
     \item A (standalone) QZK proof system $\Pi_{\zk}$ for $\rel(\lang_{\zk})$. We describe the relation $\rel(\lang_{\zk})$, parameterized by security parameter $\secparam$, below.
     
     $$\rel\left(\lang_{\zk} \right)=\Bigg\{\left(\left(x,\{\tau_{\ot}^{(i,j)},b_{i,j}\}_{i \in [\ell_w],j \in [\secparam]} \right);\ \left(w,\left\{r_{\ot}^{(i,j)},sh_{i,j},\alpha_{i,j} \right\}_{i \in [\ell_w],j \in [\secparam]}\right) \right)\ :\ $$
     $$ \left(\substack{\forall i \in [\ell_w],j \in [\secparam],\\ \tau_{\ot}^{(i,j)} \text{ is valid w.r.t }\\
     r_{\ot}^{(i,j)} \text{ and }
     (((1-b_{i,j})sh_{i,j} + b_{i,j} \cdot \alpha_{i,j}),\  (b_{i,j}sh_{i,j} + (1-b_{i,j}) \cdot \alpha_{i,j}))} \right) \bigwedge \left(\substack{\forall i \in [\ell_w],\\ \oplus_{j=1}^{\secparam} sh_{i,j} = w_i} \right) \bigwedge (x,w) \in \rel(\lang) \Bigg\}$$
     In other words, the relation checks if the shares $\{sh_{i,j}\}$ used in all the OT executions so far are defined to be such that the XOR of the shares $sh_{i,1},\ldots,sh_{i,\secparam}$ yields the bit $w_i$. Moreover, the relation also checks if $w_1 \cdots w_{\ell_w}$ is the witness to the instance $x$. 
\end{itemize}

\noindent We describe the construction in Figure~\ref{fig:cqzkpok}. 

\begin{figure}[!htb]
   \begin{center}
   \begin{tabular}{|p{16cm}|}
    \hline \\
    {\bf Input of $P$}: Instance $x \in \lang$ along with witness $w$. The length of $w$ is denoted to be $\ell_w$.  \\
    {\bf Input of $V$}: Instance $x \in \lang$. \\
    \begin{itemize}\setlength\itemsep{1em}
        \item For every $i \in [\ell_w]$, $P$ samples the shares $sh_{i,1},\ldots,sh_{i,\secparam}$ uniformly at random conditioned on $ \oplus_{j=1}^{\secparam} sh_{i,j} = w_i$, where $w_i$ is the $i^{th}$ bit of $w$.
        \item For every $i \in [\ell_w]$, $P$ samples the bits $\alpha_{i,1},\ldots,\alpha_{i,\secparam}$ uniformly at random. 
        
        \item For $i \in [\ell_w],j \in [\secparam]$, do the following: 
        \begin{itemize} 
        
         \item $P \leftrightarrow V$: $P$ and $V$ execute $\Pi_{\ot}$ with $V$ playing the role of the receiver in $\Pi_{\ot}$ and $P$ playing the role of the sender in $\Pi_{\ot}$. The input of the receiver in this protocol is 0, while the input of the sender is set to be $(sh_{i,j},\alpha_{i,j})$ if $b_{i,j}=0$, otherwise it is set to be $(\alpha_{i,j},sh_{i,j})$ if $b_{i,j}=1$, where the bit $b_{i,j}$ is sampled uniformly at random. 
        \par Call the resulting transcript of the protocol to be $\tau_{\ot}^{(i,j)}$ and let $r_{\ot}^{(i,j)}$ be the randomness used by the sender in $\ot$. 
        \item $P \rightarrow V$: $P$ sends $b_{i,j}$ to $V$. 
        
        
        \end{itemize}
        
    \item $P \leftrightarrow V$: $P$ and $V$ execute $\Pi_{\zk}$ with $P$ playing the role of the prover of $\Pi_{\zk}$ and $V$ playing the role of the verifier of $\Pi_{\zk}$. The instance is $\left(x,\left\{ \tau_{\ot}^{(i,j)}, b_{i,j} \right\}_{i \in [\ell_w],j \in [\secparam]} \right)$ and the witness is $\left( w,\left\{r_{\ot}^{(i,j)},sh_{i,j}, \alpha_{i,j} \right\}_{i \in [\ell_w],j \in [\secparam]} \right)$. If the verifier in $\Pi_{\zk}$ rejects, then $V$ rejects.
        
    \end{itemize}
    \\ 
    \hline
   \end{tabular}
    \caption{Construction of (standalone) QZKPoK for NP.}
    \label{fig:cqzkpok}
    \end{center}
\end{figure}

}

\submversion{
\ \\
We prove the completeness, quantum proof of knowledge and QZK properties of the construction presented in Section~\ref{sec:qzkpok}.
\ \\
}

\paragraph{Completeness.} The completeness follows by the completeness of $\Pi_{\zk}$. 

\paragraph{Quantum Proof of Knowledge.} Let $P^*$ be a malicious prover, that on input $(x,\rho)$, can convince $V$ to accept $x$ with non-negligible probability $\varepsilon$. Before we construct a QPT extractor $\extractor$, we first give a description of the registers used in the system. 

\begin{itemize}
    \item $\randreg_{i,j}$ for $i \in [\ell_w],j \in [\secparam]$: this consists of the receiver randomness used by the extractor in the $(i,j)^{th}$ executions of $\Pi_{\ot}$.  
    \item $\bitreg_{i,j}$, for $i \in [\ell_w],j \in [\secparam]$: this is a single-qubit register that contains a bit that is used by the extractor in the $(i,j)^{th}$ execution of the OT protocol.
    \item $\decreg$: it contains the decision register that indicates whether to rewind or not.
    \item $\auxreg$: this is initialized with the auxiliary state of the prover.
    \item $\transreg_{i,j}$, for $i \in [\ell_w],j \in [\secparam]$: it contains the transcripts of the $(i,j)^{th}$ executions of the OT protocol.
    \item $\transreg^*$: it contains the transcript of the protocol $\Pi_{\zk}$.
    \item $\regX$: this is a $\poly(\secparam)$-qubit ancillary register. 
\end{itemize}

\noindent \underline{Description of $\extractor(x,\ket{\Psi})$}: The state of the extractor is initialized as follows: 

    $$\left( \overset{}{\underset{i \in [\ell_w],\\ j \in [\secparam]}{\bigotimes}} \ket{0}_{\bitreg_{i,j}} \ket{0}_{\randreg_{i,j}}\ket{0}_{\transreg_{i,j}} \right) \otimes \ket{0}_{\transreg^*} \otimes  \ket{\Psi}_{\auxreg} \otimes \ket{0}_{\decreg} \otimes \ket{0^{\otimes \poly(\secparam)}}_{\regX} $$

\begin{itemize}
    \item For all $i \in [\ell_w],j \in [\secparam]$, perform the following operations in superposition: 
    \begin{itemize}
        \item Let $\ket{\widetilde{\Psi}}$ be the state of the system at the beginning of the $(i,j)^{th}$ execution. 
        \item Prepare the following state\footnote{We will assume, without loss of generality, that the length  of the random strings used in the OT protocol be $\secparam$.}: 
        $$ \ket{0}_{\bitreg_{i,j}}\ket{0}_{\randreg_{i,j}} \rightarrow \frac{1}{\sqrt{2^{\secparam+1}}} \sum_{\beta_{i,j} \in \{0,1\},s_{i,j}^{\ot} \in \{0,1\}^{\secparam}} \ket{\beta_{i,j}}_{\bitreg_{i,j}} \ket{s_{i,j}^{\ot}}_{\randreg_{i,j}} $$
        
        \item It then performs the $(i,j)^{th}$ execution of $\Pi_{\ot}$ along with the $P^*$'s message immediately after the $(i,j)^{th}$ execution of $\Pi_{\ot}$ in superposition. The resulting transcript is stored in the register $\transreg_{i,j}$. We denote the unitary that performs this step to be $U_{i,j}^{(1)}$.

        \item After $P^*$ sends the message immediately after the $(i,j)^{th}$ execution of $\Pi_{\ot}$, apply the  unitary $U_{i,j}^{(2)}$ defined as follows: 
        \begin{eqnarray*}
        & & U_{i,j}^{(2)} \ket{\beta_{i,j}}_{\bitreg_{i,j}} \ket{s_{\ot}^{(i,j)}}_{\randreg_{i,j}} \ket{\tau_{\ot}^{(i,j)},b_{i,j}}_{\transreg_{i,j}} \ket{0}_{\decreg}\\
        & = & \left\{ \begin{array}{ll} \ket{\beta_{i,j}}_{\bitreg_{i,j}} \ket{s_{\ot}^{(i,j)}}_{\randreg_{i,j}}\ket{\tau_{\ot}^{(i,j)},b_{i,j}}_{\transreg_{i,j}} \ket{{\sf Match}_{i,j}}_{\decreg}\ \text{if }{\sf acc}_{i,j}=1, \\ \ \\ \ket{\beta_{i,j}}_{\bitreg_{i,j}} \ket{s_{\ot}^{(i,j)}}_{\randreg_{i,j}}\ket{\tau_{\ot}^{(i,j)}, b_{i,j}}_{\transreg_{i,j}} \ket{+}_{\decreg},\ \text{ otherwise}\end{array} \right.
        \end{eqnarray*}
        Here, ${\sf Match}_{i,j}=0$ if and only if $\beta_{i,j}=b_{i,j}$, where $b_{i,j}$ is the bit sent by $P^*$ after the $(i,j)^{th}$ execution of the OT protocol. Moreover, ${\sf acc}_{i,j}=1$ only if $P^*$ has not aborted in $(i,j)^{th}$ OT execution. 
        \par Let $W_{i,j} = \amplifier\left(U_{i,j}^{(2)} U_{i,j}^{(1)} \right)$, where $\amplifier$ is defined in Lemma~\ref{lem:watrous}. Perform $W_{i,j} \ket{\widetilde{\Psi}}$ to obtain $\ket{\Psi_{i,j}}$.
    \end{itemize}
    \item Perform the execution of $\Pi_{\zk}$ in superposition. 
    \item Measure all the registers except $\auxreg$. Perform the OT reconstruction on input the measured transcript $\tau_{i,j}^{\ot}$, measured randomness $s_{i,j}^{\ot}$ and receiver's bit $b_{i,j}$. Call the resulting reconstruction output to be $\widetilde{sh_{i,j}}$. Let $\widetilde{w}_i = \oplus_{j=1}^{\ell_w} \widetilde{sh_{i,j}}$. Let $w$ be the concatenation of the bits $\widetilde{w}_1,\ldots,\widetilde{w}_{\ell_w}$. If $w$ is not a witness for $x$, abort. Otherwise output the state in $\auxreg$ along with $w$.  
\end{itemize}

\noindent We now argue that our protocol satisfies the proof of knowledge property. We assume that there is some total ordering defined on $(i,j)$, for $i \in [\ell_w]$ and $j \in [\secparam]$. Without loss of generality, we assume that $(1,1)$ is the least element in this total ordering. \\

\noindent \underline{$\hybrid_1$}: In this hybrid, $P^*$ interacts with the honest verifier $V$. The verifier $V$ accepts the proof with probability $\varepsilon$. \\

\noindent \underline{$\hybrid_{2.(i,j)}$, for $i \in [\ell_w],j \in [\secparam]$}: We define a hybrid verifier $\hybrid.V_{i,j}$ as follows. Let $\ket{\Phi}$ be the initial state of the system. Compute $\ket{\Psi_{i,j}} = \underset{(i',j') \leq (i,j)}{\prod} W_{i',j'}  \ket{\Phi}$. From here on, the rest of the iterations of $\Pi_{\ot}$ are computed by interacting with $P^*$ interacting honestly as specified in the real execution. The protocol $\Pi_{\zk}$ is computed by interacting with $P^*$ honestly as in the real execution. Finally, $\hybrid.V_{i,j}$  outputs its decision.
\par The probability that $\hybrid.V_{i,j}$ accepts is negligibly close to $\varepsilon$. Moreover, from the statistical security against senders, it follows that the state output by $P^*$ in this hybrid is close in trace distance to the state output by $P^*$ in the previous hybrid. We omit the proof since it essentially follows the same line of argument used in~\Cref{sec:quanconc}.    \\

\noindent \underline{$\hybrid_3$}: We define a hybrid verifier $\hybrid.V_3$ as follows.  Let $\ket{\Phi}$ be the initial state of the system. Compute $\ket{\Psi_{i,j}} = \underset{(i \in [\ell_w],j \in [\secparam])}{\prod} W_{i,j}  \ket{\Phi}$. The protocol  $\Pi_{\zk}$ is computed by interacting with $P^*$ honestly as in the real execution. Finally, $\hybrid.V_{3}$  outputs its decision. 
\par The probability that $\hybrid.V_{3}$ accepts is negligibly close to $\varepsilon$. This follows from the fact that $\hybrid.V_3$ is identical to $\hybrid.V_{i^*,j^*}$, where $(i^*,j^*)$ is the highest element in the total ordering. 
\par Moreover, the state output by $P^*$ in this hybrid is the same as the state output by $P^*$ in the previous hybrid. \\

\noindent \underline{$\hybrid_4$}: Define a hybrid verifier $\hybrid.V_4$ as follows: it executes the hybrid verifier $\hybrid.V_3$ until the step just before it outputs its decision. Let $\widetilde{sh_{i,j}}$ be the share output by the reconstruction algorithm of the receiver of $\Pi_{\ot}$. Let $\widetilde{w}_i = \oplus_{j=1}^{\ell_w} \widetilde{sh_{i,j}}$. Let $w$ be the concatenation of the bits $\widetilde{w}_1,\ldots,\widetilde{w}_{\ell_w}$. If $w$ is not a witness for $x$, abort. Otherwise, output the decision of $\hybrid.V_3$. 
\par The probability that $\hybrid.V_4$ accepts is negligibly close to $\varepsilon$. To see this, note that it is sufficient to argue that $|p_{3}-p_4| \leq \negl(\secparam)$, where $p_3$ is the probability with which $\hybrid.V_3$ aborts and $p_4$ is the probability with which $\hybrid.V_4$ aborts. This fact follows from the soundness of $\Pi_{\zk}$.  Moreover, the output state of the prover in $\hybrid_3$ is the same as the output state of the prover in $\hybrid_4$. \\

\noindent Note that the probability that the extractor $\extractor$ outputs a valid witness $w$ is the same as the probability that the hybrid verifier $\hybrid.V_4$ accepts. Moreover, the state output by $P^*$ when interacting with $\extractor$ is exactly the same as the state output by $P^*$ in $\hybrid_4$.  

\subsubsection{(Standalone) Quantum Zero-Knowledge} 
\label{subsec:qzk}
\noindent Suppose $(x,w) \in \rel(\lang)$. Suppose $V^*$ is a QPT verifier, that on input $(x,\ket{\Psi})$, interacts with the honest prover $P(x,w)$. We construct a simulator $\simr$ that takes as input $(x,\ket{\Psi})$ such that the output distribution of the simulator is computationally indistinguishable from the output distribution of $V^*$. \\
\ \\
\noindent \underline{Description of $\simr(x,\ket{\Psi})$}: 
\begin{itemize}
    \item For every $i \in [\ell_w]$, $\simr$ samples $sh_{i,1},\ldots,sh_{i,\secparam}$ uniformly at random. 
    \item For $i \in [\ell_w],j \in [\secparam]$, do the following: 
    \begin{itemize}
        \item $\simr$ and $V^*$ execute $\Pi_{\ot}$. The verifier $V^*$ takes the role of the receiver of $\Pi_{\ot}$ and $\simr$ takes the role of the sender. The input of the sender is $(sh_{i,j},sh_{i,j})$. 
        \item  $\simr$ samples a random bit $b_{i,j}$ and sends to $V^*$.
    \end{itemize}
    \item Let the state of the verifier, at this point of this protocol, be $\widetilde{\ket{\Psi}}$. Let $\simr_{\zk}$ be the $\Pi_{\zk}$ simulator associated with the $\Pi_{\zk}$ verifier $\widetilde{V^*}$, where $\widetilde{V^*}$ is the code used by $V^*$ in the execution of the protocol $\Pi_{\zk}$. Compute $\simr_{\zk}$ on input the state $\widetilde{\ket{\Psi}}$ and the instance $\left(x,\left\{ \tau_{\ot}^{(i,j)}, b_{i,j} \right\} \right)$. 
    \item Output the transcript of the protocol along with the private state of the verifier $V^*$. 
\end{itemize}

\noindent We now prove that the state output by $V^*$ when interacting with the honest prover $P(x,w)$ is computationally indistinguishable from the state output by $\simr(x,\ket{\Psi})$. Consider the following hybrids. As before we consider a total ordering on $(i,j)$, for $i \in [\ell_w],j \in [\secparam]$.  \\

\noindent \underline{$\hybrid_1$}: In this hybrid, $P$ and $V^*$ interact with each other. The output of this hybrid is the output of $V^*$. \\


\noindent \underline{$\hybrid_2$}: We define another hybrid prover $\hybrid_2.P$ that behaves as follows: it simulates the protocol $\Pi_{\zk}$ using the simulator $\simr_{\zk}$ as given in the description of $\simr$. The rest of the protocol is the same as in the hybrid $\hybrid_{1}$. 
\par The computational indistinguishability of $\hybrid_1$ and $\hybrid_{2}$ follows from the quantum zero-knowledge property of $\Pi_{\zk}$.\\

\noindent \underline{$\hybrid_3$}: The output of this hybrid is the output of $\simr(x,\ket{\Psi})$.

\begin{claim}
Assuming the post-quantum sender-privacy of $\Pi_{\ot}$, the output of $\hybrid_2$ is computationally indistinguishable from the output of $\hybrid_3$.  
\end{claim}
\begin{proof}
Let $\adversary$ be the distinguisher distinguishing $\hybrid_2$ and $\hybrid_3$. We are going to prove that $\adversary$ can only distinguish with negligible probability. Consider the intermediate hybrids. \\

\noindent \underline{$\hybrid_{2.1}$}: This is identical to $\hybrid_2$. \\

\noindent We now consider a series of hybrids that are defined with respect to $\adversary$. \\
 
\noindent \underline{$\hybrid_{2.2.(i^*,j^*)}^{\adversary}$ for $i^*\in[\ell_w], j^*\in [\secparam-1]$}: We say that a hybrid prover $\hybrid_{2.2.(i^*,j^*)}.P$, uses $\left( \left\{ \widehat{b}_{i,j} \right\}_{(i,j) \leq (i^*,j^*)} \right)$, if the following holds: it executes the prover as in $\hybrid_{2.1}$, except, for $(i,j) \leq (i^*,j^*)$, it uses the input $(sh_{i,j},sh_{i,j})$ if $\widehat{b}_{i,j}\neq b_{i,j}$ or uses the input $(\alpha_{i,j},\alpha_{i,j})$ if $\widehat{b}_{i,j}=b_{i,j}$. 
\par Now, execute the above hybrid prover  $\hybrid_{2.2.(i^*,j^*)}.P$ by adaptively choosing $\left( \left\{ \widehat{b}_{i,j} \right\}_{(i,j) \leq (i^*,j^*)} \right)$ (as a function of the current state of the verifier and $\adversary$) such that the output distributions of $\hybrid_{2.2.(i^*,j^*)}.P$ and $\hybrid_{2.2.(i^*,j^*)-1}.P$ cannot be distinguished by $\adversary$. If such a set of bits cannot be adaptively chosen then abort. Otherwise, this hybrid prover interacts with the verifier and the output of this hybrid is set to be the output of the verifier.  

\begin{claim}
\label{clm:int:advspecific1}
The hybrid $\hybrid_{2.2.(i^*,j^*)}^{\adversary}$ aborts with negligible probability. 
\end{claim}
\begin{proof}
We prove this by induction. 

\paragraph{Base Case: $(i^*,j^*)=(1,1)$.} We prove that $\hybrid_{2.2.(1.1)}^{\adversary}$ aborts with negligible probability. From the post-quantum sender privacy property of $\Pi_{\ot}$ (Definition~\ref{def:ot:definition1}), it follows that upon fixing the view of the verifier until the last message of execution of $(1,1)^{th}$ OT protocol, there exists a bit $\widehat{b}$, with probability negligibly close to 1, such that the adversary cannot win the Game $G_{\widehat{b}}\left( m_0^{(1,1)},m_1^{(1,1)} \right)$ (specified in Definition~\ref{def:ot:definition1}) where, $\left(m_0^{(1,1)},m_1^{(1,1)} \right)=(sh_{1,1},\alpha_{1,1})$ is $b_{1,1} = 0$ and  $\left(m_0^{(1,1)},m_1^{(1,1)} \right)= \left(\alpha_{1,1}, sh_{1,1} \right)$ is $b_{1,1} = 1$. 

\paragraph{Induction Hypothesis.} Suppose this statement is true for all $(i,j) < (i^*,j^*)$. We prove this statement to be true even for $(i^*,j^*)$ using proof by contradiction.
\par Suppose $\hybrid_{2.2.(i^*,j^*)}$ aborts with non-negligible probability then we design a QPT adversary ${\cal B}=({\cal B}_1,{\cal B}_2)$, that receives as input non-uniform quantum advice, and breaks the post-quantum sender privacy property of $\Pi_{\ot}$. 
\par We first define the non-uniform advice as follows: it computes the interaction between the hybrid prover $\hybrid_{2.2.(i^*,j^*)-1}.\prover$ and the verifier $V^*$, until the $((i^*,j^*)-1)^{th}$ execution of OT. It outputs the private state of $\hybrid_{2.2.(i^*,j^*)-1}.\prover$ and the private state of $V^*$. Call this state $\ket{\Psi_{adv}}$. \\

\noindent ${\cal B}_1$, upon receiving $\ket{\Psi_{adv}}$, takes the role of the receiver and interacts with the external challenger until the receiver's last message of the $(i^*,j^*)^{th}$ execution of $\Pi_{\ot}$. ${\cal B}_1$ uses the code of $V^*$ to interact with the external challenger. The external challenger on the other receives as input $m_0^{(i^*,j^*)}=sh_{i^*,j^*}$ and $m
_1^{(i^*,j^*)}=\alpha_{i^*,j^*}$ if $b_{i^*,j^*}=0$ or $m_1^{(i^*,j^*)}=sh_{i^*,j^*}$ and $m
_0^{(i^*,j^*)}=\alpha_{i^*,j^*}$ if $b_{i^*,j^*}=1$, from ${\cal B}_1$, where $sh_{i^*,j^*},\alpha_{i^*,j^*},b_{i^*,j^*}$ are generated as in $\hybrid_{2.1}$. It then outputs the state $\ket{\Psi_{1}}$ of $V^*$ obtained after the receiver sends the last message in the $(i^*,j^*)^{th}$ execution of $\Pi_{\ot}$.\\

\noindent ${\cal B}_2$, upon receiving $\ket{\Psi_{1}}$, computes the rest of the executions of $\Pi_{\ot}$ and $\Pi_{\zk}$ by emulating the interaction between the hybrid prover $\hybrid_{2.2.(i^*,j^*)}.P$ and the verifier $V^*$. It then inputs the final state of $V^*$ to $\adversary$. The output of ${\cal B}_2$ is set to be the output of $\adversary$. \\

\noindent Our initial assumption was that the $\hybrid_{2.2.(i^*,j^*)}$ aborts with non-negligible probability. This means that the adversary $\adversary$ can distinguish with non-negligible probability (over  the view of the verifier until the $(i^*,j^*)^{th}$ OT execution) both the games -- that is, distinguishing $(m_0^{(i^*, j^*)},m_1^{(i^*,j^*)})$ from $(m_1^{(i^*, j^*)},m_1^{(i^*,j^*)})$ (Game 0) as well as distinguishing $(m_0^{(i^*, j^*)},m_1^{(i^*,j^*)})$ from $(m_0^{(i^*, j^*)},m_0^{(i^*,j^*)})$ (Game 1)  -- with probability significantly greater than $\frac{1}{2}$. This in turn means that ${\cal B}$ can break the post-quantum sender privacy property of $\Pi_{\ot}$ with non-negligible probability. Thus, we arrived at a contradiction.    
\end{proof}

\noindent \underline{$\hybrid_{2.3}^{\adversary}$}: This hybrid is the same as $\hybrid_{2.2.(\ell_w,\secparam-1)}$, except that the hybrid prover will abort if for all $i\in[\ell_w]$ and all $j \in [\secparam-1]$, it holds that $b_{i,j} \neq \widehat{b_{i,j}}$.\\

\begin{claim}
The hybrids $\hybrid_{2.2.(\ell_w,\secparam-1)}^{\adversary}$ and $\hybrid_{2.3}^{\adversary}$ can be distinguished by $\adversary$ with only negligible probability.  
\end{claim}
\begin{proof}
 To prove this, we consider an alternate hybrid prover in $\hybrid_{2.2.(\ell_w,\secparam-1)}^{\adversary}$ which samples, for any $i$, $b_{i,j} \xleftarrow{\$} \{0,1\}$ at the end of first $(\secparam-1)$ iterations of $\Pi_{\ot}$. It then sets the input to the $\secparam^{th}$ iteration of $\Pi_{\ot}$ to be $\left(\oplus_{j=1}^{\secparam-1} m_{b_{i,j}}^{(i,j)},u \right)$ with probability $\frac{1}{2}$ or $\left(u,\oplus_{j=1}^{\secparam-1} m_{b_{i,j}}^{(i,j)} \right)$ with probability $\frac{1}{2}$, where $u \xleftarrow{\$} \{0,1\}$ and $\{m_{0}^{(i,j)},m_1^{(i,j)}\}_{j \in [\secparam-1]}$ are the inputs used in the first $\secparam-1$ iterations of $\Pi_{\ot}$. Note that the output distribution of  $\hybrid_{2.2.(\ell_w,\secparam-1)}^{\adversary}$ remains the same even with this change.  
 \par Since the $b_{i,j}$'s, for $j \leq \secparam-1$, are sampled after the $\widehat{b}_{i,j}$'s are decided, the probability that $\widehat{b}_{i,j}\neq b_{i,j}$ is $\frac{1}{2}$ for any $i \in [\ell_w],j \in [\secparam-1]$. Thus, the probability that $\left( \forall i \in [\ell_w],j \in [\secparam-1], b_{i,j} \neq \widehat{b}_{i,j} \right)$ is $\leq \frac{\ell_w}{2^{\secparam-1}}$. Conditioned on this bad event, the output distributions of $\hybrid_{2.2.(\ell_w,\secparam-1)}^{\adversary}$ and $\hybrid_{2.3}$ are identical. Thus, $\adversary$ cannot distinguish the hybrids $\hybrid_{2.2.(\ell_w,\secparam-1)}^{\adversary}$ and $\hybrid_{2.3}^{\adversary}$.\\
 \end{proof}

\noindent \underline{$\hybrid_{2.4.i^*}^{\adversary}$ for all $i \in [\ell_w]$}: This hybrid is the same as $\hybrid_{2.3}^{\adversary}$ except that the hybrid prover $\hybrid_{2.4.i^*}.P$ is additionally parameterized by $\left( \left\{ \widehat{b}_{i,\secparam} \right\}_{i \leq i^*} \right)$. The only change from the previous hybrid is that the hybrid prover, for $i \leq i^*$, use the input $(sh_{i,\secparam},sh_{i,\secparam})$ if $\widehat{b}_{i,\secparam} \neq b_{i,\secparam}$ or use $(\alpha_{i,\secparam},\alpha_{i,\secparam})$ if $\widehat{b}_{i,\secparam}=b_{i,\secparam}$. 
\par Now, consider a hybrid prover  $\hybrid_{2.4.i^*}.P$, parameterized by $\left( \left\{ \widehat{b}_{i,j} \right\}_{(i \leq i^*) \vee (j \leq \secparam-1)} \right)$, where $\left( \left\{ \widehat{b}_{i,j} \right\}_{(i,j) \leq (i^*,j^*)} \right)$, is defined to be such that the output distributions of $\hybrid_{2.i^*}.P$ and $\hybrid_{2.4.i^*-1}.P$ cannot be distinguished by $\adversary$. If such a hybrid prover does not exist, then abort. Otherwise, this hybrid prover interacts with the verifier and the output of this hybrid is set to be the output of the verifier. 

\begin{claim}
The hybrid $\hybrid_{2.4.i^*}$ aborts with negligible probability. 
\end{claim}

\noindent We omit the proof of the above claim since it uses the same inductive argument as the proof of Claim~\ref{clm:int:advspecific1}. \\

\noindent \underline{$\hybrid_{2.5}$}: This hybrid is the same as $\hybrid_3$, i.e. the output of the simulator. 
\par Conditioned on $\hybrid_{2.4.\ell_w}$ not aborting, the output distributions of  $\hybrid_{2.4.\ell_w}$ and $\hybrid_{2.5}$ are the same. This follows from the fact that if $\hybrid_{2.4.\ell_w}$ does not abort then the distribution of the inputs used in all the OT executions in the hybrids $\hybrid_{2.4.\ell_w}$ and $\hybrid_{2.5}$ are the same. Thus, $\adversary$ can distinguish $\hybrid_{2.4.\ell_w}$ and $\hybrid_{2.5}$ only with negligible probability. \\

\noindent From the above hybrids, it follows that $\adversary$ can distinguish the hybrids $\hybrid_{2.1}$ and $\hybrid_{2.5}$ with only negligible probability. 

\end{proof}

\subsection{Extending to Bounded Concurrent QZK Setting}
\label{sec:bcqzkpok}

We show how to adopt the construction in Section~\ref{sec:qzkpok} to the bounded concurrent setting. 

\subsubsection{Construction}

The construction of bounded concurrent quantum proof of knowledge system is the same as Figure~\ref{fig:cqzkpok}, except that we instantiate $\Pi_{\zk}$ with a modified version of the bounded concurrent QZK for NP construction in Section~\ref{sec:BCQZK}. 

\paragraph{Modified Bounded Concurrent QZK for NP Construction.} We modify the construction in Section~\ref{sec:BCQZK} as follows: Let $M$ be the round complexity of the statistical receiver private OT protocol and let $M = \secparam^c$ for some constant $c$, where $\secparam$ is the security parameter used in the OT protocol. Let $\secparam'$ denote a different security parameter used in $\Pi_{\zk}$ such that $\secparam' - M \geq \secparam$. We set the threshold of matched slots needed in the WI protocol from~\Cref{sec:BCQZK}, to instead be, $60Q^7 \secparam' + Q^4 \secparam' - M$, provided we set $\secparam' \gg M$. 
\par The completeness and soundness proofs of this modified construction are the same as the ones in Section~\ref{sec:BCQZK}. Even the quantum zero-knowledge property is the same as before. However, we will need the simulator to satisfy a stronger property defined next. 

\paragraph{Strong QZK Simulator.} We explain the differences between the strong QZK simulator and the simulator $\Simu$ defined in~\Cref{sec:BCQZK}. The strong simulator proceeds as follows:
\begin{enumerate}
    \item It simulates block-by-block similarly to $\Simu$, but instead of using $\ket{+}_{\decreg}$ only in the decision bit of the registers that aborted, it can choose to use $\ket{+}_{\decreg}$ on other transcripts as well. This decision is based on an efficiently computable function $f$. For example, on a transcript $t$, it can set $\decreg$ to $\ket{+}_{\decreg}$ conditioned on $f(t)=1$.
    \item After rewinding a block, it measures the transcript of that block instead of waiting until the end to measure. Furthermore, it keeps tracks of the total number of blocks on which the measurement outcomes correspond to a transcript in which it used $\ket{+}_{\decreg}$.
    \item If at any point, the number of block measurement outcomes that correspond to $\ket{+}_{\decreg}$ transcripts is greater than $M$, it aborts.
\end{enumerate}

Conditioned on the strong simulator not aborting in Step 3, its output is computationally indistinguishable from the output of $\Simu$.





\paragraph{Arguing Bounded Concurrent Quantum Zero-Knowledge for Figure~\ref{fig:cqzkpok}.} In the proof of bounded concurrent quantum zero-knowledge, we now need to handle $Q$-session verifiers, where $Q$ is the number of sessions associated with the protocol.
\par The description of the simulator is the same as in the description of the simulator in~\Cref{subsec:qzk} except that we now execute the bounded concurrent strong simulator described above for $\Pi_{\zk}$ instead of the standalone ZK simulator.
\par We describe the hybrids below. Our description of hybrids and the proofs of indistinguishability between the hybrids closely follows the structure of the proof in~\Cref{subsec:qzk} and hence we only highlight the main differences.\\

\noindent \underline{$\hybrid_1$}: Same as $\hybrid_1$ in~\Cref{subsec:qzk}.\\ 

\noindent \underline{$\hybrid_2$}: We define another hybrid prover $\hybrid_2.P$ that behaves as follows: it simulates the protocol $\Pi_{\zk}$ using the bounded concurrent simulator $\simr_{\zk}$ as given in the description of $\simr$. The rest of the protocol is the same as in the hybrid $\hybrid_{1}$. 
\par The computational indistinguishability of $\hybrid_1$ and $\hybrid_{2}$ follows from the bounded concurrent quantum zero-knowledge property of $\Pi_{\zk}$.\\

\noindent \underline{$\hybrid_3$}: The output of this hybrid is the output of the simulator. 

\begin{claim}
Assuming the post-quantum sender-privacy of $\Pi_{\ot}$, the output of $\hybrid_2$ is computationally indistinguishable from the output of $\hybrid_3$.  
\end{claim}
\begin{proof}
Let $\adversary$ be the distinguisher distinguishing $\hybrid_2$ and $\hybrid_3$. We are going to prove that $\adversary$ can only distinguish with negligible probability. Consider the intermediate hybrids. \\

\noindent \underline{$\hybrid_{2.1}$}: This is identical to $\hybrid_2$. \\


\noindent We now consider a sequence of hybrids. Each hybrid in this sequence is parameterized by the number of OT executions and the number of sessions. We first replace the inputs of all the OTs corresponding to one session before we move on to the next session. That is, each hybrid is of the form $\hybrid_{2.2.i.j.k}$. We first start with $i=1,j=1,k=1$. We then iterate over $j=1,\ldots,\secparam-1$, and then we increment $i$. We keep doing this, until we reach the hybrid $\hybrid_{2.2.\ell_w.\secparam-1.k}$. The next hybrid is $\hybrid_{2.3.k}$. After this, we have the hybrid,  $\hybrid_{2.4.i.k}$, where $i=1$ and $k=1$. We then iterate over $i=1,\ldots,\ell_w$. Immediately after the hybrid $\hybrid_{2.4.\ell.k}$, we have the hybrid $\hybrid_{2.5.k}$. At this point, all the OTs corresponding to the first session have been replaced. Immediately after this hybrid, we then move on to the hybrid $\hybrid_{2.2.i.j.k}$, where $i=1,j=1,k=2$. We then continue as above, until we reach the hybrid $\hybrid_{2.5.Q}$. The hybrid that follows $\hybrid_{2.5.Q}$ is the hybrid $\hybrid_{2.6}$. \\
 
\noindent \underline{$\hybrid_{2.2.i.j.k}^{\adversary}$ for $i \in[\ell_w], j \in [\secparam-1], k \in [Q]$}: We say that a prover, uses $\left( \left\{ \widehat{b}_{j} \right\}_{j\leq i} \right)$ in a particular transcript, if the following holds: in superposition, execute the prover as in $\hybrid_{2.1}$, except that in the first $j\leq i$ OT executions to end in the transcript being generated, it uses the input $(sh_{j},sh_{j})$ if $\widehat{b}_{j}\neq b_{j}$ or uses the input $(\alpha_{j},\alpha_{j})$ if $\widehat{b}_{j}=b_{j}$.

We define a hybrid prover $\hybrid_{2.2.i.j.k}.P$ as follows:
\begin{itemize}
    \item For $k'< k$, it chooses the input to the $(i,j)^{th}$ execution of the $k'$-session to be $(\alpha_{i,j},\alpha_{i,j})$, where $\alpha_{i,j}$ is sampled uniformly at random. 
    \item For $k' > k$, it chooses the inputs to the OT executions as done by the prover in $\hybrid_{2.1}$.  
    \item For $k' = k$, the hybrid prover, {\em in superposition}, adaptively uses $\left( \left\{ \widehat{b}_{(i',j')} \right\}_{(i',j') \leq (i,j)} \right)$ such that the output distributions of $\hybrid_{2.2.(i,j).k}.P$ and $\hybrid_{2.2.(i,j)-1.k}.P$ (if $(i,j)=(1,1)$ then the hybrid $\hybrid_{2.2.(i,j).k}.P$ is defined to be $\hybrid_{2.4.k-1}.P$) cannot be distinguished by $\adversary$. That is, since the whole protocol is being executed in superposition, as a function of each term in the superposition, the bits $\left( \left\{ \widehat{b}_{(i',j')} \right\}_{(i',j') \leq (i,j)} \right)$ are adaptively determined and stored in a separate register to be used by the hybrid prover. If the entire sequence of bits cannot be determined, then store $\bot$ in the same register. At the end of the protocol, we measure this register. If the outcome is $\bot$ then abort, otherwise, measure the registers storing the transcript, trace out all the registers except the register containing the auxiliary state of the verifier and output the measured transcript along with the residual auxiliary state of the verifier.
    \end{itemize}


\begin{claim}
\label{clm:int:advspecific}
The hybrid $\hybrid_{2.2.i.j.k}^{\adversary}$ aborts with negligible probability. 
\end{claim}
\begin{proof}
We prove this by induction. 

\paragraph{Base Case: $(i,j)=(1,1)$.} We prove that $\hybrid_{2.2.1.1.k}^{\adversary}$ aborts with negligible probability. Suppose not. We demonstrate a reduction that breaks the sender privacy property of OT with non-negligible probability. The goal of the reduction is to win in both the games with non-negligible probability: in the first game, it needs to distinguish the case when the challenger uses the input $(m_0,m_1)$, where $(m_0,m_1)=(sh_{1,1},\alpha_{1,1})$ with probability $\frac{1}{2}$ and $(m_0,m_1)=(\alpha_{1,1},sh_{1,1})$ or when it uses the input $(sh_{1,1},sh_{1,1})$. In the second game, it needs to distinguish the case when the challenger uses the input $(m_0,m_1)$, where $(m_0,m_1)$ is defined as above, versus the case when it uses the input $(\alpha_{1,1},\alpha_{1,1})$. 
\par We describe a reduction that does the following: just like the simulator of the bounded concurrent QZK, it divides the entire protocol transcript into blocks $B_1,\ldots,B_{L}$, where $L$ is as defined in $\Pi_{\zk}$. For every block $B_i$, it does the following:  it executes the simulator in superposition. If it encounters a message of $(1,1)^{th}$ OT, it stops simulating the rest of the block. It then puts $\ket{+}$ state in the decision register. Otherwise, it continues the simulation until the end of the block. It then performs Watrous rewinding. At the end, it measures the trascript. There are two cases:
\begin{itemize}
    \item If the block $B_i$ has completed it's execution and if no message in the $(1,1)^{th}$ OT execution has been encountered so far, then continue to the block $B_{i+1}$.  
    \item If a message in the $(1,1)^{th}$ OT execution has been encountered then forward to the challenger of the OT protocol. Use the response by the challenger to continue the execution of $B_i$, albeit by interacting $V^*$, rather running $V^*$ in superposition. Once this is completed, move on to the block $B_{i+1}$. 
\end{itemize}

\noindent Finally, after all the blocks are executed, the transcript along with the final private state of the verifier is input to $\adversary$. Note that there will be at most $M$ (the round complexity of the statistical receiver private OT protocol) blocks where the simulator's decision to rewind gets changed to $\ket{+}$ instead of using the rewinding decision that it was using previously. This is why we change the parameters of the zero-knowledge simulator to make sure that there are still enough blocks being appropriately rewound as needed for the simulation to execute correctly.
\par If the challenger uses the input $(m_0,m_1)$ then it corresponds to the hybrid $\hybrid_{2.4.k-1}$ (if $k=1$, then $\hybrid_{2.4.k-1}$ is the hybrid $\hybrid_{2.1}$) and if the challenger uses the input $(sh_{1,1},sh_{1,1})$ or the input $(\alpha_{1,1},\alpha_{1,1})$ then it corresponds to the hybrid $\hybrid_{2.2.1.1.k}$. 
\par If $\adversary$ can distinguish the hybrids $\hybrid_{2.2.1.1.k}$ and $\hybrid_{2.4.k-1}$  with non-negligible probability then the reduction can break the sender privacy property with non-negligible probability. 

\paragraph{Induction Hypothesis.} Suppose this statement is true for all $(i',j') < (i,j)$. We then show this to be true even for $(i,j)$. 
\par Suppose this is not true. We then design a reduction that violates the sender privacy of OT with non-negligible probability. The reduction essentially is defined along the same lines as the base case, except that the first $((i,j)-1)$ OT executions of the $k^{th}$ verifier are generated as non-uniform advice. That is, the advice generation algorithm executes the protocol in superposition and in each term of the superposition, it halts after the final $((i,j)-1)^{th}$ execution of the $k^{th}$. Until this point, the inputs to the $(i',j')^{th}$ OT execution, for $(i',j') < (i,j)$, is set to be either $(sh_{(i',j')},sh_{(i',j')})$ or $(\alpha_{(i',j')},\alpha_{(i',j')})$, depending on the distinguishing probability of $\adversary$. Finally, the advice generation measures the transcript and outputs the transcript along with the residual state. 
\par The reduction then performs block-by-block execution of the protocol and interacts with the challenger as in the base case. The final state of the verifier along with the transcript of the entire protocol is input to $\adversary$. 
\par As in the base case, if $\adversary$ can distinguish the two hybrids with non-negligible probability then the reduction can also violate the sender privacy property with non-negligible probability.   
 
\end{proof}

\noindent \underline{$\hybrid_{2.3.k}^{\adversary}$ for $k \in [Q]$}: This hybrid is the same as $\hybrid_{2.2.\ell_w,\secparam-1.k}$, except that the hybrid prover will abort if for all $i\in[\ell_w]$ and all $j \in [\secparam-1]$, it holds that $b_{i,j} \neq \widehat{b_{i,j}}$.\\

\begin{claim}
The hybrids $\hybrid_{2.2.\ell_w,\secparam-1.k}^{\adversary}$ and $\hybrid_{2.3.k}^{\adversary}$ can be distinguished by $\adversary$ with only negligible probability.  
\end{claim}
\begin{proof}
 To prove this, we consider an alternate hybrid prover in $\hybrid_{2.2.\ell_w,\secparam-1.k}^{\adversary}$ which samples, for any $i$, $b_{i,j} \xleftarrow{\$} \{0,1\}$ at the end of first $(\secparam-1)$ iterations of $\Pi_{\ot}$. It then sets the input to the $\secparam^{th}$ iteration of $\Pi_{\ot}$ to be $\left(\oplus_{j=1}^{\secparam-1} m_{b_{i,j}}^{(i,j)},u \right)$ with probability $\frac{1}{2}$ or $\left(u,\oplus_{j=1}^{\secparam-1} m_{b_{i,j}}^{(i,j)} \right)$ with probability $\frac{1}{2}$, where $u \xleftarrow{\$} \{0,1\}$ and $\{m_{0}^{(i,j)},m_1^{(i,j)}\}_{j \in [\secparam-1]}$ are the inputs used in the first $\secparam-1$ iterations of $\Pi_{\ot}$. Note that the output distribution of  $\hybrid_{2.2.(\ell_w,\secparam-1)}^{\adversary}$ remains the same even with this change.  
 \par Since the $b_{i,j}$'s, for $j \leq \secparam-1$, are sampled after the $\widehat{b}_{i,j}$'s are decided, the probability that $\widehat{b}_{i,j}\neq b_{i,j}$ is $\frac{1}{2}$ for any $i \in [\ell_w],j \in [\secparam-1]$. Thus, the probability that $\left( \forall i \in [\ell_w],j \in [\secparam-1], b_{i,j} \neq \widehat{b}_{i,j} \right)$ is $\leq \frac{\ell_w}{2^{\secparam-1}}$. Conditioned on this bad event, the output distributions of $\hybrid_{2.2.\ell_w.\secparam-1.k}^{\adversary}$ and $\hybrid_{2.3.k}^{\adversary}$ are identical. Thus, $\adversary$ cannot distinguish the hybrids $\hybrid_{2.2.\ell_w.\secparam-1.k}^{\adversary}$ and $\hybrid_{2.3.k}^{\adversary}$.\\
 \end{proof}

\noindent \underline{$\hybrid_{2.4.i^*.k}^{\adversary}$ for all $i^* \in [\ell_w],k \in [Q]$}: This hybrid is the same as $\hybrid_{2.3.k}^{\adversary}$ except that the hybrid prover $\hybrid_{2.4.i^*.k}.P$ is additionally parameterized by $\left( \left\{ \widehat{b}_{i,\secparam} \right\}_{i \leq i^*} \right)$. The only change from the previous hybrid is that the hybrid prover, for $i \leq i^*$, use the input $(sh_{i,\secparam},sh_{i,\secparam})$ if $\widehat{b}_{i,\secparam} \neq b_{i,\secparam}$ or use $(\alpha_{i,\secparam},\alpha_{i,\secparam})$ if $\widehat{b}_{i,\secparam}=b_{i,\secparam}$. 
\par Now, consider a hybrid prover  $\hybrid_{2.4.i^*.k}.P$, parameterized by $\left( \left\{ \widehat{b}_{i,j} \right\}_{(i \leq i^*) \vee (j \leq \secparam-1)} \right)$, where $\left( \left\{ \widehat{b}_{i,j} \right\}_{(i,j) \leq (i^*,j^*)} \right)$, is defined to be such that the output distributions of $\hybrid_{2.4.i^*.k}.P$ and $\hybrid_{2.4.i^*-1.k}.P$ cannot be distinguished by $\adversary$. If such a hybrid prover does not exist, then abort. Otherwise, this hybrid prover interacts with the verifier and the output of this hybrid is set to be the output of the verifier. 

\begin{claim}
The hybrid $\hybrid_{2.4.i^*.k}$ aborts with negligible probability. 
\end{claim}

\noindent We omit the proof of the above claim since it uses the same inductive argument as the proof of Claim~\ref{clm:int:advspecific}. \\

\noindent \underline{$\hybrid_{2.5.k}$ for $k \in [Q]$}: We define a hybrid prover that does the following:
\begin{itemize}
    \item For $k' \leq k$, it chooses the input to the $(i,j)^{th}$ execution to be $(\alpha_{i,j},\alpha_{i,j})$, where $\alpha_{i,j}$ is sampled uniformly at random. 
    \item For $k' > k$, it chooses the inputs to the OT executions as done by the prover in $\hybrid_{2.1}$.\\
    \end{itemize}
    \par Conditioned on $\hybrid_{2.4.\ell_w.k}$ not aborting, the output distributions of  $\hybrid_{2.4.\ell_w.k}$ and $\hybrid_{2.5.k}$ are the same. This follows from the fact that if $\hybrid_{2.4.\ell_w}$ does not abort then the distribution of the inputs used in all the OT executions in the hybrids $\hybrid_{2.4.\ell_w.k}$ and $\hybrid_{2.5.k}$ are the same. Thus, $\adversary$ can disitnguish $\hybrid_{2.4.\ell_w.k}$ and $\hybrid_{2.5.k}$ only with negligible probability.\\

\noindent \underline{$\hybrid_{2.6}$}: This hybrid is the same as $\hybrid_3$, i.e. the output of the simulator. 
\par The output distributions of $\hybrid_{2.5.Q}$ and $\hybrid_{2.6}$ are identical.  \\

\noindent From the above hybrids, it follows that $\adversary$ can distinguish the hybrids $\hybrid_{2.1}$ and $\hybrid_{2.6}$ with only negligible probability.

\end{proof}

\section{Bounded Concurrent QZK for QMA}
\label{sec:bcqzk:qma}
We show a construction of bounded concurrent QZK for QMA. Our starting point is the QZK protocol for QMA from \cite{BJSW16}, which constructs QZK for QMA from QZK for NP, a commitment scheme and a coin-flipping protocol. We first simplify the protocol of~\cite{BJSW16} as follows: their protocol requires security of the coin-flipping protocol to hold against malicious adversaries whereas we only require the security to hold against adversaries who don't deviate from the protocol specification. Once we simplify this step, the resulting protocol will satisfy the property that the QZK simulator only rewinds during the execution of the underlying simulator simulating the QZK protocol for NP. This modification makes it easier for us to extend this protocol to the bounded concurrent setting. We simply instantiate the underlying QZK for NP protocol with its bounded concurrent version. 

\subsection{Bounded Concurrent QZK for QMA}
We first recall the QZK for QMA construction from \cite{BJSW16}. Their protocol is specifically designed for the QMA promise problem called $k$-local Clifford Hamiltonian, which they showed to be QMA-complete for $k = 5$. We restate it here for completeness.

\begin{definition}
[$k$-local Clifford Hamiltonian Problem~\cite{BJSW16}] For all $i \in [m]$, let $H_i = C_i \ket{0^{\otimes k}} \bra{0^{\otimes k}}C^{\dagger}_i$ be a Hamiltonian term on $k$-qubits where $C_i$ is a Clifford circuit. 
\begin{itemize}
    \item Input: $H_1,H_2, \ldots, H_m$ and strings $1^p$, $1^q$ where $p$ and $q$ are positive integers satisfying $2^p > q$.
    \item Yes instances ($A_{yes}$):  There exists an $n$-qubit state such that $\tr[\rho \sum_i H_i]  \leq 2^{-p}$
    \item No instances ($A_{no}$): For every $n$-qubit state $\rho$, the following holds:  $\tr[\rho \sum_i H_i] \geq \frac{1}{q}$
    
\end{itemize}
\end{definition}

\paragraph{BJSW Encoding.} A key idea behind the construction from \cite{BJSW16} is for the prover to encode its witness, $\ket{\psi}$, using a secret-key quantum authentication code (that also serves as an encryption) that satisfies the following key properties needed in the protocol. For any state $\ket{\psi}$, denote the encoding of $\ket{\psi}$ under the secret-key $s$ by $\E_s(\ket{\psi})$. 

\begin{enumerate}

\item {\em Homomorphic evaluation of Cliffords.} Given $\E_s(\ket{\psi})$, and given any Clifford circuit $C$, it is possible to compute $\E_{s'}(C\ket{\psi})$ efficiently. Moreover, $s'$ can be determined efficiently by knowing $C$ and $s$. 

\item {\em Homomorphic measurements of arbitrary Clifford basis.} For any Clifford circuit $C$ and any state $\ket{\psi}$, a computational basis measurement on $C\ket{\psi}$ can be recovered from a computational basis measurement on $\E_{s'}(C\ket{\psi})$ along with $C$ and $s$. Formally, there is a classically efficiently computable function $g$ such that if $y$ is sampled from the distribution induced by measuring the state $\E_{s'}(C\ket{\psi})$ in the computational basis, then $g(s,C,y)$ is sampled from the distribution induced by measuring the state $C\ket{\psi}$ in the computational basis.

\item {\em Authentication of measurement outcomes.} For any $s$ and any clifford $C$, there is a set $\cS_{s,C}$ such that for any state $\ket{\psi}$, and any computational basis measurement outcome $y$ performed on $\E_{s'}(C \ket{\psi})$, it holds that $y \in \cS_{s,C}$. Furthermore, for any $y$, given $s$ and $C$, it can be efficiently checked whether $y \in \cS_{s,C}$.

\item {\em Simulatability of authenticated states}: there exists an efficient QPT algorithm $B$ such that for any adversary $\adversary$, every $x \in A_{\yes}$ along with witness $\ket{\psi}$, $\poly(\secparam)$-qubit advice $\rho$, the following holds: the probability that $\cP(s,C_{r^*}^{\dagger},\adversary(\E_s(\ket{\psi})))$ outputs 1 is negligibly close to the probability that $\cP(s,C_{r^*}^{\dagger},\adversary(B(x,s,r^*))))$ outputs 1, where $\cP$ is defined below. 
$$\cP(s,C^\dagger,y)=\begin{cases}
1 \text{ if } g(s,C^\dagger,y) \neq 0 \\
0 \text{ otherwise}\end{cases}$$

In both the events, $s$ and $r^*$ are chosen uniformly at random. 

\end{enumerate}

 \noindent The QMA verifier of the $k$-local Clifford Hamiltonian problem measures terms of the form $C \ket{0^{\otimes k}} \bra{0^{\otimes k}}C^\dagger$ where $C$ is a Clifford circuit on a witness $\ket{\psi}$. Specifically, a verifier will first apply $C^\dagger$ and then measure in the computational basis. If the outcome of the measurement is the $0$ string, it rejects.  Otherwise, it accepts.  In the zero-knowledge case, the witness will be encoded, $\E_s(\ket{\psi})$, but the verifier can still compute $\E_{s}(C^\dagger \ket{\psi})$ and measure to obtain some string $y$.  Then, the prover can prove to the verifier (in NP) that $y$ corresponds to a non-zero outcome on a measurement of $C^\dagger \ket{\psi}$ instead using the predicate $\cP$.

\par We follow the approach of BJSW~\cite{BJSW16}, except that we instantiate the coin-flipping protocol in a specific way in order to get concurrency when instantiating the underlying QZK for NP with our bounded concurrent construction. 

\paragraph{Construction.} We use the following ingredients in our construction: 
\begin{itemize}
  \item Statistical-binding and quantum-concealing commitment scheme, $(\comm,\receiver)$\fullversion{ (Section~\ref{sec:prelims:commit})}.
  
  
  \item Bounded concurrent QZK proof system, denoted by $\Pi_{\np}$, for the following language (Section~\ref{sec:BCQZK}). 
    $$\lang=\left\{ \left((\bfr,\bfc,\bfr',\bfc',r^*,y,b)\ ;\ (s,\ell,a,\ell') \right)\ :\ \substack{ \cP(s,C_{r^*}^{\dagger},y)=1 \\ \bigwedge \\ \comm(1^{\secparam},\bfr,s;\ell) = \bfc \\
    \bigwedge \\
    \comm(1^{\secparam},\bfr',a;\ell') = \bfc'
    \\  \bigwedge \\
    a\oplus b = r^*} \right\}$$
    Let $Q$ be the maximum number of sessions associated with the protocol. 
\end{itemize}

\noindent We describe the construction of bounded concurrent QZK for QMA (with bound $Q$) in Figure~\ref{fig:qmazk}. We prove the following.


\begin{figure}[!htb]
   \begin{tabular}{|p{16cm}|}
   \hline \\
\par \textbf{Instance:} A $k$-local Clifford Hamiltonian, $H = \overset{M}{\underset{r=1}{\sum}} C_r \ket{0^{\otimes k}} \bra{0^{\otimes k}} C_r^{\dagger}$.
\par \textbf{Witness:} $\ket{\psi}$
\begin{itemize}\setlength\itemsep{1em}
    \item $P \leftrightarrow V$: Prover $P$  samples a secret-key $s \xleftarrow{\$} \{0,1\}^{\poly(k,M)}$, and commits to $s$ using the commitment protocol $(\comm,\receiver)$. Let $\bfr$ be the first message of the receiver (sent by $V$) and $\bfc$ be the commitment. 
    \par {\em // We call this commitment, the secret-key commitment.}
    \item $P \rightarrow V$: $P$ sends $\E_{s}(\ket{\psi})$.
        \item $P \leftrightarrow V$: Prover samples a random string $a \xleftarrow{\$} \{0,1\}^{\log(M)}$, and commits to $a$ using the commitment protocol $(\comm, \receiver)$. Let $\bfr'$ be the first message of the receiver and $\bfc'$ be the commitment.\par {\em // We call this commitment, the coin-flipping commitment.}
    \item $V \rightarrow P$: Verifier samples a random string $b \xleftarrow{\$} \{0,1\}^{\log(M)}$.  Verifier sends $b$ to the prover.
    \item $P \rightarrow V$:  Prover sends $r^* \coloneqq a\oplus b$ to the verifier.
    
    \item Verifier computes $\eval \left(C_{r^*}^{\dagger}, \E_{s}(\ket{\Psi})\right) \rightarrow \E_{s}(C_{r^*}^{\dagger} \ket{\psi})$ and measures in the computational basis. Let $y$ denote the measurement outcome. Verifier sends $y$ to the prover.
    \item Prover checks that $y\in \cS_{s,C_{r^*}^{\dagger}}$ and that $\cP(s,C_{r^*}^{\dagger},y)=1$. If not, it aborts. 
    \item Prover and verifier engage in a QZK protocol for $\np$, $\Pi_{\np}$, for the 
    statement $(\bfr,\bfc, \bfr',\bfc', r^*,y,b)$ and the witness $(s,\ell, a,\ell')$. 
\end{itemize}
\\ 
\hline
\end{tabular}
\label{fig:qmazk}
\caption{Bounded-Concurrent QZK for QMA}
\end{figure}

\begin{theorem}
Assuming that $\Pi_{\np}$ satisfies the definition of bounded concurrent QZK for NP, the protocol given in Figure~\ref{fig:qmazk} is a bounded concurrent QZK protocol for $\qma$ with soundness $\frac{1}{\poly}$.
\end{theorem}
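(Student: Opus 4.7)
The plan is to establish completeness, $1/\poly$ soundness, and bounded concurrent quantum zero-knowledge. \textbf{Completeness} follows from the homomorphic-evaluation and authentication properties of the BJSW encoding: on an honest execution the measurement outcome $y$ always lies in $\cS_{s,C_{r^*}^{\dagger}}$, and $g(s,C_{r^*}^{\dagger},y)$ is distributed as a computational-basis measurement of $C_{r^*}^{\dagger}\ket{\psi}$. Averaging over the uniform $r^*\in[M]$ and using $\tr(\rho H)\le 2^{-p}$ for a yes-witness $\ket{\psi}$ yields $\cP(s,C_{r^*}^{\dagger},y)=1$ with probability $\ge 1-2^{-p}/M$, after which completeness of $\Pi_{\np}$ closes the argument. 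For \textbf{soundness}, statistical binding of $(\comm,\receiver)$ fixes unique strings $s^*,a^*$ the moment $\bfc,\bfc'$ are sent; since the verifier samples $b$ uniformly afterwards, $r^*=a^*\oplus b$ is uniform over $[M]$. The no-instance guarantee $\tr(\sigma H)\ge 1/q$ applied to any cheating prover state $\sigma$, together with the BJSW authentication property, implies that $g(s^*,C_{r^*}^{\dagger},y)=0^k$ with probability $\ge 1/(qM)$ over the choice of $r^*$ and the verifier's measurement. Conditioned on this event the NP statement fed to $\Pi_{\np}$ is false, so computational soundness of $\Pi_{\np}$ makes the verifier reject, giving overall soundness error at most $1-1/\poly$.

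The core content is \textbf{bounded concurrent QZK}. I will define the simulator $\Simu$ session-by-session as follows: (a) sample a key $s$ honestly and commit to $0$ in $\bfc$; (b) sample $r^*\xleftarrow{\$}[M]$ and send $B(x,s,r^*)$ in place of $\E_s(\ket{\psi})$; (c) commit to $0$ in $\bfc'$; (d) on receiving $b$, respond with the pre-sampled $r^*$ directly; (e) on receiving $y$, defer to the bounded concurrent simulator of $\Pi_{\np}$ to produce the NP proof. Crucially $\Simu$ never needs to open either commitment and never rewinds any QMA-level messages --- all rewinding is absorbed inside the $Q$-session NP simulator. This is precisely the simplification over BJSW that makes the bounded-concurrent extension go through. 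Indistinguishability from the real interaction is then proved by a hybrid sequence that first replaces all $Q$ sessions of $\Pi_{\np}$ by the bounded concurrent QZK simulator (a single invocation of the property of $\Pi_{\np}$), and then walks through the $Q$ sessions one at a time: switch $\bfc'$ from committing to $a$ to committing to $0$ (quantum concealing); replace the real response $a\oplus b$ by an independently sampled $r^*$ (identically distributed since $a$ was uniform and is now information-theoretically hidden); switch $\bfc$ from committing to $s$ to committing to $0$ (quantum concealing again); and finally replace $\E_s(\ket{\psi})$ by $B(x,s,r^*)$ using the BJSW simulatability-of-authenticated-states property.

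The \textbf{main obstacle} I foresee is the very first hybrid, where the bounded concurrent QZK property of $\Pi_{\np}$ must be invoked against an induced $Q$-session verifier $V^*_{\np}$ derived from the given QMA verifier $V^*$. The reduction has to play all outer QMA messages honestly on its own --- committing to $s$, producing $\E_s(\ket{\psi})$, committing to $a$, answering the coin flip, and computing the measurement $y$ --- while routing only the NP-level messages out to the external $\Pi_{\np}$ game; non-uniformly, it must forward $V^*$'s auxiliary register to the external game so that the entanglement with the distinguisher's $B$-register is preserved. Assuming the bounded concurrent simulator from~\Cref{sec:BCQZK} applies in this non-uniform black-box setting, the step goes through, and the remaining commitment-switching and BJSW-simulatability hybrids are standard against QPT distinguishers with quantum advice, since both quantum concealing and BJSW simulatability are stated in exactly that model.
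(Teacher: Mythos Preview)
Your proposal is correct and follows essentially the same approach as the paper's proof: the paper likewise defers completeness and soundness to BJSW plus the binding-based uniformity of $r^*$, and for bounded concurrent QZK it runs the same hybrid sequence (simulate all $Q$ sessions of $\Pi_{\np}$ in one shot, then session-by-session switch the coin-flipping commitment to $0$, switch the secret-key commitment to $0$, and finally replace $\E_s(\ket{\psi})$ by $B(x,s,r^*)$ via BJSW simulatability). Your extra explicit hybrid replacing $a\oplus b$ by a pre-sampled $r^*$ is a step the paper leaves implicit, but the logic is identical.
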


\begin{remark}
The soundness of the above protocol can be amplified by sequential repetition. In this case, the prover needs as many copies of the witness as the number of repetitions.
\end{remark}

\begin{proof}[Proof Sketch]

Completeness follows from~\cite{BJSW16}. 

\paragraph{Soundness.} Once we argue that $r^*$ produced in the protocol is uniformly distributed, even when the verifier is interacting with the malicious prover, we can then invoke the soundness of ~\cite{BJSW16} to prove the soundness of our protocol. 
\par Suppose the verifier accepts the $\Pi_{\np}$ proof produced during the execution of the above protocol. From the soundness of $\Pi_{\np}$, we have that $r^* = a\oplus b$ where $a$ is the string that the prover initially committed to in $\bfc'$. By the statistical binding security of the commitment, and the fact that $b$ is chosen at random after $a$ has been committed to, we have that $r^*$ is sampled uniformly from $[M]$.


\paragraph{Bounded-Concurrent Quantum Zero-Knowledge.} Suppose $x \in A_{\yes}$. Suppose $V^*$ is a non-uniform malicious QPT  $Q$-session verifier. Then we construct a QPT simulator $\simr$ as follows. \\

\noindent \underline{Description of $\simr$}: it starts with the registers $\bfX_{zk},\bfX_{anc},\bfM,\auxreg$. The register $\bfX_{zk}$ is used by the simulator of the bounded concurrent QZK protocol, $\bfX_{anc}$ is an ancillary register, $\bfM$ is used to store the messages exchanged between the simulator and the verifier and finally, the register $\auxreg$ is used for storing the private state of the verifier. Initialize the registers $\bfX_{zk},\bfM$ with all zeroes. Initialize the register $\bfX_{anc}$ with $(\bigotimes_{j=1}^Q\ket{s_j}\bra{s_j}) \otimes (\bigotimes_{j=1}^Q \ket{r^*_j}\bra{r^*_j}) \otimes (\bigotimes_{j=1}^Q \rho_j) \otimes \ket{0^{\otimes \poly}}\bra{0^{\otimes \poly}}$,  where $s_i,r_i^*$ are generated uniformly at random and $\rho_j \leftarrow B(x,s_j,r^*_j)$ is defined in bullet 4 under BJSW encoding. 
\par $\simr$ applies the following unitary for $Q$ times on the above registers. This unitary is defined as follows: it parses the message $((1,\msg_1),\ldots,(Q,\msg_Q))$ in the register $\bfM$. For every round of conversation, it does the following: if it is $V^*$'s turn to talk, it applies $V^*$ on $\auxreg$ and $\bfM$. Otherwise,  
\begin{itemize}
    \item Let $S_1$ be the set of indices such that for every $i \in S_1$, $\msg_i$ is a message in the protocol $\Pi_{\np}$. Finally, let $S_2 = [Q] \backslash S_1$.  
    \item It copies $((1,\msg_1),\ldots,(Q,\msg_Q))$ into $\bfX_{zk}$ (using many CNOT operations) and for every $i \notin S_1$, replaces $\msg_i$ with N/A.  We note that $\msg_i$ is a quantum state (for instance, it could be a superposition over different messages). 
    
        \item For every $i \in S_2$, if  $\msg_i$ is the first prover's message of the $i^{th}$ session, then set $\msg'_i$ to be $\ket{\bfc_i}\bra{\bfc_i} \otimes \rho_i$, where $\bfc_i$ is the secret-key commitment of $0$.  If $\msg_i$ corresponds to the coin-flipping commitment, then set $\msg_i'$ to be $\ket{\bfc_i'}\bra{\bfc_i'}$ where $\bfc_i'$ is a commitment to $0$.  
    \item It applies the simulator of $\Pi_{\np}$ on $\bfX_{zk}$ to obtain $ ((1,\msg'_{1,zk}),\ldots(Q,\msg'_{Q,zk}))$. The $i^{th}$ session simulator of $\Pi_{\np}$ takes as input $(\bfr_i,\bfc_i,\bfr_i',\bfc_i',r^*_i,y_i, b_i)$, where $r^*_i$ was generated in the beginning and $\bfr_i,\bfc_i,\bfr_i',\bfc_i',y_i, b_i$ are generated as specified in the protocol.

    \item Determine $((1,\msg'_1),\ldots,(Q,\msg'_Q))$ as follows. Set $\msg'_i = \msg_{i,\zk}$, if $i \in S_1$. Output of this round is $((1,\msg'_1),\ldots,(Q,\msg'_Q))$. 
\end{itemize}

\noindent We claim that the output distribution of $\simr$ (ideal world) is computationally indistinguishable from the output distribution of $V^*$ when interacting with the prover (real world). \\

\noindent \underline{$\hybrid_1$}: This corresponds to the real world. \\

\noindent \underline{$\hybrid_2$}: This is the same as $\hybrid_1$ except that the verifier $V^*$ is run in superposition and the transcript is measured at the end. 
\par The output distributions of $\hybrid_1$ and $\hybrid_2$ are identical. \\

\noindent \underline{$\hybrid_{3}$}: Simulate the zero-knowledge protocol $\Pi_{\np}$ simultaneously for all the sessions. Other than this, the rest of the hybrid is the same as before.  
\par The output distributions of $\hybrid_2$ and $\hybrid_3$ are computationally indistinguishable from the bounded concurrent QZK property of $\Pi_{\np}$. \\

\noindent \underline{$\hybrid_{4.i}$ for $i \in [Q]$}: For every $j \leq i$, the coin-flipping commitment in the $j^{th}$ session is a commitment to 0 instead of $a_i$. For all $j>i$, the commitment is computed as in the previous hybrid.
\par The output distributions of $\hybrid_{4.i-1}$ (or $\hybrid_3$ if $i=1$) and $\hybrid_{4.i}$ are computationally indistinguishable from the quantum concealing property of $(\comm,\receiver)$. \\

\noindent \underline{$\hybrid_{5.i}$ for $i \in [Q]$}: For every $j \leq i$, the secret-key commitment in the $j^{th}$ session is a commitment to 0. For all $j > i$, the commitment is computed as in the previous hybrid. 
\par The output distributions of $\hybrid_{5.i-1}$  (or $\hybrid_{4.Q}$ if $i=1$) and $\hybrid_{5.i}$ are computationally indistinguishable from the quantum concealing property of $(\comm,\receiver)$. \\

\noindent \underline{$\hybrid_{6.i}$ for $i \in [Q]$}: For every $j \leq i$, the encoding of the state is computed instead using $B(x,s_i,r^*_i)$, where $s_i,r^*_i$ is generated uniformly at random.
\par The output distributions of $\hybrid_{6.i-1}$ and $\hybrid_{6.i}$ are statistically indistinguishable from simulatability of authenticated states property of BJSW encoding (bullet 4). This follows from the following fact: conditioned on the prover not aborting, the output distributions of the two worlds are identical. Moreover, the property of simulatability of authenticated states shows that the probability of the prover aborting in the previous hybrid is negligibly close to the probability of the prover aborting in this hybrid. \\

\noindent $\underline{\hybrid_7}$: This corresponds to the ideal world. 
\par The output distributions of $\hybrid_{6.Q}$ and $\hybrid_7$ are identical.

\end{proof}

\paragraph{Proof of Quantum Knowledge with better witness quality.} We can define an anologous notion of proof of knowledge in the context of interactive protocols for QMA. This notion is called proof of {\em quantum} knowledge. See~\cite{CVZ20} for a definition of this notion. Coladangelo, Vidick and Zhang~\cite{CVZ20} show how to achieve quantum proof of quantum knowledge generically using quantum proof of classical knowledge. Their protocol builds upon~\cite{BJSW16} to achieve their goal. We can adopt their idea to achieve proof of quantum knowledge property for a bounded concurrent QZK for QMA system. In Figure~\ref{fig:qmazk}, include a quantum proof of classical knowledge system for NP (for instance, the one we constructed in Section~\ref{sec:qzkpok}) just after the prover sends encoding of the witness state $\ket{\Psi}$, encoded using the key $s$. Using the quantum proof of classical knowledge system, the prover convinces the verifier of its knowledge of the $s$. The rest of the protocol is the same as Figure~\ref{fig:qmazk}. To see why this satisfies proof of quantum knowledge, note that an extractor can extract $s$ with probability negligibly close to the acceptance probability and using $s$, can recover the witness $\ket{\Psi}$. 
\par For the first time, we get proof of quantum knowledge (even in the standalone setting) with $(1 - \negl)$-quality if the acceptance probability is negligibly close to 1, where the quality denotes the closeness to the witness state. Previous proof of quantum knowledge~\cite{BG19,CVZ20} achieved only $1 - \frac{1}{\poly}$ qualtiy; this is because these works use Unruh's quantum proof of classical knowledge technique~\cite{Unruh12} and the extraction probability in Unruh is not negligibly close to the acceptance probability.
}

\submversion{
\section{Bounded Concurrent QZK for NP}\label{sec:bcqzknp}
\noindent We first give an overview of bounded concurrent QZK for NP.

\noindent We present the proofs of completeness, soundness and quantum zero-knowledge in the full version. 

\section{Quantum Proofs of Knowledge}
\label{sec:qpok}
\noindent We first present a construction of standalone quantum proof of knowledge for NP. We extend this construction to the bounded concurrent setting in~\Cref{sec:bcqzk}.

\ \\
We present the proofs of completeness, quantum proof of knowledge and quantum zero-knowledge in the full version.

\ \\
\noindent We present the formal details in Section~\ref{sec:bcqzk}.

}


\section*{Acknowledgements}
We thank Abhishek Jain for many enlightening discussions, Zhengzhong Jin for patiently answering questions regarding~\cite{GJJM20}, Dakshita Khurana for suggestions on constructing oblivious transfer, Ran Canetti for giving an overview of existing classical concurrent ZK techniques, Aram Harrow and Takashi Yamakawa for discussions on the assumption of cloning security (included in a previous version of this paper) and Andrea Coladangelo for clarifications regarding~\cite{CVZ20}. RL was funded by NSF grant CCF-1729369. MIT-CTP/5289.

\bibliographystyle{alpha}
\bibliography{crypto}

\newcommand{\etalchar}[1]{$^{#1}$}
\begin{thebibliography}{DGH{\etalchar{+}}20}

\bibitem[ABG{\etalchar{+}}20]{ABGKM20}
Amit Agarwal, James Bartusek, Vipul Goyal, Dakshita Khurana, and Giulio
  Malavolta.
\newblock Post-quantum multi-party computation in constant rounds.
\newblock {\em arXiv preprint arXiv:2005.12904}, 2020.

\bibitem[ALP20]{ALP20a}
Prabhanjan Ananth and Rolando~L La~Placa.
\newblock Secure quantum extraction protocols.
\newblock In {\em TCC}, 2020.

\bibitem[ARU14]{ARU14}
Andris Ambainis, Ansis Rosmanis, and Dominique Unruh.
\newblock Quantum attacks on classical proof systems: The hardness of quantum
  rewinding.
\newblock In {\em 2014 IEEE 55th Annual Symposium on Foundations of Computer
  Science}, pages 474--483. IEEE, 2014.

\bibitem[BD18]{BD18}
Zvika Brakerski and Nico D{\"o}ttling.
\newblock Two-message statistically sender-private ot from lwe.
\newblock In {\em Theory of Cryptography Conference}, pages 370--390. Springer,
  2018.

\bibitem[BG19]{BG19}
Anne Broadbent and Alex~B Grilo.
\newblock Zero-knowledge for qma from locally simulatable proofs.
\newblock {\em arXiv preprint arXiv:1911.07782}, 2019.

\bibitem[BJSW16]{BJSW16}
Anne Broadbent, Zhengfeng Ji, Fang Song, and John Watrous.
\newblock Zero-knowledge proof systems for qma.
\newblock In {\em 2016 IEEE 57th Annual Symposium on Foundations of Computer
  Science (FOCS)}, pages 31--40. IEEE, 2016.

\bibitem[Blu86]{Blu86}
Manuel Blum.
\newblock How to prove a theorem so no one else can claim it.
\newblock In {\em Proceedings of the International Congress of Mathematicians},
  volume~1, page~2. Citeseer, 1986.

\bibitem[BS05]{BS05}
Boaz Barak and Amit Sahai.
\newblock How to play almost any mental game over the net-concurrent
  composition via super-polynomial simulation.
\newblock In {\em 46th Annual IEEE Symposium on Foundations of Computer Science
  (FOCS'05)}, pages 543--552. IEEE, 2005.

\bibitem[BS20]{BS20}
Nir Bitansky and Omri Shmueli.
\newblock Post-quantum zero knowledge in constant rounds.
\newblock In {\em STOC}, 2020.

\bibitem[Can01]{Can01}
Ran Canetti.
\newblock Universally composable security: A new paradigm for cryptographic
  protocols.
\newblock In {\em Proceedings 42nd IEEE Symposium on Foundations of Computer
  Science}, pages 136--145. IEEE, 2001.

\bibitem[CF01]{CF01}
Ran Canetti and Marc Fischlin.
\newblock Universally composable commitments.
\newblock In {\em Annual International Cryptology Conference}, pages 19--40.
  Springer, 2001.

\bibitem[CLOS02]{Can02}
Ran Canetti, Yehuda Lindell, Rafail Ostrovsky, and Amit Sahai.
\newblock Universally composable two-party and multi-party secure computation.
\newblock In {\em Proceedings of the thiry-fourth annual ACM symposium on
  Theory of computing}, pages 494--503, 2002.

\bibitem[CLP15]{CLP15}
Kai-Min Chung, Huijia Lin, and Rafael Pass.
\newblock Constant-round concurrent zero-knowledge from indistinguishability
  obfuscation.
\newblock In {\em Annual Cryptology Conference}, pages 287--307. Springer,
  2015.

\bibitem[CVZ20]{CVZ20}
Andrea Coladangelo, Thomas Vidick, and Tina Zhang.
\newblock Non-interactive zero-knowledge arguments for qma, with preprocessing.
\newblock In {\em Annual International Cryptology Conference}, pages 799--828.
  Springer, 2020.

\bibitem[DCO99]{DO99}
Giovanni Di~Crescenzo and Rafail Ostrovsky.
\newblock On concurrent zero-knowledge with pre-processing.
\newblock In {\em Annual International Cryptology Conference}, pages 485--502.
  Springer, 1999.

\bibitem[DGH{\etalchar{+}}20]{DGHMW20}
Nico D{\"o}ttling, Sanjam Garg, Mohammad Hajiabadi, Daniel Masny, and Daniel
  Wichs.
\newblock Two-round oblivious transfer from cdh or lpn.
\newblock In {\em Annual International Conference on the Theory and
  Applications of Cryptographic Techniques}, pages 768--797. Springer, 2020.

\bibitem[DNS04]{DNS04}
Cynthia Dwork, Moni Naor, and Amit Sahai.
\newblock Concurrent zero-knowledge.
\newblock {\em Journal of the ACM (JACM)}, 51(6):851--898, 2004.

\bibitem[DS98]{DS98}
Cynthia Dwork and Amit Sahai.
\newblock Concurrent zero-knowledge: Reducing the need for timing constraints.
\newblock In {\em Annual International Cryptology Conference}, pages 442--457.
  Springer, 1998.

\bibitem[FKP19]{FKP19}
Cody Freitag, Ilan Komargodski, and Rafael Pass.
\newblock Non-uniformly sound certificates with applications to concurrent
  zero-knowledge.
\newblock In {\em Annual International Cryptology Conference}, pages 98--127.
  Springer, 2019.

\bibitem[GJJM20]{GJJM20}
Vipul Goyal, Abhishek Jain, Zhengzhong Jin, and Giulio Malavolta.
\newblock Statistical zaps and new oblivious transfer protocols.
\newblock In {\em Annual International Conference on the Theory and
  Applications of Cryptographic Techniques}, pages 668--699. Springer, 2020.

\bibitem[GJO{\etalchar{+}}13]{GJRV13}
Vipul Goyal, Abhishek Jain, Rafail Ostrovsky, Silas Richelson, and Ivan
  Visconti.
\newblock Concurrent zero knowledge in the bounded player model.
\newblock In {\em Theory of Cryptography Conference}, pages 60--79. Springer,
  2013.

\bibitem[GK96]{GK96a}
Oded Goldreich and Ariel Kahan.
\newblock How to construct constant-round zero-knowledge proof systems for
  {NP}.
\newblock {\em J. Cryptology}, 9(3):167--190, 1996.

\bibitem[GMR85]{GMR}
S.~Goldwasser, S.~Micali, and C.~Rackoff.
\newblock The knowledge complexity of interactive proof-systems.
\newblock In {\em STOC}, pages 291--304, 1985.

\bibitem[HSS11]{HSS11}
Sean Hallgren, Adam Smith, and Fang Song.
\newblock Classical cryptographic protocols in a quantum world.
\newblock In {\em Annual Cryptology Conference}, pages 411--428. Springer,
  2011.

\bibitem[JKMR06]{JKMR06}
Rahul Jain, Alexandra Kolla, Gatis Midrijanis, and Ben~W Reichardt.
\newblock On parallel composition of zero-knowledge proofs with black-box
  quantum simulators.
\newblock {\em arXiv preprint quant-ph/0607211}, 2006.

\bibitem[KKS18]{KKS18}
Yael~Tauman Kalai, Dakshita Khurana, and Amit Sahai.
\newblock Statistical witness indistinguishability (and more) in two messages.
\newblock In {\em Annual International Conference on the Theory and
  Applications of Cryptographic Techniques}, pages 34--65. Springer, 2018.

\bibitem[KSVV02]{KSV02}
Alexei~Yu Kitaev, Alexander Shen, Mikhail~N Vyalyi, and Mikhail~N Vyalyi.
\newblock {\em Classical and quantum computation}.
\newblock Number~47. American Mathematical Soc., 2002.

\bibitem[Lin03]{Lin03}
Yehuda Lindell.
\newblock Bounded-concurrent secure two-party computation without setup
  assumptions.
\newblock In {\em Proceedings of the thirty-fifth annual ACM symposium on
  Theory of computing}, pages 683--692, 2003.

\bibitem[Nao91]{Naor91}
Moni Naor.
\newblock Bit commitment using pseudorandomness.
\newblock {\em Journal of cryptology}, 4(2):151--158, 1991.

\bibitem[NC02]{nielsen2002quantum}
Michael~A Nielsen and Isaac Chuang.
\newblock Quantum computation and quantum information, 2002.

\bibitem[Pas04]{Pass04}
Rafael Pass.
\newblock Bounded-concurrent secure multi-party computation with a dishonest
  majority.
\newblock In {\em STOC}, pages 232--241, 2004.

\bibitem[PR03]{PR03}
Rafael Pass and Alon Rosen.
\newblock Bounded-concurrent secure two-party computation in a constant number
  of rounds.
\newblock In {\em 44th Annual IEEE Symposium on Foundations of Computer
  Science, 2003. Proceedings.}, pages 404--413. IEEE, 2003.

\bibitem[PRS02]{PRS02}
Manoj Prabhakaran, Alon Rosen, and Amit Sahai.
\newblock Concurrent zero knowledge with logarithmic round-complexity.
\newblock In {\em FOCS}, pages 366--375. IEEE, 2002.

\bibitem[PS19]{PS19}
Chris Peikert and Sina Shiehian.
\newblock Noninteractive zero knowledge for np from (plain) learning with
  errors.
\newblock In {\em Annual International Cryptology Conference}, pages 89--114.
  Springer, 2019.

\bibitem[PTV14]{PTV14}
Rafael Pass, Wei-Lung~Dustin Tseng, and Muthuramakrishnan Venkitasubramaniam.
\newblock Concurrent zero knowledge, revisited.
\newblock {\em Journal of cryptology}, 27(1):45--66, 2014.

\bibitem[PTW09]{PTW09}
Rafael Pass, Wei-Lung~Dustin Tseng, and Douglas Wikstr{\"o}m.
\newblock On the composition of public-coin zero-knowledge protocols.
\newblock In {\em Annual International Cryptology Conference}, pages 160--176.
  Springer, 2009.

\bibitem[PV08]{PV08}
Rafael Pass and Muthuramakrishnan Venkitasubramaniam.
\newblock On constant-round concurrent zero-knowledge.
\newblock In {\em Theory of Cryptography Conference}, pages 553--570. Springer,
  2008.

\bibitem[Rab05]{Rabin05}
Michael~O Rabin.
\newblock How to exchange secrets with oblivious transfer.
\newblock {\em IACR Cryptol. ePrint Arch.}, 2005(187), 2005.

\bibitem[RK99]{RK99}
Ransom Richardson and Joe Kilian.
\newblock On the concurrent composition of zero-knowledge proofs.
\newblock In {\em International Conference on the Theory and Applications of
  Cryptographic Techniques}, pages 415--431. Springer, 1999.

\bibitem[Unr10]{Unruh10}
Dominique Unruh.
\newblock Universally composable quantum multi-party computation.
\newblock In {\em Annual International Conference on the Theory and
  Applications of Cryptographic Techniques}, pages 486--505. Springer, 2010.

\bibitem[Unr12]{Unruh12}
Dominique Unruh.
\newblock Quantum proofs of knowledge.
\newblock In {\em Annual International Conference on the Theory and
  Applications of Cryptographic Techniques}, pages 135--152. Springer, 2012.

\bibitem[VZ20]{VZ20}
Thomas Vidick and Tina Zhang.
\newblock Classical zero-knowledge arguments for quantum computations.
\newblock {\em Quantum}, 4:266, 2020.

\bibitem[Wat09]{Wat09}
John Watrous.
\newblock Zero-knowledge against quantum attacks.
\newblock {\em SIAM Journal on Computing}, 39(1):25--58, 2009.

\end{thebibliography}

\submversion{



}


\end{document}